\newtheorem{theorem}{Theorem}
\newtheorem{proposition}[theorem]{Proposition}
\newtheorem{lemma}[theorem]{Lemma}
\newtheorem{cor}[theorem]{Corollary}
\DeclareMathOperator*{\argmax}{argmax}
\def\R#1{(\ref{#1})}
\DeclareMathOperator*{\sargmax}{sargmax}
\title{Sparse principal component analysis via axis-aligned random projections} 
\author{Milana Gataric, Tengyao Wang and Richard J.~Samworth \\ \medskip {\small Statistical Laboratory, University of Cambridge} \\ \medskip{\small $\{$m.gataric,t.wang,r.samworth$\}$@statslab.cam.ac.uk}} 
\begin{document}

\maketitle

\begin{abstract}
We introduce a new method for sparse principal component analysis, based on the aggregation of eigenvector information from carefully-selected axis-aligned random projections of the sample covariance matrix.  Unlike most alternative approaches, our algorithm is non-iterative, so is not vulnerable to a bad choice of initialisation. We provide theoretical guarantees under which our principal subspace estimator can attain the minimax optimal rate of convergence in polynomial time. In addition, our theory provides a more refined understanding of the statistical and computational trade-off in the problem of sparse principal component estimation, revealing a subtle interplay between the effective sample size and the number of random projections that are required to achieve the minimax optimal rate.  Numerical studies provide further insight into the procedure and confirm its highly competitive finite-sample performance. 
\end{abstract}

\section{Introduction}

Principal component analysis (PCA) is one of the most widely-used techniques for dimensionality reduction in Statistics, Image Processing and many other fields.  The aim is to project the data along directions that explain the greatest proportion of the variance in the population.  In the simplest setting where we seek a single, univariate projection of our data, we may estimate this optimal direction by computing the leading eigenvector of the sample covariance matrix.   

Despite its successes and enormous popularity, it has been well-known for a decade or more that PCA breaks down as soon as the dimensionality $p$ of the data is of the same order as the sample size $n$.  More precisely, suppose that 
$X_1,\ldots,X_n \stackrel{\mathrm{iid}}{\sim} N_p(0,\Sigma)$, with $p \geq 2$, are observations from a Gaussian distribution with a spiked covariance matrix $\Sigma = I_p + v_1v_1^\top$ whose leading eigenvector is $v_1 \in \mathcal{S}^{p-1} := \{v \in \mathbb{R}^p: \|v\|_2 = 1\}$, and let $\hat{v}_1$ denote the leading unit-length eigenvector of the sample covariance matrix $\hat\Sigma := n^{-1}\sum_{i=1}^n X_iX_i^{\top}$.  Then \citet{Johnstone2009} and \citet{Paul2007} showed that $\hat{v}_1$ is a consistent estimator of $v_1$, i.e.\ $|\hat{v}_1^\top v_1| \stackrel{p}{\rightarrow} 1$, if and only if $p = p_n$ satisfies $p/n \rightarrow 0$ as $n \rightarrow \infty$.  It is also worth noting that the principal component $v_1$ may be a linear combination of all elements of the canonical basis in $\mathbb{R}^p$, which can often make it difficult to interpret the estimated projected directions \citep{Jolliffe2003}.

To remedy this situation, and to provide additional interpretability to the principal components in high-dimensional settings, \citet{Jolliffe2003} and \citet{Zou2006} proposed Sparse Principal Component Analysis (SPCA).  Here it is assumed that the leading population eigenvectors belong to the $k$-sparse unit ball
\[
\mathcal{B}^{p-1}_0(k) := \biggl\{ v = (v^{(1)},\ldots,v^{(p)})^\top \in \mathcal{S}^{p-1} : \sum_{j=1}^p \mathds{1}_{\{v^{(j)} \neq 0\}}  \leq k\biggr\}
\]
for some $k \in \{1,\ldots,p\}$.  In addition to the easier interpretability, a great deal of research effort has shown that such an assumption facilitates improved estimation performance \citep[e.g.][]{Johnstone2009,Paul2012,Vu2013,Cai2013,Ma2013,Wang2016}.  To give a flavour of these results, let $\mathcal{V}_n$ denote the set of all estimators of $v_1$, i.e.\ the class of Borel measurable functions from $\mathbb{R}^{n \times p}$ to $\mathcal{S}^{p-1}$.  \citet{Vu2013} introduced a class~$\mathcal{Q}$ of sub-Gaussian distributions whose first principal component $v_1$ belongs to $\mathcal{B}^{p-1}_0(k)$ and showed that
\begin{equation}
\label{Eq:Minimax}
\inf_{\tilde{v}_1 \in \mathcal{V}_n} \sup_{Q \in \mathcal{Q}} \mathbb{E}_Q \{1 - (\tilde{v}_1^\top v_1)^2\} \asymp \frac{k\log p}{n},
\end{equation}
where $a_n \asymp b_n$ means $0 < \liminf_{n \rightarrow \infty} |a_n/b_n| \leq \limsup_{n \rightarrow \infty} |a_n/b_n| < \infty$. Thus, consistent estimation is possible in this framework provided only that $k = k_n$ and $p = p_n$ satisfy $(k \log p)/n \rightarrow 0$.  \citet{Vu2013} showed further that this estimation rate is achieved by the natural estimator
\begin{equation}
\label{opt_est}
\hat{v}_1 \in \argmax_{v\in\mathcal{B}^{p-1}_0(k)} v^{\top}  \hat \Sigma v.
\end{equation}

However, results such as~\eqref{Eq:Minimax} do not complete the story of SPCA.  Indeed, computing the estimator defined in \R{opt_est} turns out to be an NP-hard problem \citep[e.g.][]{TillmanPfetsch2014}: the naive approach would require searching through all $\binom{p}{k}$ of the $k\times k$ symmetric submatrices of $\hat \Sigma$, which takes exponential time in $k$.  Therefore, in parallel to the theoretical developments described above, numerous alternative algorithms for SPCA have been proposed in recent years.  For instance, several papers have introduced techniques based on solving the non-convex optimisation problem in~\eqref{opt_est} by invoking an $\ell_1$-penalty \citep[e.g.][]{Jolliffe2003,Zou2006,Shen2008,Witten2009}. Typically, these methods are fast, but lack theoretical performance guarantees. On the other hand, \cite{d'Aspremont2007} propose to compute~\eqref{opt_est} via semidefinite relaxation. This approach and its variants were analysed by \cite{Amini2009}, \citet{Vu2013a}, \citet{Wang2014} and \cite{Wang2016}, and have been proved to achieve the minimax rate of convergence under certain assumptions on the underlying distribution and asymptotic regime, but the algorithm is slow compared to other approaches.  In a separate, recent development, it is now understood that, conditional on a Planted Clique Hypothesis from theoretical computer science, there is an asymptotic regime in which no randomised polynomial time algorithm can attain the minimax optimal rate \citep{Wang2016}.  Various fast, iterative algorithms were introduced by \cite{Johnstone2009}, \cite{Paul2012}, and \cite{Ma2013}; the last of these was shown to attain the minimax rate under a Gaussian spiked covariance model.  We also mention the computationally-efficient combinatorial approaches proposed by \cite{Moghaddam2006} and \cite{d'Aspremont2008} that aim to find solutions to the optimisation problem in \R{opt_est} using greedy methods. 

A common feature to all of the computationally efficient algorithms mentioned above is that they are iterative, in the sense that, starting from an initial guess $\hat{v}^{[0]} \in \mathbb{R}^p$, they refine their guess by producing a finite sequence of iterates $\hat{v}^{[1]},\ldots,\hat{v}^{[T]} \in \mathbb{R}^p$, with the estimator defined to be the final iterate.  A major drawback of such iterative methods is that a bad initialisation may yield a disastrous final estimate.  To illustrate this point, we ran a simple simulation in which the underlying distribution is $N_{400}(0,\Sigma)$, with
\begin{equation}\label{eq:sigma_intro}
 \Sigma = \begin{pmatrix}
           10J_{10} & \\
                  &  8.9 J_{390} + I_{390}
          \end{pmatrix} + 0.01I_{400},
\end{equation}
where $J_q := \mathbf{1}_q\mathbf{1}_q^\top/q \in \mathbb{R}^{q\times q}$ denotes the matrix with each entry equal to $1/q$.  In this example, $v_1 = (\mathbf{1}_{10}^\top,\mathbf{0}_{390}^\top)^\top/\sqrt{10}$, so $k=10$.  Figure~\ref{fig:iterative_algs} shows, for several different SPCA algorithms, different sample sizes and different initialisation methods, the average values of the loss function
\begin{equation}
\label{eq:loss}
L(u,v) := \sin \measuredangle (u,v) = \{1 - (u^\top v)^2\}^{1/2},
\end{equation}
over 100 repetitions of the experiment.  In the top panels of Figure~\ref{fig:iterative_algs}, the initialisation methods used were the default recommendations of the respective authors, namely diagonal thresholding \citep{d'Aspremont2008,Ma2013}, and vanilla PCA \citep{Zou2006,Shen2008,Witten2009}.  We note that the consistency of diagonal thresholding relies on a spiked covariance structure, which is violated in this example.  In the middle panels of Figure~\ref{fig:iterative_algs}, we ran the same algorithms with $10$ independent initialising vectors chosen uniformly at random on $\mathcal{S}^{p-1}$, and selected the solution $\hat{v}$ from these $10$ that maximises $v \mapsto v^\top\hat{\Sigma}v$.  The main observation is that each of the previously proposed algorithms mentioned above produces very poor estimates, with some almost orthogonal to the true principal component!  The reason for this is that all of the default initialisation procedures are unsuccessful in finding a good starting point. For some methods, this problem may be fixed by increasing the number of random initialisations, but it may take an enormous number of such random restarts (and consequently a very long time) to achieve this. We demonstrate this in the bottom panels of Figure~\ref{fig:iterative_algs}, where for $n=350$ (left) and $n=2000$ (right), we plot the logarithm of the average loss as time increases through the number of random restarts.  As an alternative method, in the top and middle panels of Figure~\ref{fig:iterative_algs}, we also present the corresponding results for \citet{Wang2016}'s variant of the semi-definite programming (SDP) algorithm introduced by \citet{d'Aspremont2007}.  This method is guaranteed to converge from any initialisation, so does not suffer the same poor performance as mentioned above.  However, SDP took even longer to reach algorithmic convergence than any of the alternative approaches, so that in the setting of the bottom panels of Figure~\ref{fig:iterative_algs}, it finally reached a logarithmic average loss of around $-4$ (left panel) and $-5.9$ (right panel) after an average time of $e^8 \approx 3000$ seconds (left panel) and $e^{9.25} \approx 10000$ seconds (right panel); this slow running time means it does not appear in the bottom panels of the figure.  We refer to Section~\ref{ss:comparison} for further comparisons using different examples.  

\begin{figure}[htbp]
    \centering
  \begin{tabular}{c}
  \includegraphics[scale=0.5]{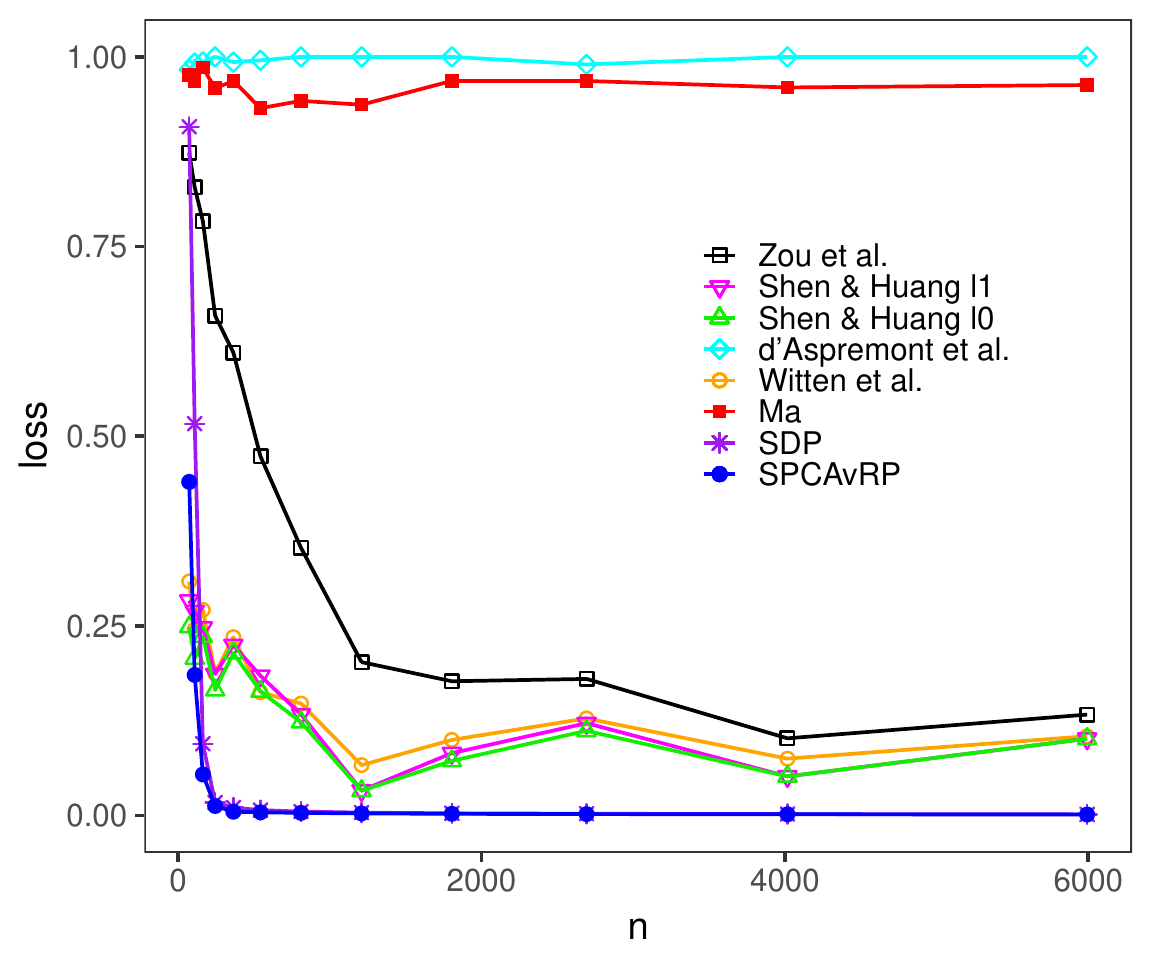} \includegraphics[scale=0.5]{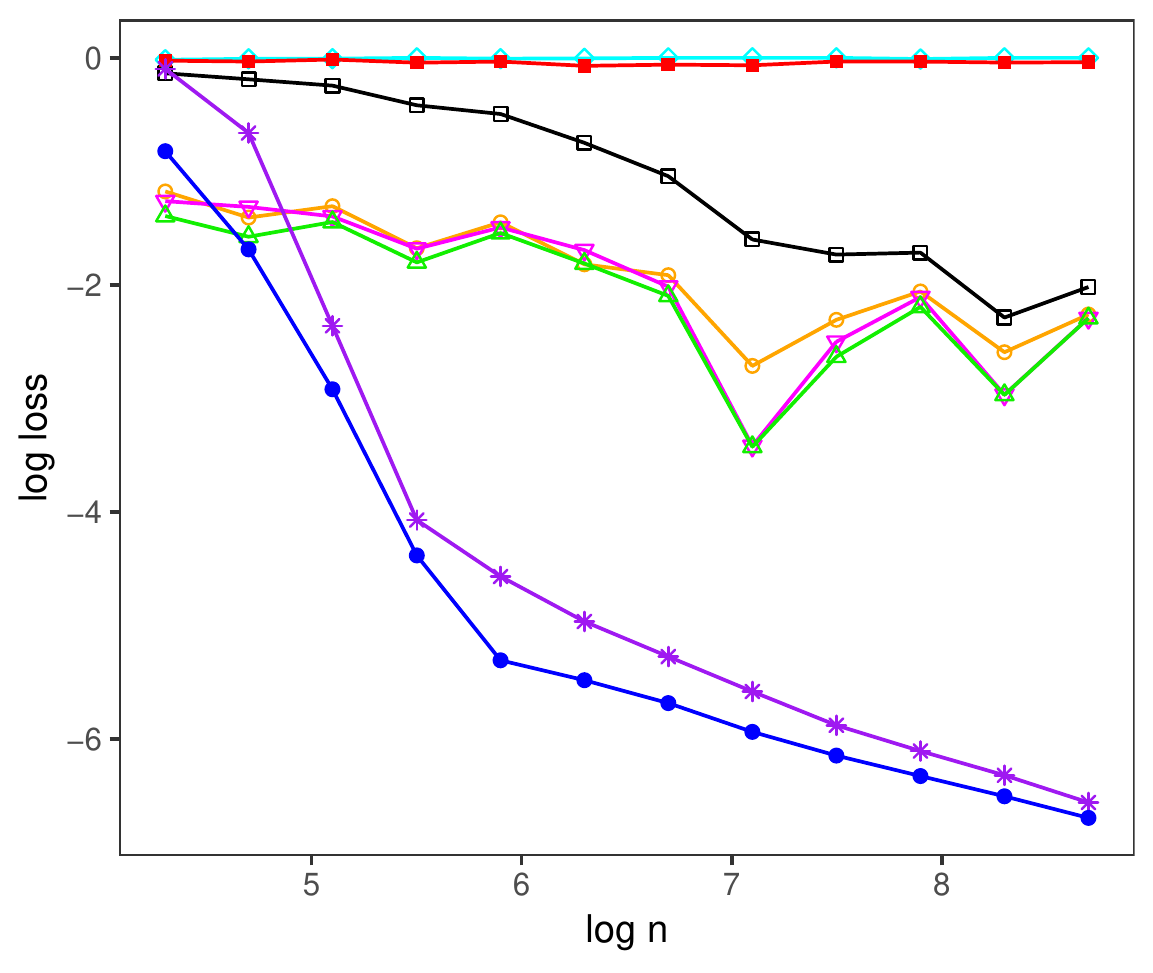} \\
  \includegraphics[scale=0.5]{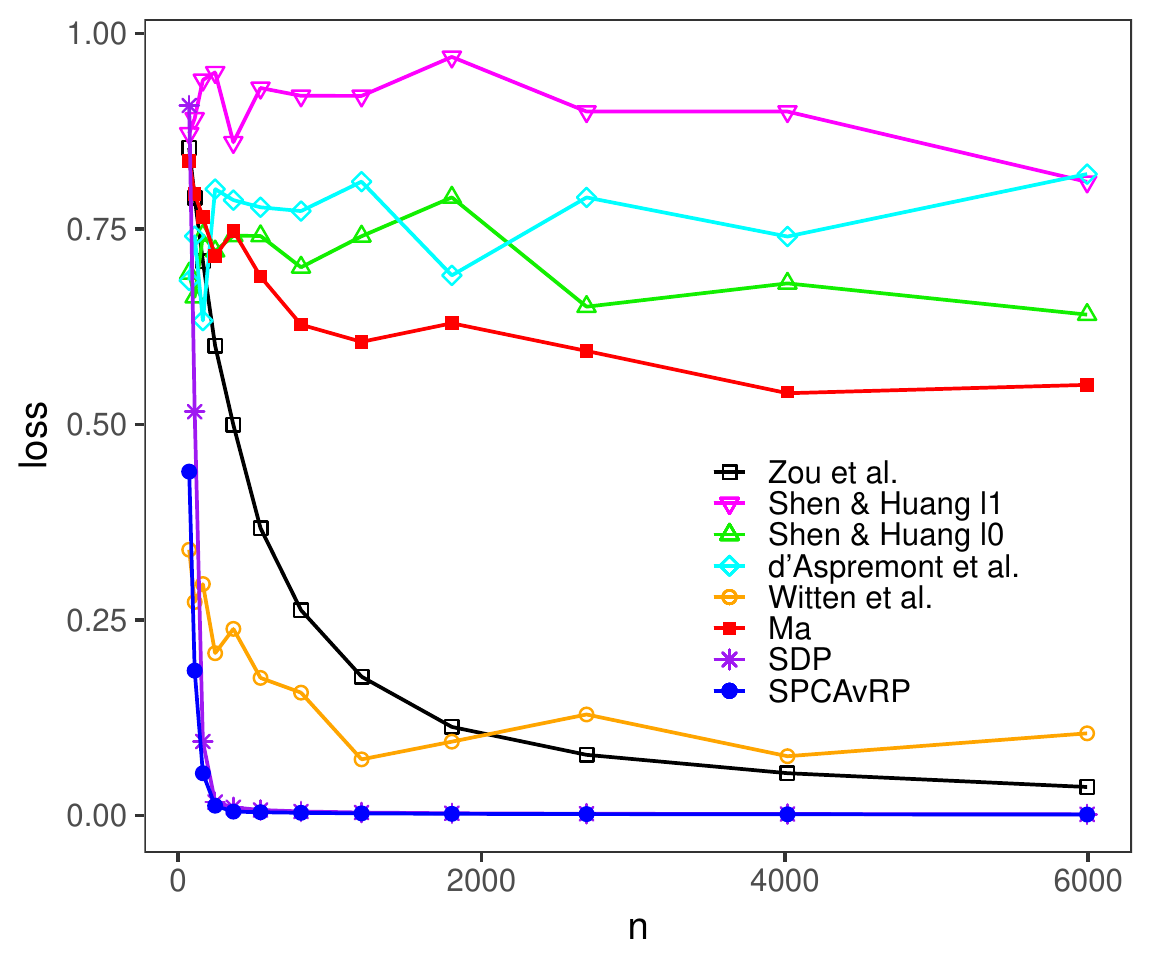} \includegraphics[scale=0.5]{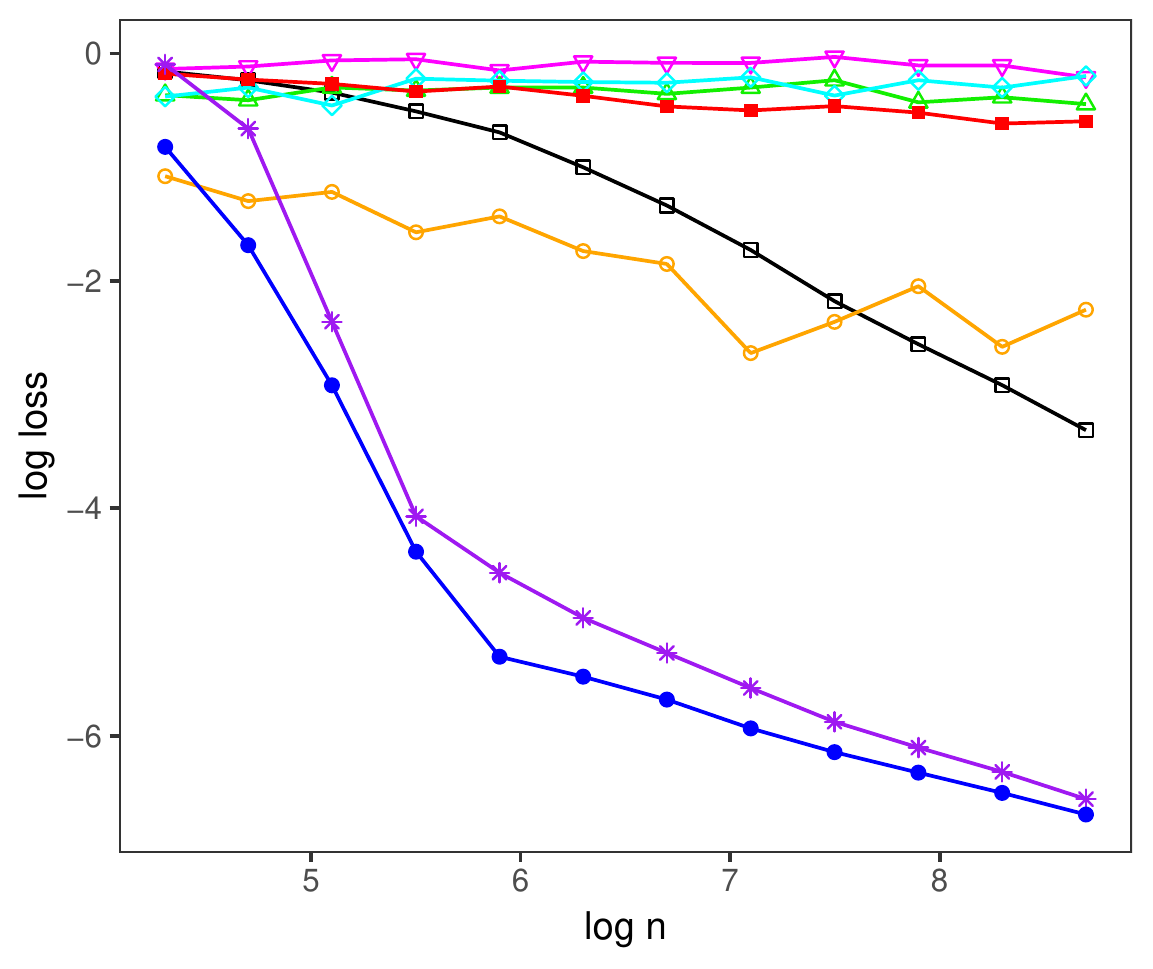} \\
  \includegraphics[scale=0.5]{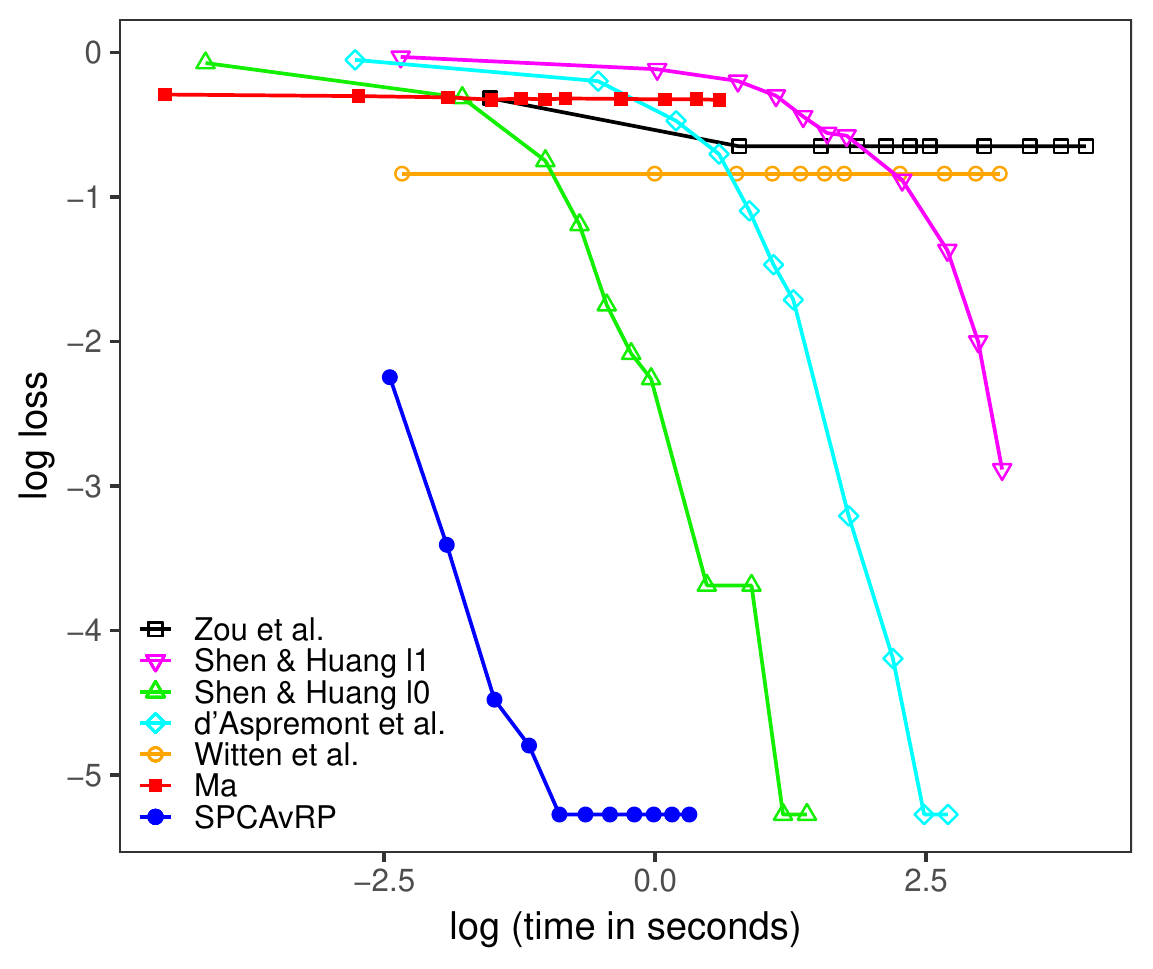}  \includegraphics[scale=0.5]{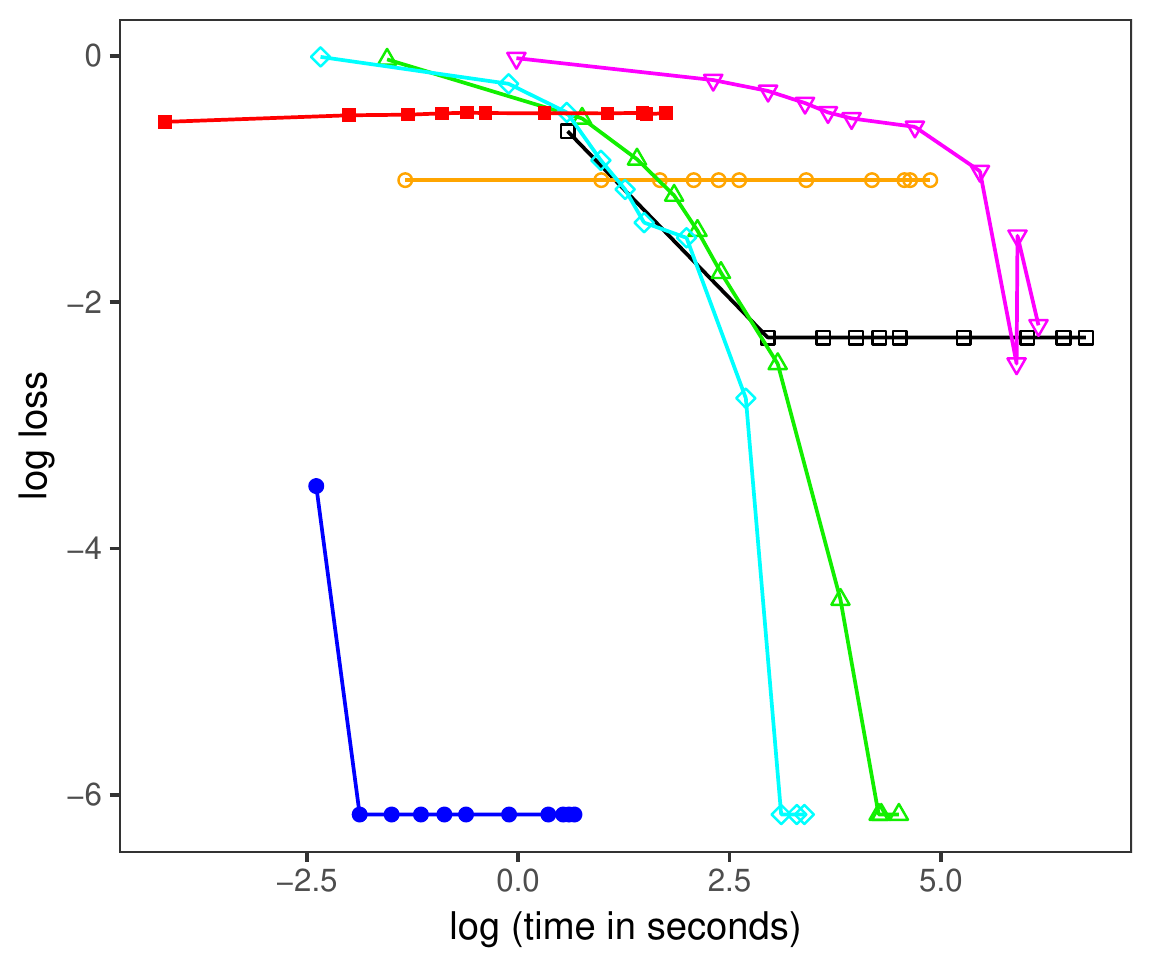}
  \end{tabular}
  \caption{\textbf{Comparison of different approaches using covariance model~\eqref{eq:sigma_intro}.} Top and middle panels: Average loss~\eqref{eq:loss} for different sample sizes $n$, on normal scale (left) and log-log scale (right); top: default initialisation; middle: best of $10$ random initialisations. Bottom panels: Average loss against time in seconds on the log-log scale when $n=350$ (left) and $n=2000$ (right); we vary the number of random projections ($A\in(50,200)$ and $B=\lceil A/2 \rceil$) for SPCAvRP and the number of random initialisations (from $1$ to $250$) for other iterative competing methods.  Blue: the SPCAvRP algorithm proposed in this paper; purple: SDP; red: \cite{Ma2013}; orange:  \cite{Witten2009}; cyan: \cite{d'Aspremont2008}; magenta and green: \cite{Shen2008} with $\ell_1$ and $\ell_0$-thresholding respectively; black: \cite{Zou2006}.}
\label{fig:iterative_algs}
\end{figure}

In Section~\ref{s:SPCAvRP} of this paper, we propose a novel algorithm for SPCA that aggregates estimates over carefully-chosen axis-aligned random projections of the data into a lower-dimensional space.  In contrast to the other algorithms mentioned above, it is non-iterative and does not depend on a choice of  initialisation, so it has no difficulty with the simulation example above.  Indeed, from the blue curve in Figure~\ref{fig:iterative_algs}, we see that it 
outperforms even the SDP algorithm, compared to which it was over 7000 times faster in the $n=2000$ case.

Our algorithm, which we refer to as SPCAvRP, turns out to be attractive for both theoretical and computational reasons.  
From a theoretical point of view, our algorithm provides a new perspective on the statistical and computational trade-off involved in the SPCA problem.  As we show in Section~\ref{s:guarantees}, when the effective sample size is large, the SPCAvRP procedure can attain the minimax optimal rate with a number of projections that grows only polynomially in the problem parameters.  On the other hand, if one were to use a number of random projections exponentially large in $k$, SPCAvRP could even achieve this minimax rate in a much smaller effective sample size regime. Although this exponentially large number of projections may seem discouraging, we emphasise it is in fact not a drawback of the SPCAvRP algorithm, but simply a reflection of the fundamental difficulty of the problem in this effective sample size regime. Indeed, \citet{Wang2016} established a computational lower bound, which reveals that no randomised polynomial time algorithm can attain the minimax rate of convergence for these effective sample sizes.  The elucidation of the transition from polynomial to exponentially large number of projections is an illustration of the fascinating fundamental statistical and computational trade-off in this problem.
The computational attractions of the proposed algorithm include the fact that it is highly scalable due to easy parallelisation, and does not even require computation of $\hat \Sigma \in \mathbb{R}^{p\times p}$, since it suffices to extract principal submatrices of $\hat{\Sigma}$, which can be done by computing the sample covariance matrices of the projected data.  This may result in a significant computational saving if $p$ is very large.  Several numerical aspects of the algorithm, including a finite-sample simulation comparison with alternative methods on both simulated and real data, are considered in Section~\ref{s:numerics}.  These reveal that our SPCAvRP algorithm has very competitive performance, and furthermore, it enjoys robustness properties that iterative algorithms do not share.  The proofs of all of our results are given in Section~\ref{ap:proofs}.

Algorithms based on random projections have recently been shown to be highly effective for several different problems in high-dimensional statistical inference.  For instance, in the context of high-dimensional classification, \cite{Cannings2017} showed that their random projection ensemble classifier that aggregates over projections that yield small estimates of the test error can result in excellent performance.  \cite{Marzetta2011} employ an ensemble of random projections to construct an estimator of the population covariance matrix and its inverse in the setting where $n<p$.  \cite{Fowler2009} introduced a so-called compressive-projection PCA that reconstructs the sample principal components from many low-dimensional projections of the data.  Finally, to decrease the computational burden of classical PCA, \cite{Qi2012} and \cite{Anaraki2014} propose estimating $v_1(\Sigma)$ by the leading eigenvector of $n^{-1}\sum_{i=1}^{n} P_iX_iX_i^\top P_i$, where $P_1,\ldots,P_n$ are random projections of a particular form.

\paragraph{Notation.} We conclude this introduction with some notation used throughout the paper. For $r \in \mathbb{N}$, let $[r] := \{1,\ldots,r\}$.   For a vector $u \in \mathbb{R}^p$, we write $u^{(j)}$ for its $j$th component and $\|u\|_2 :=\bigl\{\sum_{j=1}^{p} (u^{(j)})^2\bigr\}^{1/2}$ for its  Euclidean norm.  
For a real symmetric matrix $U\in\mathbb{R}^{p\times p}$, let $\lambda_1(U) \geq \lambda_2(U) \geq \ldots \geq \lambda_p(U)$ denote its eigenvalues, arranged in decreasing order, and let $v_1(U),\ldots,v_p(U)$ denote the corresponding eigenvectors. 
In addition, for $m \in [p]$, we write $V_m(U):=(v_1(U),\ldots,v_m(U))$ for the $p \times m$ matrix whose columns are the $m$ leading eigenvectors of $U$.  
In the special case where $U = \Sigma$, we drop the argument, and write 
$\lambda_r = \lambda_r(\Sigma)$, $v_r = v_r(\Sigma)$ and $V_m = V_m(\Sigma)$.  
For a general $U \in \mathbb{R}^{p \times m}$, we define $U^{(j,j')}$ to be the $(j,j')$th entry of $U$, and $U^{(j,\cdot)}$ the $j$th row of $U$, regarded as a column vector.  Given $S \subseteq [p]$ and $S' \subseteq [m]$, we write $U^{(S,S')}$ for the $|S| \times |S'|$ matrix obtained by extracting the rows of $U$ indexed by $S$ and columns indexed by $S'$; we also write $U^{(S,\cdot)} := U^{(S,[m])}$.  We write $\|U\|_{\mathrm{op}} := \sup_{x \in \mathcal{S}^{m-1}} \|Ux\|_2$ and $\|U\|_{\mathrm{F}}:=\bigl\{\sum_{j=1}^{p}\sum_{j'=1}^{m}|U^{(j,j')}|^2\bigr\}^{1/2}$ for the operator and Frobenius norms of $U$ respectively.  We denote the set of real orthogonal $p\times p$ matrices by $\mathbb{O}_p$ and the set of real $p\times m$ matrices with orthonormal columns by $\mathbb{O}_{p,m}$. For matrices $U,V\in\mathbb{O}_{p,m}$, we define the loss function
$$
L(U,V):=\|\sin\Theta(U,V)\|_{\mathrm{F}},
$$
where $\Theta(U,V)$ is the $m\times m$ diagonal matrix whose $j$th diagonal entry is the $j$th principal angle between $U$ and $V$, i.e.~$\arccos \sigma_j$, where $\sigma_j$ is the $j$th singular value of $U^{\top}V$. Observe that this loss function reduces to~\eqref{eq:loss} when $m=1$. 

For any index set $J\subseteq[p]$  we write $P_{J}$ to denote the projection onto the span of $\{e_j: j\in J\}$, where $e_1,\ldots,e_p$ are the standard Euclidean basis vectors in $\mathbb{R}^p$, so that $P_J$ is a $p \times p$ diagonal matrix whose $j$th diagonal entry is $\mathds{1}_{\{j \in J\}}$.  Finally, for $a,b \in \mathbb{R}$, we write $a \lesssim b$ to mean that there exists a universal constant $C > 0$ such that $a \leq Cb$.  

\section{SPCA via random projections}\label{s:SPCAvRP}

\subsection{Single principal component estimation}

In this section, we describe our algorithm for estimating a single principal component $v_1$ in detail; more general estimation of multiple principal components $v_1,\ldots,v_m$ is treated in Section~\ref{Sec:Multiple} below.  Let $x_1,\ldots,x_n$ be data points in $\mathbb{R}^p$ and let $\hat{\Sigma} := n^{-1}\sum_{i=1}^n x_ix_i^\top$.  We think of $x_1,\ldots,x_n$ as independent realisations of a mean zero random vector $X$, so a practitioner may choose to centre each variable so that $\sum_{i=1}^n x_i^{(j)} = 0$ for each $j \in [p]$.  For $d \in [p]$, let $\mathcal{P}_d := \{P_S:S \subseteq [p], \ |S| = d\}$ denote the set of $d$-dimensional, axis-aligned projections. 
For fixed $A, B \in \mathbb{N}$, consider projections $\{P_{a,b}: a \in [A], b \in [B]\}$ independently and uniformly distributed on $\mathcal{P}_d$.  We think of these projections as consisting of $A$ groups, each of cardinality $B$.  For each $a \in [A]$, let  
$$
b^*(a) := \sargmax_{b\in[B]}  \lambda_1( P_{a,b} \hat{\Sigma} P_{a,b})
$$
denote the index of the selected projection within the $a$th group, where $\sargmax$ denotes the smallest element of the $\argmax$ in the lexicographic ordering.  The idea is that the non-zero entries of $P_{a,b^*(a)} \hat{\Sigma} P_{a,b^*(a)}$ form a principal submatrix of $\hat{\Sigma}$ that should have a large leading eigenvalue, so the non-zero entries of the corresponding leading eigenvector $\hat{v}_{a,b^*(a);1}$ of $P_{a,b^*(a)} \hat{\Sigma} P_{a,b^*(a)}$ should have some overlap with those of~$v_1$.  Observe that, if $d=k$ and $\{P_{a,b}: b \in [B]\}$ were to contain all $\binom{p}{k}$ projections, then the leading eigenvector of $P_{a,b^*(a)} \hat{\Sigma} P_{a,b^*(a)}$ would yield the minimax optimal estimator in \eqref{opt_est}.  Of course, it would typically be too computationally expensive to compute all such projections, so instead we only consider $B$ randomly chosen ones.  

The remaining challenge is to aggregate over the selected projections.  To this end, for each coordinate $j \in [p]$, we compute an importance score $\hat{w}^{(j)}$, defined as an average over $a \in [A]$ of the squared $j$th components of the selected eigenvectors $\hat{v}_{a,b^*(a);1}$, weighted by the eigengap $\lambda_1\bigl( P_{a,b^*(a)} \hat{\Sigma} P_{a,b^*(a)}\bigr) - \lambda_2\bigl( P_{a,b^*(a)} \hat{\Sigma} P_{a,b^*(a)} \bigr)$.  This means that we take account not just of the frequency with which each coordinate is chosen, but also their corresponding magnitudes in the selected eigenvector, as well as an estimate of the signal strength. Finally, we select the $\ell$ indices $\hat{S}$ corresponding to the largest values of $\hat{w}^{(1)},\ldots,\hat{w}^{(p)}$ and output our estimate $\hat{v}_1$ as the leading eigenvector of $P_{\hat S} \hat{\Sigma} P_{\hat S}$.  Pseudo-code for our SPCAvRP algorithm is given in Algorithm~\ref{Algo:SPCAvRP}.


\begin{algorithm}[htbp]
\SetAlgoLined
\IncMargin{1em}
\DontPrintSemicolon
\KwIn{
$x_1,\ldots,x_n \in \mathbb{R}^p$, $A, B \in \mathbb{N}$, $d, \ell \in [p]$.
}
Generate $\{P_{a,b}: a \in [A], b \in [B]\}$ independently and uniformly from $\mathcal{P}_d$.\;
Compute $\{P_{a,b}\hat{\Sigma}P_{a,b}:a \in [A], b \in [B]\}$, where $\hat{\Sigma} := n^{-1}\sum_{i=1}^n x_ix_i^\top$.\;
\For{$a=1,\ldots,A$}{
      \For{$b=1,\ldots,B$}{
      Compute $\hat\lambda_{a,b;1} := \lambda_1( P_{a,b} \hat{\Sigma} P_{a,b} )$, $\hat\lambda_{a,b;2} := \lambda_2( P_{a,b} \hat{\Sigma} P_{a,b} )$ and $\hat{v}_{a,b;1} \in v_1( P_{a,b} \hat{\Sigma} P_{a,b} )$.
      }   
      Compute
      \[
       b^*(a) := \sargmax_{b\in[B]}  \hat\lambda_{a,b;1}.
      \]
} 
Compute $\hat w=(\hat w^{(1)},\ldots,\hat w^{(p)})^\top$, where
\begin{equation}\label{alg:aggregation}
\hat w^{(j)} :=  \frac{1}{A} \sum_{a=1}^{A} (\hat\lambda_{a,b;1} - \hat\lambda_{a,b;2})\bigl(\hat v_{a,b^*(a);1}^{(j)}\bigr)^2,
\end{equation}
and let $\hat S\subseteq[p]$ be the index set of the $\ell$ largest components of $\hat w$.  \;
\KwOut{$\hat v_1 := \sargmax_{v\in\mathcal{S}^{p-1}} v^{\top}  P_{\hat S} \hat{\Sigma} P_{\hat S} v$.}
\caption{Pseudo-code for the SPCAvRP algorithm for a single principal component}
\label{Algo:SPCAvRP}
\end{algorithm}

Besides the intuitive selection of the most important coordinates, the use of axis-aligned projections facilitates faster computation as opposed to the use of general orthogonal projections.  Indeed, the multiplication of $\hat{\Sigma} \in \mathbb{R}^{p \times p}$ by an axis-aligned projection $P \in \mathcal{P}_d$ from the left (or right) can be recast as the selection of $d$ rows (or columns) of $\hat{\Sigma}$ corresponding to the indices of the non-zero diagonal entries of $P$.  Thus, instead of the typical $\mathcal{O}(p^2d)$ matrix multiplication complexity, only $\mathcal{O}(pd)$ operations are required.  We also remark that, instead of storing $P$, it suffices to store its non-zero indices. 

More generally, the computational complexity of Algorithm~\ref{Algo:SPCAvRP} can be analysed as follows. Generating $AB$ initial random projections takes $\mathcal{O}({A}{B}d)$ operations.  Next, we need to compute $P_{{a},{b}}\hat \Sigma P_{{a},{b}}$ for all ${a}$ and ${b}$, which can be done in two different ways.  One option is to compute~$\hat \Sigma$, and then for each projection $P_{{a},{b}}$ select the corresponding $d\times d$ principal submatrix of~$\hat\Sigma$, which requires $\mathcal{O}(np^2+{A}{B}d^2)$ operations.  Alternatively, we can avoid computing $\hat{\Sigma}$ by computing the sample covariance matrix of the projected data $\{P_{{a},{b}} x_1,\ldots,P_{{a},{b}} x_n:a \in [A],b \in [B]\}$, which has $\mathcal{O}({A}{B}nd^2)$ complexity.  If $p^2 \gg {A}{B}d^2$, then the second option is preferable. 

The rest of Algorithm~\ref{Algo:SPCAvRP} entails computing an eigendecomposition of each $d\times d$ matrix, and computing $\{b^*({a}) : a \in [A]\}$, $\hat w$, $\hat S$, and $\hat v_1$, which altogether amounts to $\mathcal{O}({A}{B} d^3 + Ap + \ell^3)$ operations. Thus, assuming that $n\geq d$, the overall computational complexity of the SPCAvRP algorithm  is 
$$
\mathcal{O}(\min\{np^2 + {A}{B} d^3 + Ap + \ell^3, {A}{B}nd^2 + Ap+ \ell^3\} ).
$$
We also note that, due to the use of random projections, the algorithm is highly parallelisable. In particular, both for-loops of Algorithm~\ref{Algo:SPCAvRP} can be parallelised, and the selection of good projections can easily be carried out using different (up to ${A}$) machines.

Finally, we note that the numbers $A$ and $B$ of projections, the dimension $d$ of those projections and the sparsity $\ell$ of the final estimator, need to be provided as inputs to Algorithm~\ref{Algo:SPCAvRP}.  The effect of these parameter choices on the theoretical guarantees of our SPCAvRP algorithm is elucidated in our theory in Section~\ref{s:guarantees}, while their practical selection is discussed in Section~\ref{ss:inputparameters}.

\subsection{Multiple principal component estimation}
\label{Sec:Multiple}


The estimation of higher-order principal components is typically achieved via a deflation scheme.  Having computed estimates $\hat{v}_1,\ldots,\hat{v}_{r-1}$ of the top $r-1$ principal components, the aim of such a procedure is to estimate the $r$th principal component based on modified observations, which have had their correlation with these previously-estimated components removed \citep[e.g.][]{Mackey2009}.  For any matrix $V\in\mathbb{R}^{p\times r}$ of full column rank, we define the projection onto the orthogonal complement of the column space of $V$ by $\text{Proj}^{\perp}(V) := I_p - V(V^\top V)^{-1}V^\top$ if $V\neq 0$ and $I_p$ otherwise. Then writing $\hat{V}_{r-1}:=(\hat{v}_1,\ldots,\hat{v}_{r-1})$, one possibility to implement a deflation scheme is to set $\tilde{x}_i := \text{Proj}^{\perp}(\hat{V}_{r-1}) x_i$ for $i\in [n]$.  Note that in sparse PCA, by contrast with classical PCA, the estimated principal components from such a deflation scheme are typically not orthogonal.  In Algorithm~\ref{Algo:SPCAvRPdefl}, we therefore propose a modified deflation scheme, which in combination with Algorithm~\ref{Algo:SPCAvRP} can be used to compute an arbitrary $m \in [p]$ principal components that are orthogonal (as well as sparse), as verified in Lemma~\ref{Lemma:SparseOrthogonal} below.
\begin{lemma}
  \label{Lemma:SparseOrthogonal}
For any $m \in [p]$, the outputs $\hat{v}_1,\ldots,\hat{v}_m$ of Algorithm~\ref{Algo:SPCAvRPdefl} are mutually orthogonal.
\end{lemma}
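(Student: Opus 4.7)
The plan is to establish mutual orthogonality of $\hat v_1, \ldots, \hat v_m$ by induction on $r \in [m]$. The case $r=1$ is vacuous. For the inductive step, suppose that $\hat v_1, \ldots, \hat v_{r-1}$ are pairwise orthogonal unit vectors, so that $\hat V_{r-1} := (\hat v_1, \ldots, \hat v_{r-1}) \in \mathbb{O}_{p,r-1}$ and
$$
W_{r-1} := \text{Proj}^{\perp}(\hat V_{r-1}) = I_p - \hat V_{r-1}\hat V_{r-1}^\top
$$
is a symmetric, idempotent matrix whose range is exactly $\text{span}(\hat v_1, \ldots, \hat v_{r-1})^{\perp}$. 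The task is then reduced to showing that the output $\hat v_r$ of Algorithm~\ref{Algo:SPCAvRPdefl} lies in the range of $W_{r-1}$.

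The heart of the argument is to inspect Algorithm~\ref{Algo:SPCAvRPdefl} and verify that the modified deflation scheme produces $\hat v_r$ as a unit eigenvector of a matrix of the form $W_{r-1} M_r W_{r-1}$, where $M_r \in \mathbb{R}^{p\times p}$ is a symmetric matrix built from the current (already deflated) sample covariance and the support $\hat S_r$ produced by the SPCAvRP selection step; equivalently, that $\hat v_r$ can be written as $W_{r-1} u_r / \|W_{r-1} u_r\|_2$ for some $u_r \in \mathbb{R}^p$ with $W_{r-1} u_r \neq 0$. The range of such a two-sided product is contained in the column space of $W_{r-1}$, and any eigenvector associated with a non-zero eigenvalue must therefore lie in that range (eigenvectors associated with the zero eigenvalue may be chosen within this range without loss of generality). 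Consequently, $\hat v_r \in \text{range}(W_{r-1}) = \text{span}(\hat v_1, \ldots, \hat v_{r-1})^{\perp}$, so $\hat v_r^{\top} \hat v_s = 0$ for every $s \in [r-1]$, which, together with the inductive hypothesis, closes the induction.

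The main obstacle — and the reason a proof is required at all — is that one must check that $W_{r-1}$ is applied on \emph{both} sides of the covariance extraction rather than only to the data. The naive scheme $\tilde x_i = W_{r-1} x_i$ yields the deflated covariance $W_{r-1} \hat\Sigma W_{r-1}$, but any subsequent restriction to a support $\hat S_r$ via the axis-aligned projection $P_{\hat S_r}$ would reintroduce components along $\hat v_1, \ldots, \hat v_{r-1}$ and destroy orthogonality; this is precisely the failure of the naive deflation noted in the paragraph preceding the lemma. Algorithm~\ref{Algo:SPCAvRPdefl} circumvents this by sandwiching $P_{\hat S_r} \hat\Sigma P_{\hat S_r}$ between two copies of $W_{r-1}$ (or equivalently projecting the final eigenvector back onto $\text{range}(W_{r-1})$) before returning $\hat v_r$. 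Once this structural observation about the algorithm is made, the rest of the argument is a one-line exercise in elementary linear algebra.
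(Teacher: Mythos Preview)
Your proposal misreads Algorithm~\ref{Algo:SPCAvRPdefl}. The final line of the $r$th iteration is
\[
\hat v_r := v_1\bigl(H_{\tilde S_r}\, P_{\tilde S_r}\hat\Sigma P_{\tilde S_r}\, H_{\tilde S_r}\bigr),
\]
where $H_{\tilde S_r} = \text{Proj}^{\perp}(P_{\tilde S_r}\hat V_{r-1})$. This is \emph{not} a sandwich by your $W_{r-1}=\text{Proj}^{\perp}(\hat V_{r-1})$; the projection $H_{\tilde S_r}$ annihilates only the \emph{truncated} columns $P_{\tilde S_r}\hat v_1,\ldots,P_{\tilde S_r}\hat v_{r-1}$, not $\hat v_1,\ldots,\hat v_{r-1}$ themselves. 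Consequently the range of $H_{\tilde S_r} P_{\tilde S_r}\hat\Sigma P_{\tilde S_r} H_{\tilde S_r}$ is $(\mathrm{col}\,P_{\tilde S_r}\hat V_{r-1})^{\perp}$, which is in general strictly larger than $\mathrm{span}(\hat v_1,\ldots,\hat v_{r-1})^{\perp}$. So your ``structural observation'' that $\hat v_r$ is automatically an eigenvector of something of the form $W_{r-1}M_rW_{r-1}$ is false, and your one-line linear algebra conclusion does not apply. (Indeed, sandwiching by $W_{r-1}$ would generally destroy sparsity of $\hat v_r$, which is precisely why the algorithm uses $H_{\tilde S_r}$ instead.)

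The paper's proof supplies the missing ingredient: because the column space of $P_{\tilde S_r}\hat V_{r-1}$ lies inside $\mathrm{span}\{e_j:j\in\tilde S_r\}$, the projection $H_{\tilde S_r}$ is block-diagonal with respect to the decomposition $\tilde S_r\cup\tilde S_r^{\mathrm c}$, and hence $H_{\tilde S_r}P_{\tilde S_r}$ has range contained in $\mathrm{span}\{e_j:j\in\tilde S_r\}$. This forces $\hat v_r$ itself to be supported on $\tilde S_r$. One then splits $\hat v_r^\top \hat V_{r-1} = \hat v_r^\top P_{\tilde S_r}\hat V_{r-1} + \hat v_r^\top P_{\tilde S_r^{\mathrm c}}\hat V_{r-1}$: the second term vanishes by the support constraint, and the first vanishes because $\hat v_r\in\mathrm{range}(H_{\tilde S_r})$ while $H_{\tilde S_r}P_{\tilde S_r}\hat V_{r-1}=0$. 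Your write-up needs both pieces --- the correct identification of the sandwiching projection and the sparsity-of-$\hat v_r$ observation --- to close the argument.
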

We remark that, in fact, our proposed deflation method can be used in conjunction with any SPCA algorithm.

\begin{algorithm}[htbp]
\SetAlgoLined
\IncMargin{1em}
\DontPrintSemicolon
\KwIn{
$x_1,\ldots, x_n \in \mathbb{R}^{p}$, $A, B\in \mathbb{N}$, $m, d, \ell_1, \ldots, \ell_m \in [p]$.
}
Let $\hat v_1$ be output of Algorithm~\ref{Algo:SPCAvRP} with inputs $x_1,\ldots, x_n$, $A$, $B$, $d$ and $\ell_1$.\;
\For{$r=2,\ldots,m$}{
  Let $H_{r} := \text{Proj}^{\perp}(\hat{V}_{r-1})$, where $\hat{V}_{r-1} := (\hat{v}_1, \ldots, \hat{v}_{r-1})$.\;
  Let $\tilde{v}_r$ be output of Algorithm~\ref{Algo:SPCAvRP} with inputs $H_{r}x_1,\ldots,H_{r}x_n$, $A$, $B$, $d$ and $\ell_r$.\;

  Let  $\tilde{S}_r := \{j \in [p]: \tilde{v}_r^{(j)} \neq 0\}$ and $H_{\tilde{S}_r}:= \text{Proj}^{\perp}(P_{\tilde{S}_r} \hat{V}_{r-1})$.\;
  Compute
    $$\hat v_r := v_1 \bigl( H_{\tilde S_r}  P_{\tilde S_r} \hat{\Sigma}  P_{\tilde S_r} H_{\tilde S_r} \bigr).$$
}
\KwOut{$\hat v_1, \ldots, \hat v_m$.}
\caption{Pseudo-code of the modified deflation scheme}
\label{Algo:SPCAvRPdefl}
\end{algorithm}

Although Algorithm \ref{Algo:SPCAvRPdefl} can conveniently be used to compute sparse principal components up to order~$m$, it requires Algorithm~\ref{Algo:SPCAvRP} to be executed $m$ times. Instead, we can modify  Algorithm~\ref{Algo:SPCAvRP} to estimate directly the leading eigenspace of dimension $m$ --- the subspace spanned by the columns of matrix $V_m=(v_1,\ldots,v_m)$ --- at a computational cost not much higher than that of executing Algorithm~\ref{Algo:SPCAvRP} only once. To this end, we propose a generalisation of the SPCAvRP algorithm for eigenspace estimation in Algorithm~\ref{Algo:SPCAvRPsub}.  In this generalisation, $A$ projections are selected from a total of $A\times B$ random projections, by computing
$$
b^*(a) := \sargmax_{b\in[B]}  \sum_{r=1}^m \lambda_r( P_{a,b} \hat{\Sigma} P_{a,b})
$$
for each $a\in[A]$.  
We can regard $\sum_{r=1}^m \bigl(\hat\lambda_{a,b^*(a);r}-\hat\lambda_{a,b^*(a);m+1}\bigr)\bigl(\hat v_{a,b^*(a);r}^{(j)}\bigr)^2$ as the contribution of the $a$th selected projection to the importance score of the $j$th coordinate, and, analogously to the single component estimation case, we average these contributions over $a \in [A]$ to obtain a vector of final importance scores.  Again, similar to the case $m=1$, we then threshold the top $\ell$ importance scores to obtain a final projection and our $m$ estimated principal components.  A notable difference, then, between Algorithm~\ref{Algo:SPCAvRPsub} and the deflation scheme (Algorithm~\ref{Algo:SPCAvRPdefl}) is that now we estimate the union of the supports of the leading $m$ eigenvectors of $\Sigma$ simultaneously rather than one at a time.  A consequence is that Algorithm~\ref{Algo:SPCAvRPsub} is particularly well suited to a sparsity setting known in the literature as `row sparsity' \citep{Vu2013}, where leading eigenvectors of interest may share common support, because it borrows strength regarding the estimation of this support from the simultaneous nature of the multiple component estimation.  On the other hand, Algorithm~\ref{Algo:SPCAvRPdefl} may have a slight advantage in cases where the leading eigenvectors have disjoint supports; see Section~\ref{sss:comparison-multi} for further discussion.


Observe that for $m=1$, both Algorithm~\ref{Algo:SPCAvRPdefl} and Algorithm~\ref{Algo:SPCAvRPsub} reduce to Algorithm~\ref{Algo:SPCAvRP}. Furthermore, for any $m$, up to the step where $\hat{w}$ is computed, Algorithm~\ref{Algo:SPCAvRPsub} has the same complexity as Algorithm~\ref{Algo:SPCAvRP}, with the total complexity of Algorithm~\ref{Algo:SPCAvRPsub} amounting to $\mathcal{O}(\min\{np^2+ABd^3 + Amp + \ell^3, ABnd^2+ Amp + \ell^3\})$, provided that $n\geq d$.

\begin{algorithm}[htbp]
\SetAlgoLined
\IncMargin{1em}
\DontPrintSemicolon
\KwIn{
$x_1,\ldots,x_n \in \mathbb{R}^p$, $A, B \in \mathbb{N}$, $d,\ell \in [p]$, $m \in [d]$.
}
Generate $\{P_{a,b}: a \in [A], b \in [B]\}$ independently and uniformly from $\mathcal{P}_d$.\;
Compute $\{P_{a,b}\hat{\Sigma}P_{a,b}:a \in [A], b \in [B]\}$, where $\hat{\Sigma} := n^{-1}\sum_{i=1}^n x_ix_i^\top$.\;
\For{$a=1,\ldots,A$}{
      \For{$b=1,\ldots,B$}{
		  For $r\in[m+1]$, compute $\hat\lambda_{a,b;r}:= \lambda_r( P_{a,b} \hat{\Sigma} P_{a,b})$ and the corresponding eigenvector $\hat{v}_{a,b;r}$, with the convention that $\hat\lambda_{a,b;d+1}:=0$.
      }
      Compute $b^*(a) := \sargmax_{b\in[B]}  \sum_{r=1}^m \hat\lambda_{a,b;r}$.
} 
Compute $\hat w = (\hat w^{(1)}, \ldots, \hat w^{(p)})^\top$ with
\[
\hat w^{(j)} :=  \frac{1}{A} \sum_{a=1}^{A} \sum_{r=1}^m \bigl(\hat\lambda_{a,b^*(a);r}-\hat\lambda_{a,b^*(a);m+1}\bigr)\bigl(\hat v_{a,b^*(a);r}^{(j)}\bigr)^2.
\]
Let $\hat S\subseteq[p]$ be the index set of the $\ell$ largest components of $\hat w$.  \;
\KwOut{$\hat V_m=(\hat v_1, \ldots, \hat v_m)$, where $\hat v_1, \ldots, \hat v_m$ are the principal eigenvectors of $P_{\hat S} \hat{\Sigma} P_{\hat S}$.}
\caption{Pseudo-code of the SPCAvRP algorithm for eigenspace estimation}
\label{Algo:SPCAvRPsub}
\end{algorithm}

\section{Theoretical guarantees}\label{s:guarantees}

In this section, we focus on the general Algorithm~\ref{Algo:SPCAvRPsub}.
We assume that $X_1,\ldots,X_n$ are independently sampled from a distribution $Q$ satisfying a Restricted Covariance Concentration (RCC) condition introduced in \citet{Wang2016}. Recall that, for $K > 0$, we say that a mean zero distribution $Q$ on $\mathbb{R}^p$ satisfies an RCC condition with parameter $K$, and write $Q\in\mathrm{RCC}_p(K)$, if for all $\delta>0$, $n \in \mathbb{N}$ and $r\in [p]$, we have 
\begin{equation}
\label{Eq:RCC}
 \mathbb{P}\biggl\{\sup_{u\in \mathcal{B}^{p-1}_0(r)} \bigl|u^\top (\hat\Sigma-\Sigma) u\bigr| \geq K \max\biggl(\sqrt\frac{r\log (p/\delta)}{n}\, ,\, \frac{r\log (p/\delta)}{n}\biggr)\biggr\}\leq \delta.
\end{equation}
In particular, if $Q = N_p(0,\Sigma)$, then $Q\in\mathrm{RCC}_p\bigl(8\lambda_1(1+9/\log{p})\bigr)$; and if $Q$ is sub-Gaussian with parameter $\sigma^2$, in the sense that $\int_{\mathbb{R}^p} e^{u^\top x}\,dQ(x)\leq e^{\sigma^2\|u\|_2^2/2}$ for all $u\in\mathbb{R}^p$, then $Q\in\mathrm{RCC}_p\bigl(16\sigma^2(1+9/\log{p})\bigr)$ \citep[Proposition~1]{Wang2016}.

As mentioned in Section~\ref{Sec:Multiple}, our theoretical justification of Algorithm~\ref{Algo:SPCAvRPsub} does not require that the leading eigenvectors enjoy disjoint supports.  Instead, we ask for $V_m$ to have not too many non-zero rows, and for these non-zero rows to have comparable Euclidean norms (i.e.~to satisfy an incoherence condition).  More precisely, writing $\mathrm{nnzr}(V)$ for the number of non-zero rows of a matrix $V$, for $\mu \geq 1$, we consider the setting where $V_m$ belongs to the set 
\[
\mathbb{O}_{p,m,k}(\mu) := \biggl\{V\in\mathbb{O}_{p,m}, \ \text{$\mathrm{nnzr}(V) \leq k$, } \frac{\max_{j:\|V^{(j,\cdot)}\|_2 \neq 0} \|V^{(j,\cdot)}\|_2}{\min_{j:\|V^{(j,\cdot)}\|_2 \neq 0} \|V^{(j,\cdot)}\|_2} \leq \mu\biggr\}.
\]
Writing $S_0:=\{j\in[p]: V_m^{(j,\cdot)}\neq 0\}$ for the set of indices of the non-zero rows of $V_m$, since $\sum_{j\in S_0} \|V_m^{(j,\cdot)}\|_2^2 = \|V_m\|_{\text{F}}^2 = m$, a consequence of our incoherence parameter definition is that for $V_m \in \mathbb{O}_{p,m,k}(\mu)$, we have
\begin{equation}
\label{Eq:VMinVMax2}
\frac{m^{1/2}}{k^{1/2}\mu}\leq \|V_m^{(j,\cdot)}\|_2 \leq \frac{m^{1/2}\mu}{k^{1/2}},\qquad \forall\ j\in S_0.
\end{equation}

The following is our main result on the performance of our SPCAvRP algorithm.
\begin{theorem}
  \label{Thm:Main}
Suppose $Q\in \textup{RCC}_p(K)$ has an associated covariance matrix $\Sigma = I_p+V_m\Theta V_m^\top$, where $V_m\in\mathbb{O}_{p,m,k}(\mu)$ and $\Theta = \textup{diag}(\theta_1, \ldots, \theta_m)$, with $\theta_1\geq\cdots\geq \theta_m > 0$.  Let $X_1,\ldots,X_n\stackrel{\mathrm{iid}}{\sim} Q$ and let $\hat V_m$ be the output of Algorithm~\ref{Algo:SPCAvRPsub} with input $X_1,\ldots,X_n$, $A$, $B$, $m$, $d$ and $\ell$. Suppose $d\geq \max\{m+1,k\}$, $\ell\geq k$, and 
\begin{equation}
\label{Eq:Cond2}
32K\sqrt\frac{d\log p}{n}\leq \frac{\theta_m}{k\mu^2}.
\end{equation}
Then with probability at least $1-2p^{-3}-pe^{-A\theta_m^2/(50p^2\mu^8\theta_1^2)}$, we have 
\[
L(\hat V_m, V_m)\leq 4K\sqrt\frac{m\ell\log p}{n\theta_m^2}.
\]
\end{theorem}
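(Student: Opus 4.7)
I would structure the proof in three phases: (i) set up global concentration events for the sample covariance via the RCC condition, (ii) show that the data-driven support $\hat{S}$ returned by Algorithm~\ref{Algo:SPCAvRPsub} contains $S_0 := \{j \in [p] : V_m^{(j,\cdot)} \neq 0\}$ with the required probability, and (iii) convert this into the sin-$\Theta$ bound by applying a Davis--Kahan inequality to the principal submatrix $P_{\hat S}\hat\Sigma P_{\hat S}$.

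For phase (i), I would apply the RCC condition~\eqref{Eq:RCC} with $\delta = p^{-3}$ at sparsity levels $r = d$ and $r = \ell$ to obtain an event $\mathcal{E}$ of probability at least $1 - 2p^{-3}$ on which
\[
\sup_{u \in \mathcal{B}^{p-1}_0(d)} |u^\top(\hat\Sigma - \Sigma)u| \lesssim K\sqrt{\frac{d\log p}{n}}, \qquad \sup_{u \in \mathcal{B}^{p-1}_0(\ell)} |u^\top(\hat\Sigma - \Sigma)u| \lesssim K\sqrt{\frac{\ell\log p}{n}},
\]
where the $\max(\sqrt{\cdot},\cdot)$ in the RCC bound reduces to its square-root branch thanks to~\eqref{Eq:Cond2}. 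By Weyl's inequality applied to each principal submatrix, on $\mathcal{E}$ every sample eigenvalue $\hat\lambda_{a,b;r}$ lies within $\theta_m/(8k\mu^2)$ of its population analogue $\lambda_r(P_{a,b}\Sigma P_{a,b})$, uniformly over $a$, $b$ and $r \leq d$.

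Phase (ii) is the heart of the argument. Writing $S_{a,b}$ for the support of $P_{a,b}$, the population objective $\sum_{r=1}^m \lambda_r(P_{a,b}\Sigma P_{a,b})$ equals $m + \sum_{r=1}^m \lambda_r\bigl(V_m^{(S_{a,b},\cdot)}\Theta (V_m^{(S_{a,b},\cdot)})^\top\bigr)$, which is monotone in the row-mass of $V_m$ captured by $S_{a,b}$ and is maximised, with value $m + \sum_r \theta_r$, whenever $S_{a,b} \supseteq S_0$. Combining this with the eigenvalue perturbation from phase (i) and a Davis--Kahan bound applied to each $d \times d$ submatrix, I would show that the importance contribution
\[
Z_a^{(j)} := \sum_{r=1}^m \bigl(\hat\lambda_{a,b^*(a);r} - \hat\lambda_{a,b^*(a);m+1}\bigr)\bigl(\hat v_{a,b^*(a);r}^{(j)}\bigr)^2
\]
has an expectation that is separated across $\{j \in S_0\}$ and $\{j \notin S_0\}$: using~\eqref{Eq:VMinVMax2} together with the probability that $j$ lies in $S_{a,b^*(a)}$, one obtains $\mathbb{E} Z_1^{(j)} \gtrsim \theta_m/(p\mu^4)$ for $j \in S_0$, whereas for $j \notin S_0$ only the small Davis--Kahan perturbation contributes because the population eigenvectors of $P\Sigma P$ vanish off the row-support of $V_m$. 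Since the $Z_a^{(j)}$ are i.i.d.~in $a$ and bounded by $\mathcal{O}(\theta_1)$, Hoeffding's inequality followed by a union bound over $j \in [p]$ yields that, with probability at least $1 - pe^{-A\theta_m^2/(50p^2\mu^8\theta_1^2)}$, every $\hat w^{(j)}$ is within $\mathcal{O}(\theta_m/(p\mu^4))$ of $\mathbb{E} Z_1^{(j)}$, which forces the top-$\ell$ set $\hat S$ to contain $S_0$. The main obstacle here is that the selection rule $b^*(a)$ introduces dependence between the chosen projection support and the associated eigenvector coordinates, so the expectation-gap calculation and the deviation constants need a careful case analysis depending on whether $S_{a,b^*(a)}$ does or does not contain $S_0$.

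For phase (iii), once $\hat S \supseteq S_0$ and $|\hat S| = \ell$, the identity $P_{\hat S} V_m = V_m$ gives $P_{\hat S}\Sigma P_{\hat S} = P_{\hat S} + V_m \Theta V_m^\top$, whose top-$m$ eigenspace coincides with the column span of $V_m$ and whose gap between the $m$th and $(m+1)$th eigenvalues is exactly $\theta_m$. A standard sin-$\Theta$ theorem applied to $P_{\hat S}\hat\Sigma P_{\hat S}$ versus $P_{\hat S}\Sigma P_{\hat S}$, combined with the $\ell$-sparse spectral bound from phase (i), then yields
\[
L(\hat V_m, V_m) \leq \frac{2\sqrt{2m}\,\|P_{\hat S}(\hat\Sigma - \Sigma)P_{\hat S}\|_{\mathrm{op}}}{\theta_m} \leq 4K\sqrt{\frac{m\ell\log p}{n\theta_m^2}},
\]
completing the argument after tracking constants.
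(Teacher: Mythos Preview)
Your three-phase architecture matches the paper's, and phases (i) and (iii) are essentially correct. Phase~(ii), however, glosses over the two genuinely delicate steps.

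First, you never justify a lower bound on $\mathbb{P}(j \in S_{a,b^*(a)})$ for $j \in S_0$: before selection every coordinate has marginal inclusion probability $d/p$, and it is precisely the selection rule $b^*(a)$---your acknowledged ``main obstacle''---that makes this nontrivial. The paper's resolution is a coupling argument: for $j\in S_0$ and $j'\notin S_0$, one defines an injection $\Psi$ on the sample space of $(S_{a,1},\ldots,S_{a,B})$ that swaps $j'$ for $j$ in the $b^*(a)$th set, and checks (using that on the RCC event the empirical objective $\sum_{r=1}^m \hat\lambda_{a,b;r}$ strictly increases under such a swap) that the selected index $b^*(a)$ is preserved. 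This yields $q_j := \mathbb{P}(j \in S_{a,b^*(a)}\mid X) \geq q_{j'}$, and then pigeonholing over $j' \in S_0^{\mathrm{c}}\cup\{j\}$ gives $q_j \geq (d-k+1)/(p-k+1) \geq 1/p$. Your proposed ``case analysis on whether $S_{a,b^*(a)}$ contains $S_0$'' does not deliver this; indeed for large $p/d$ the selected set will almost never contain all of $S_0$.

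Second, a per-submatrix Davis--Kahan bound cannot control $Z_a^{(j)}$ as written: for an arbitrary projection, $\Sigma^{(S_{a,b},S_{a,b})}$ may have $\lambda_m = \lambda_{m+1} = 1$ (whenever $V_m^{(S_{a,b},\cdot)}$ has rank below $m$), so there is no eigengap to invoke. The paper instead recognises $Z_a^{(j)}$ as the $(j,j)$ entry of $\hat V_{a,b^*(a)}\hat\Theta_{a,b^*(a)}\hat V_{a,b^*(a)}^\top$ and proves a \emph{gap-free} bound $\|\hat V\hat\Theta\hat V^\top - V\Theta V^\top\|_{\mathrm{op}} \leq 4m\|\hat\Sigma^{(S,S)} - \Sigma^{(S,S)}\|_{\mathrm{op}}$ via the telescoping identity $V_m\Theta_m V_m^\top = \sum_{r=1}^m (\lambda_r - \lambda_{r+1})V_rV_r^\top$ together with $(\lambda_r - \lambda_{r+1})\|\sin\Theta(\hat V_r,V_r)\|_{\mathrm{op}} \leq 2\|\hat\Sigma - \Sigma\|_{\mathrm{op}}$, in which the gap cancels. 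Finally, the $Z_a^{(j)}$ are only i.i.d.\ in $a$ \emph{conditionally} on $X$ (they all depend on $\hat\Sigma$), so Hoeffding must be applied conditionally on $X$, with the $X$-measurable quantities $q_j$ entering the conditional expectation gap.
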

An immediate consequence of Theorem~\ref{Thm:Main} is that, provided that $A \gtrsim p^2\mu^8\theta_1^2\theta_m^{-2} \log p$ and $p^{-3} \leq K\sqrt\frac{m\ell\log p}{n\theta_m^2}$, our SPCAvRP algorithm achieves the bound
\begin{equation}
\label{Eq:ExpectationBound}
  \mathbb{E}L(\hat V_m, V_m)\lesssim K\sqrt\frac{m\ell\log p}{n\theta_m^2}
\end{equation}
under the conditions of the theorem. The salient observation here is that this choice of $A$, together with the algorithmic complexity analysis given in Section~\ref{Sec:Multiple}, ensures that Algorithm~\ref{Algo:SPCAvRPsub} achieves the rate in~\eqref{Eq:ExpectationBound} in polynomial time (provided we consider $\mu$, $\theta_1$ and $\theta_m$ as constants).  The minimax lower bound given in Proposition~\ref{Prop:LowerBound} below complements Theorem~\ref{Thm:Main} by showing that this rate is minimax optimal, up to logarithmic factors, over \emph{all} possible estimation procedures, provided that $\ell \lesssim k$, that $m \lesssim \log (p/k) \asymp \log p$  and that we regard $K$ and $\mu$ as constants (as well as other regularity conditions).  It is important to note that this does not contradict the fundamental statistical and computational trade-off for this problem established in \citet{Wang2016}, because Condition~\eqref{Eq:Cond2} ensures that we are in the high effective sample size regime defined in that work.  Assuming the Planted Clique Hypothesis from theoretical computer science, this is the only setting in which any (randomised) polynomial time algorithm can be consistent.

The following proposition establishes a minimax lower bound for principal subspace estimation. It is similar to existing minimax lower bounds in the literature for SPCA under row sparsity, e.g.~\citet[][Theorem~3.1]{Vu2013}.  The main difference is that we show that imposing an incoherence condition on the eigenspace does not make the problem any easier from this minimax perspective. For any $V\in\mathbb{O}_{p,m}$ and $\theta>0$, we write $P_{V,\theta} := N_p(0,I_p+\theta VV^\top)$. 
\begin{proposition}
  \label{Prop:LowerBound}
Assume that $p\geq 5k$, $k\geq 4m$, $k\log\bigl((p-m)/k\bigr)\geq 17$ and that $nm^2\theta^2\geq k^2\max\bigl\{m, \log(p/k)\bigr\}$. Then 
\[
\inf_{\tilde V}\sup_{V\in\mathbb{O}_{p,m,k}(3)}\mathbb{E}_{P_{V,\theta}} L(\tilde V, V) \gtrsim \sqrt\frac{k\{m+\log(p/k)\}}{n\theta^2},
\]
where the infimum is taken over all estimators $\tilde V = \tilde V(X_1,\ldots,X_n)$ and the expectation is with respect to $X_1,\ldots,X_n\stackrel{\mathrm{iid}}{\sim}P_{V,\theta}$. 
\end{proposition}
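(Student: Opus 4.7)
The plan is to apply the generalised Fano inequality with two distinct local packing constructions in $\mathbb{O}_{p,m,k}(3)$, matching the two additive components $m$ and $\log(p/k)$ of the rate, and to take the larger of the two resulting bounds. The broad strategy mirrors standard lower bound proofs for row-sparse PCA \citep[see e.g.\ Theorem~3.1 of][]{Vu2013}, and the main additional task here is to verify that the packing elements can be arranged to respect the incoherence constraint $\mu \leq 3$.

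For the setup, for $V, V' \in \mathbb{O}_{p,m}$ one has the identity $\|VV^\top - V'V'^\top\|_{\mathrm{F}}^2 = 2 L(V,V')^2$, and for the Gaussian spiked model $P_{V,\theta} = N_p(0, I_p + \theta VV^\top)$ a direct computation gives
$$\mathrm{KL}\bigl(P_{V,\theta}^{\otimes n} \,\big\|\, P_{V',\theta}^{\otimes n}\bigr) = \frac{n\theta^2}{2(1+\theta)}L(V,V')^2 \lesssim n\theta^2 L(V,V')^2.$$
By Fano's lemma, it therefore suffices to exhibit, for each target separation $\epsilon$, a packing $\{V_1,\ldots,V_N\} \subset \mathbb{O}_{p,m,k}(3)$ with $L(V_i, V_j) \asymp \epsilon$ for all $i \neq j$ and $\log N \gtrsim n\theta^2\epsilon^2$. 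I would build two such packings. For the $\sqrt{km/(n\theta^2)}$ piece, restrict to matrices supported on $[k]$ and construct a local packing on the Grassmannian $G(m,k)$ around a reference $V_0 \in \mathbb{O}_{p,m,k}(1)$ whose non-zero rows all have norm exactly $\sqrt{m/k}$ (for instance, a scaled sub-block of a Hadamard matrix). Standard volumetric bounds on $G(m,k)$ yield log cardinality $\gtrsim m(k-m) \gtrsim mk$ at the required separation (using $k \geq 4m$), while a Davis--Kahan-type perturbation argument confirms that all row norms remain within a factor of $3/2$ of $\sqrt{m/k}$; the hypothesis $nm^2\theta^2 \geq k^2 m$ forces $\epsilon_1 \lesssim m/\sqrt{k} \leq \sqrt{m}/2$, which is small enough for the local analysis to work. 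For the $\sqrt{k\log(p/k)/(n\theta^2)}$ piece, I would vary the support using a Varshamov--Gilbert code of size $\log M \gtrsim k\log(p/k)$ on $(k/2)$-subsets of $[p]\setminus S_0$, where $S_0 \subset [p]$ is a fixed base subset of size $k/2$. Each $V_{\omega_i}$ is then obtained by rotating a common equi-row-norm reference $V_0$ by a block-diagonal orthogonal matrix acting only on $S_0 \cup \omega_i$, so that orthonormality and incoherence are inherited from $V_0$, and the perturbation scale is tuned to hit $\epsilon_2 \asymp \sqrt{k\log(p/k)/(n\theta^2)}$.

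The main technical obstacle is the second construction: one must simultaneously control exponential cardinality in the supports, orthonormality of columns on each distinct support, two-sided bounds on $\|V_iV_i^\top - V_jV_j^\top\|_{\mathrm{F}}$ (Fano requires separation from below while the KL bound requires separation from above), and the incoherence $\mu \leq 3$. Parametrising each $V_{\omega_i}$ as an orthogonal rotation of a common incoherent reference handles all of these except the separation tuning, which is a direct calculation. Combining the two Fano bounds via $\max(\sqrt{a},\sqrt{b}) \geq \sqrt{(a+b)/2}$ yields the claimed rate $\sqrt{k\{m+\log(p/k)\}/(n\theta^2)}$.
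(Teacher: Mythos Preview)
Your high-level strategy---Fano's inequality with two separate packings matching the $m$ and $\log(p/k)$ terms, combined via $\max(\sqrt a,\sqrt b)\geq\sqrt{(a+b)/2}$---is exactly what the paper does, and your KL identity $\mathrm{KL}=\frac{n\theta^2}{2(1+\theta)}L(V,V')^2$ is correct.

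There is, however, a genuine gap in your incoherence verification. For the first packing you take a generic local $\epsilon_1$-packing of $G(m,k)$ around a reference $V_0$ with equal row norms $\sqrt{m/k}$, and claim a ``Davis--Kahan-type perturbation argument'' keeps all row norms within a factor $3/2$. But a local Grassmannian packing only delivers a Frobenius bound $\|V-V_0O\|_{\mathrm F}\lesssim\epsilon_1$ for some $O\in\mathbb{O}_m$; the perturbation in any single row can be as large as $\epsilon_1$. Your own calculation gives $\epsilon_1\lesssim m/\sqrt k$, which is a factor $\sqrt m$ larger than the target row norm $\sqrt{m/k}$, so the incoherence ratio is not controlled. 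Davis--Kahan bounds subspace angles, not row norms, so it does not rescue this step. A similar issue affects your second construction: left-multiplying $V_0$ by a block-diagonal orthogonal matrix preserves column orthonormality but not row norms in general, so ``incoherence inherited from $V_0$'' is not automatic.

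The paper resolves both difficulties with a single explicit parametrisation that yields $2\to\infty$ (row-wise) rather than Frobenius control. It fixes $U\in\mathbb{O}_{p,m}$ supported on $[k]$ with row norms in $[\sqrt{4m/(5k)},\sqrt{4m/(3k)}]$, extends to $\tilde U\in\mathbb{O}_p$, and for $J\in\mathbb{O}_{p-m,m,k-m}$ sets
\[
V_J:=\tilde U\begin{pmatrix}\sqrt{1-\epsilon^2}\,I_m\\ \epsilon J\end{pmatrix}.
\]
Then $\|V_J-U\|_{2\to\infty}\leq\|\tilde U\|_{2\to\infty}\|\Delta_J\|_{\mathrm{op}}\leq\sqrt2\,\epsilon$, since $\tilde U$ has unit-norm rows and the perturbation has \emph{operator} norm $\lesssim\epsilon$. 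With $\epsilon\leq\sqrt{m/(16k)}$ (guaranteed by the hypothesis $nm^2\theta^2\geq k^2\max\{m,\log(p/k)\}$), every row norm of $V_J$ stays within a constant factor of $\sqrt{m/k}$, so $V_J\in\mathbb{O}_{p,m,k}(3)$. Both packings then arise from different choices of $J$: a Szarek packing of $\mathbb{O}_{k-m,m}$ (zero-padded) for the $mk$ piece, and $J_S$ with $J_S^{(S,\cdot)}=U^{([k],\cdot)}$ over a Gilbert--Varshamov family of subsets $S$ for the $k\log(p/k)$ piece. The operator-norm control on $\Delta_J$---unavailable from a generic volumetric packing---is the ingredient your sketch is missing.
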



An interesting aspect of Theorem~\ref{Thm:Main} is that the same conclusion holds for every $B \in \mathbb{N}$.  On the one hand, it is attractive that we do not need to make any restrictions here; however, one would also expect the statistical performance of the algorithm to improve as $B$ increases.  Indeed, this is what we observe empirically; see Figure~\ref{fig:ABselection} in Section~\ref{s:numerics}.  It turns out that we are able to demonstrate the effect of increasing $B$ theoretically in the special setting where all signal coordinates have homogeneous signal strength, i.e., $V_m\in\mathbb{O}_{p,m,k}(1)$. As illustrated by the following corollary (to Theorem~\ref{Thm:Main}) and its proof, as $B$ increases, signal coordinates are selected with increasing probability by the best projection within each group of $B$ projections, and this significantly reduces the number of groups $A$ required for rate optimal estimation.

Recall that the hypergeometric distribution $\mathrm{HyperGeom}(d,k,p)$ models the number of white balls obtained when drawing $d$ balls uniformly and without replacement from an urn containing $p$ balls, $k$ of which are white. We write $F_{\mathrm{HG}}(\cdot;d,k,p)$ for its distribution function. 
\begin{cor}
\label{Cor:Homogeneous}
In addition to the conditions of Theorem~\ref{Thm:Main}, assume that $\mu = 1$, $\theta_1 = \cdots = \theta_m$ and that $B = \bigl\lceil 2^{-1}\bigl(1-F_{\mathrm{HG}}(t-1; d,k,p)\bigr)^{-1}\bigr\rceil$ for some $t\in [k]$. Then
\[
\mathbb{P}\biggl(L(\hat V_m, V_m) > 4K\sqrt\frac{m\ell\log p}{n\theta_m^2}\biggr) \leq 2p^{-3} + pe^{-At^2/(800k^2)}.
\]
\end{cor}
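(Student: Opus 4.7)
The approach is to retrace the proof of Theorem~\ref{Thm:Main}, replacing the coarse per-group estimate that drives the $pe^{-A\theta_m^2/(50p^2\mu^8\theta_1^2)}$ failure term by the sharper bound enabled by the choice of $B$. Set $S_0 := \{j \in [p] : V_m^{(j,\cdot)} \neq 0\}$, write $S_{a,b} \subseteq [p]$ for the support of $P_{a,b}$, and let $Z_{a,b} := |S_{a,b} \cap S_0|$. Since $P_{a,b}$ is uniform on $\mathcal{P}_d$, we have $Z_{a,b} \sim \mathrm{HyperGeom}(d,k,p)$. Writing $q := 1 - F_{\mathrm{HG}}(t-1;d,k,p)$, the assumption $B \geq 1/(2q)$ yields
\[
\mathbb{P}\bigl(\textstyle\max_{b \in [B]} Z_{a,b} \geq t\bigr) \;=\; 1 - (1-q)^B \;\geq\; 1 - e^{-Bq} \;\geq\; 1 - e^{-1/2}.
\]

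\textbf{Selection and symmetry.} The homogeneity assumption $\mu=1$, $\theta_1 = \cdots = \theta_m =: \theta$, together with~\eqref{Eq:VMinVMax2}, gives $\|V_m^{(j,\cdot)}\|_2^2 = m/k$ on $S_0$, so $\sum_{r=1}^m \lambda_r(P_S \Sigma P_S) = m + (m\theta/k)|S \cap S_0|$: an affine function of $|S \cap S_0|$ with slope $m\theta/k$. Applying the RCC hypothesis~\eqref{Eq:RCC} at radius $d$ and level $\delta = p^{-3}$, Condition~\eqref{Eq:Cond2} produces an event $\mathcal{E}_0$ of probability at least $1 - 2p^{-3}$ on which $\sup_{|S|=d}\|P_S(\hat\Sigma - \Sigma)P_S\|_{\mathrm{op}}$ is much smaller than $m\theta/(2k)$, so Weyl's inequality forces $Z_{a,b^*(a)} = \max_b Z_{a,b}$ for every $a \in [A]$. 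Because the projections are independent of the data, $\mathcal{E}_0$ is independent of $(P_{a,b})$. Conditionally on $Z_{a,b^*(a)}$, the law of $S_{a,b^*(a)} \cap S_0$ is uniform on subsets of $S_0$ of that cardinality by exchangeability, so for any fixed $j \in S_0$, $\mathbb{P}(j \in S_{a,b^*(a)} \mid Z_{a,b^*(a)}) = Z_{a,b^*(a)}/k$. Combined with the first paragraph, the indicators $\xi_a^{(j)} := \mathds{1}\{j \in S_{a,b^*(a)}\}$ are, on $\mathcal{E}_0$, independent Bernoullis with success probability at least $p_0 := (1 - e^{-1/2}) t/k$.

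\textbf{Concentration and conclusion.} On $\{\xi_a^{(j)} = 1\} \cap \mathcal{E}_0$, the spectral computation from the proof of Theorem~\ref{Thm:Main} lower bounds the $a$-th summand in~\eqref{alg:aggregation} by a universal constant multiple of $m\theta/k$ (using that the population top-$m$ eigenspace of $P_S \Sigma P_S$ equals the column span of $P_S V_m$, whose associated rank-$m$ projection has $(j,j)$-diagonal entry $\|V_m^{(j,\cdot)}\|_2^2 = m/k$, together with a Davis-Kahan estimate to pass to the sample eigenvectors), while the remaining summands are non-negative. Hence $\hat w^{(j)} \gtrsim (m\theta/k)\bar\xi^{(j)}$ with $\bar\xi^{(j)} := A^{-1}\sum_a \xi_a^{(j)}$, and Hoeffding's inequality for a sum of independent Bernoullis gives
\[
\mathbb{P}\bigl(\bar\xi^{(j)} < p_0/2\bigr) \;\leq\; \exp\bigl(-A p_0^2/2\bigr) \;\leq\; \exp\bigl(-At^2/(800k^2)\bigr).
\]
A union bound over $j \in [p]$, combined with the (data-only, hence $\mathcal{E}_0$-driven) upper bound on $\hat w^{(j)}$ for noise coordinates $j \notin S_0$ already established in Theorem~\ref{Thm:Main}, then shows that $\hat S \supseteq S_0$ off an event of probability at most $2p^{-3} + pe^{-At^2/(800k^2)}$; on this event, the loss bound $4K\sqrt{m\ell \log p/(n\theta_m^2)}$ is inherited directly from the RCC/Davis-Kahan estimate that closed out the proof of Theorem~\ref{Thm:Main}. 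The main obstacle is verifying the selection step: it is the $m\theta/k$ separation between consecutive values of $\sum_r \lambda_r(P_S\Sigma P_S)$, rather than the much smaller single-eigenvalue gap, that must survive the RCC noise, and it is precisely the homogeneity assumption that couples all $m$ top eigenvalues to produce this clean separation.
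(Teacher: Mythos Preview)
Your overall strategy is right—re-run the proof of Theorem~\ref{Thm:Main} with a sharper lower bound on the signal inclusion probability $q_j$—but there is a genuine gap in the symmetry step.

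You assert that, on $\mathcal{E}_0$ and conditionally on $Z_{a,b^*(a)}$, the set $S_{a,b^*(a)}\cap S_0$ is uniform on subsets of $S_0$ of the given size ``by exchangeability''. This is not justified. On $\mathcal{E}_0$ one does get $Z_{a,b^*(a)}=\max_{b} Z_{a,b}=:M$, but when several indices $b$ achieve this maximum (the event $R>1$ in the paper's notation), the actual $b^*(a)$ is picked by $\mathrm{sargmax}$ of the \emph{sample} eigenvalue sums $\sum_r\lambda_r(\hat\Sigma^{(S_{a,b},S_{a,b})})$. Conditional on $X$ this tie-breaking can systematically favour subsets containing certain signal coordinates over others, and nothing in the hypotheses ($\mu=1$ only equalises the diagonal of $\Sigma$ on $S_0$; $Q$ is an arbitrary RCC distribution) makes the law of $\hat\Sigma$ invariant under permutations of $S_0$. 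So neither conditional on $X$ nor unconditionally is the uniformity claim available. The paper sidesteps this precisely by restricting to the event $\{R=1\}$, on which $b^*(a)$ is the \emph{unique} maximiser of $Z_{a,b}$ and hence a function of the projections alone; the needed inequality $\mathbb{P}(M\geq t,\,R=1)\geq 1/4$ for the prescribed $B$ is the content of Lemma~\ref{Lem:UniqueMax}, which your argument is missing. Your simpler bound $\mathbb{P}(M\geq t)\geq 1-e^{-1/2}$ is correct but does not control the tie event.

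There is also a second, smaller gap in the concentration step. You lower bound $\hat w^{(j)}$ via $\bar\xi^{(j)}$ and then appeal to a ``data-only'' upper bound on $\hat w^{(j')}$ for noise $j'$ from Theorem~\ref{Thm:Main}. No such bound is established there: on $\Omega_{\mathrm{RCC}}$ one only has $0\leq \hat w_a^{(j')}\leq m\theta/(4k)$ when $j'\in S_{a,b^*(a)}$, so $\hat w^{(j')}$ still depends on the (projection-driven) fraction of groups in which $j'$ is selected. With only the trivial bound $\hat w^{(j')}\leq m\theta/(4k)$, your lower bound $\hat w^{(j)}\gtrsim (m\theta/k)\cdot p_0$ cannot beat it unless $t$ is of order $k$. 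The paper avoids this by applying Hoeffding, conditional on $X$, to the \emph{difference} $\hat w_a^{(j)}-\hat w_a^{(j')}$, after first using the injection argument $q_{j'}\leq q_j$ together with the improved $q_j\geq t/(4k)$ to get $\mathbb{E}(\hat w_a^{(j)}-\hat w_a^{(j')}\mid X)\geq tm\theta_m/(8k^2)$; this is what produces the exponent $At^2/(800k^2)$.
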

Since in this corollary, we use Lemma~\ref{Lem:UniqueMax} instead of \eqref{Eq:SignalCoordinateProbability2} to control the inclusion probability of signal coordinates, the condition $d\geq k$ from Theorem~\ref{Thm:Main} is in fact no longer needed. We note that for any fixed $t$, the function $F_{\mathrm{HG}}(t-1; d, k, p)$ is decreasing with respect to $d \in [p]$. Thus, Corollary~\ref{Cor:Homogeneous} also illustrates a computational trade-off between the choice of $d$ and $B$. This trade-off is also demonstrated numerically in Figure~\ref{fig:d_and_AB}.

Finally, we remark that our algorithm allows us to understand the statistical and computational trade-off in SPCA in a more refined way. Recall that in the limiting case when $B=\infty$,  the estimator produced by Algorithm~\ref{Algo:SPCAvRPsub} (with $d =\ell = k$ and, for the simplicity of discussion, $m=1$) is equal to the estimator $\hat{v}_1$ given in~\eqref{opt_est}, i.e.~the leading $k$-sparse eigenvector of $\hat{\Sigma}$. In fact, this is already true with high probability for $B\gtrsim \binom{p}{k}$. Hence, for $B$ exponentially large, the SPCAvRP estimator is minimax rate optimal as long as $n\gtrsim mk\theta_m^{-2}\log p$, which corresponds to the intermediate effective sample size regime defined in \citet{Wang2016}. For such a choice of $B$, however, Algorithm~\ref{Algo:SPCAvRPsub} will not run in polynomial time, which is in agreement with the conclusion of \citet{Wang2016} that there is no randomised polynomial time algorithm that can attain the minimax rate of convergence in this intermediate effective sample size regime. On the other hand, as mentioned above, SPCAvRP is minimax rate optimal, using only a polynomial number of projections, in the high effective sample size regime as discussed after Theorem~\ref{Thm:Main}. Therefore, the flexibility in varying the number of projections in our algorithm allows us to analyse its performance in a continuum of scenarios ranging from where consistent estimation is barely possible, through to high effective sample size regimes where the estimation problem is much easier. 

\section{Numerical experiments}\label{s:numerics}

In this section we demonstrate the  performance of our proposed method in different examples and discuss the practical choice of its input parameters. We also compare our method with several existing sparse principal component estimation algorithms on both simulated and experimental data. All experiments were carried out using the \textsf{R} package `SPCAvRP' \citep{GWS2017}.

\subsection{Choice of input parameters}\label{ss:inputparameters}

\subsubsection{Choice of \texorpdfstring{$A$}{A} and \texorpdfstring{$B$}{B}}

In Figure~\ref{fig:ABselection}, we show that choosing $B>1$, which ensures that we make a non-trivial selection within each group of projections, considerably improves the statistical performance of the SPCAvRP algorithm. Specifically, we see that using the same total number of random projections, our two-stage procedure has superior performance over the naive aggregation over all projections, which corresponds to setting $B = 1$ in the SPCAvRP algorithm.   Interestingly, Figure~\ref{fig:ABselection} shows that simply increasing the number of projections, without performing a selection step, does not noticeably improve the performance of the basic aggregation. We note that even for the relatively small choices $A = 50$ and $B=25$, the SPCAvRP algorithm does significantly better than the naive aggregation over $180000$ projections.

\begin{figure}[htbp]
 \centering
  \hspace{2em} \scriptsize{$v_1=k^{-1/2} (\boldsymbol{1}_{k}^\top, \boldsymbol{0}_{p-k}^\top)^{\top}$} \hspace{11em} \scriptsize{$v_1\propto (k,k-1,\ldots,1,\boldsymbol{0}_{p-k}^\top)^{\top}$}\\
 \includegraphics[scale=0.55]{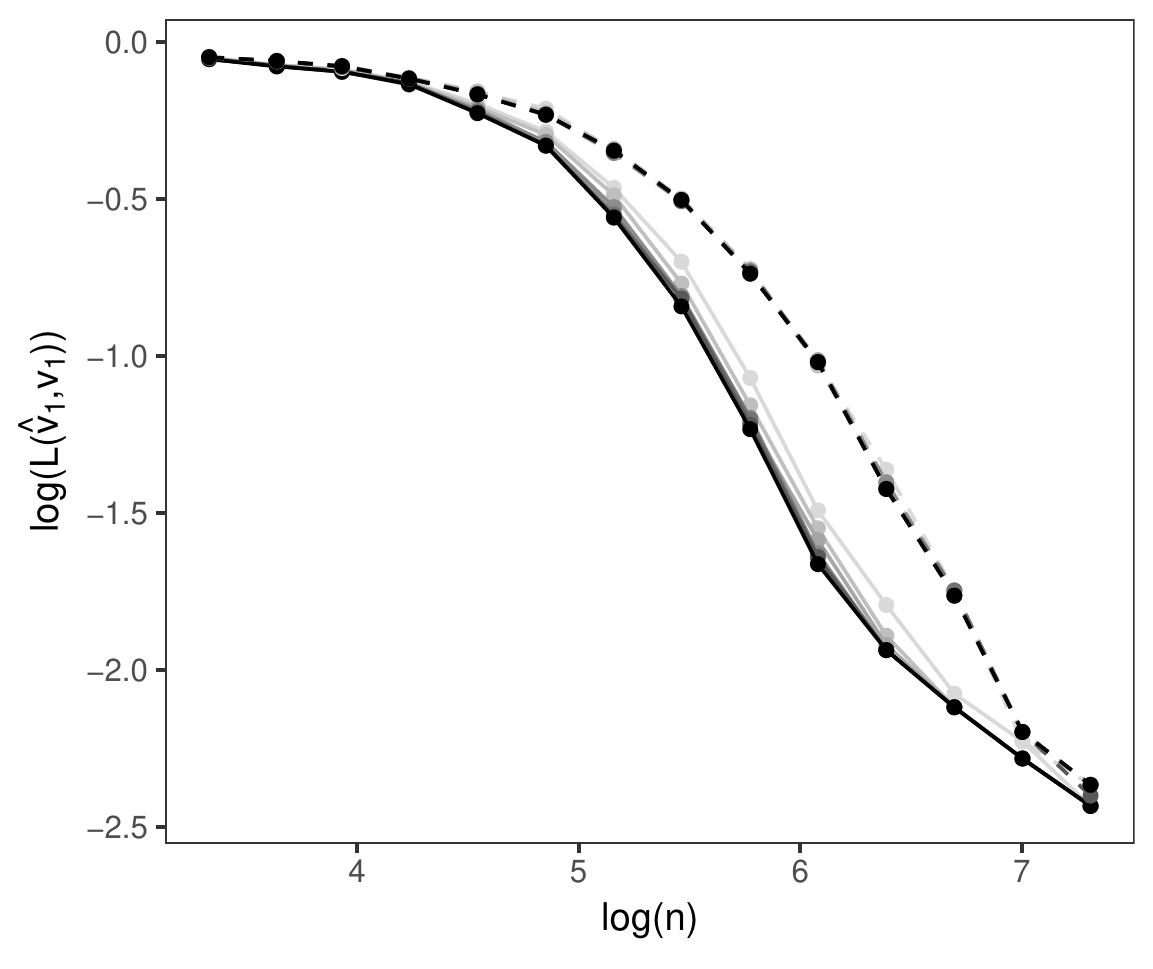}
 \includegraphics[scale=0.55]{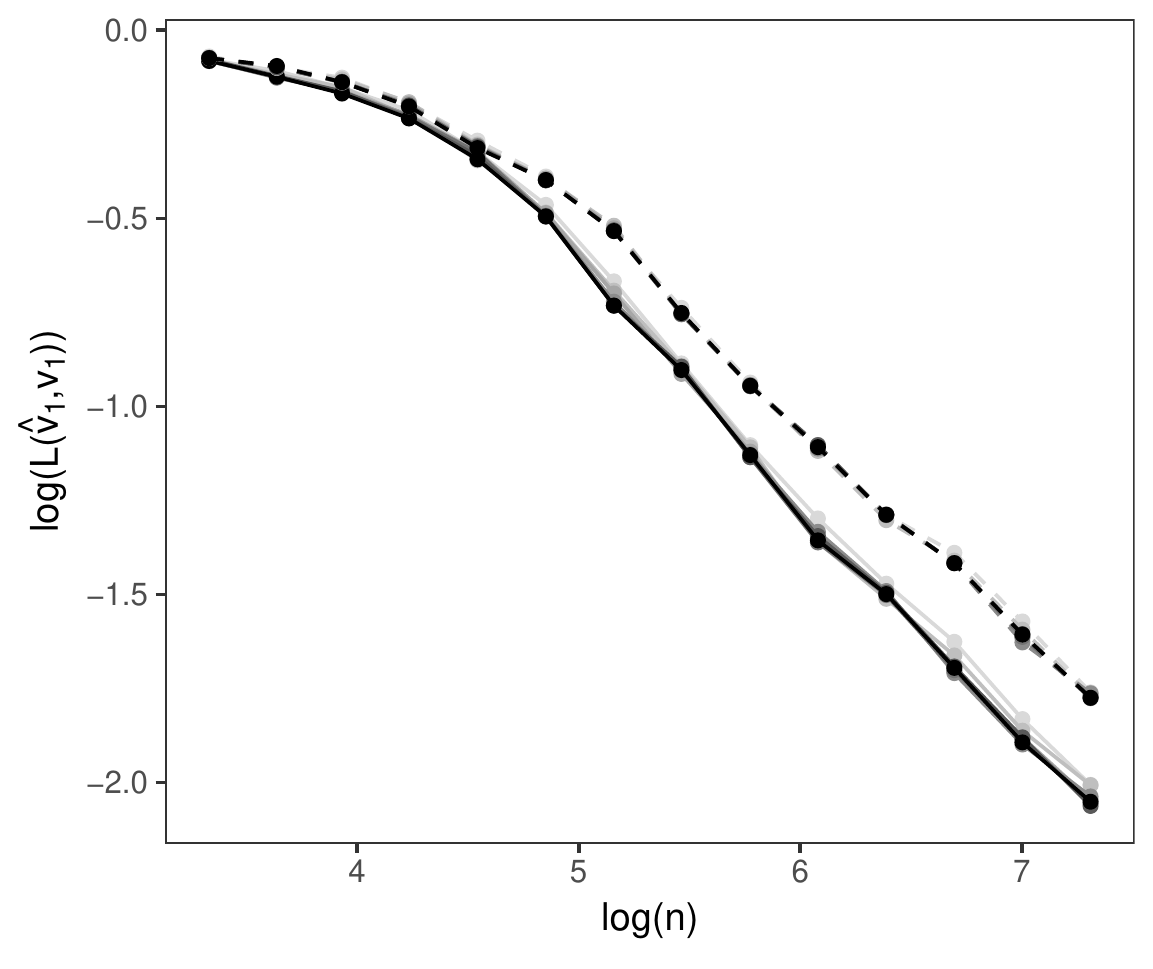}
  \caption{\textbf{Average loss $L(\hat v_1,v_1)$ against the sample size $n$, on the log-log scale, 
when $B=1$ (dashed lines) and $B>1$ (solid lines).} In each case, $n$ observations are generated from $N_p(0,I_p + v_1v_1^\top)$, with $p=50$, $k=7$ and $v_1$ written above each panel, and the loss $L(\hat v_1,v_1)$ is computed for $\hat v_1$ as in Algorithm \ref{Algo:SPCAvRP}, with $d=\ell=k$ and $A$ and $B$ selected as described next, which is then averaged over 100 repetitions.  Light to dark grey solid lines: $A\in\{50,100,200,300,400,500,600\}$ and $B=A/2$. 
Light to dark grey dashed lines: ${A}\in\{50\times25,100\times50, 200\times100, 300\times150, 400\times200, 500\times250, 600\times300\}$ and $B=1$.}
  \label{fig:ABselection}
\end{figure}

Figure~\ref{fig:ABselection_fixed} demonstrates the effect of increasing either $A$ or $B$ while keeping the other fixed. We can see from the left panel of Figure~\ref{fig:ABselection_fixed} that increasing $A$ steadily improves the estimation quality, especially in the medium effective sample size regime and when $A$ is relatively small.  This agrees with the result in Theorem~\ref{Thm:Main}, where the bound on the probability of attaining the minimax optimal rate improves as $A$ increases. Thus, in practice, we should choose $A$ to be as large as possible subject to our computational budget.   
The choice of~$B$, however, is a little more delicate.  In some settings, such as the single-spiked, homogeneous model in the right panel of Figure~\ref{fig:ABselection_fixed}, the performance appears to improve as $B$ increases, though the effect is only really noticeable in the intermediate effective sample size regime.  
On the other hand, we can also construct examples where as $B$ increases, some signal coordinates will have increasingly high probability of inclusion compared with other signal coordinates, making the latter less easily distinguishable from the noise coordinates. Hence the performance does not necessarily improve as $B$ increases; see Figure~\ref{Fig:Btradeoff}.

In general, we find that $A$ and $B$ should increase with $p$. Based on our numerical experiments, we suggest using $B = \lceil A/3\rceil$ with $A =300$ when $p\approx100$, and $A = 800$ when $p\approx1000$.

\begin{figure}[htbp]
 \centering
 \hspace{2em} \scriptsize{$B=50$} \hspace{18em} \scriptsize{$A=200$}\\
 \includegraphics[scale=0.55]{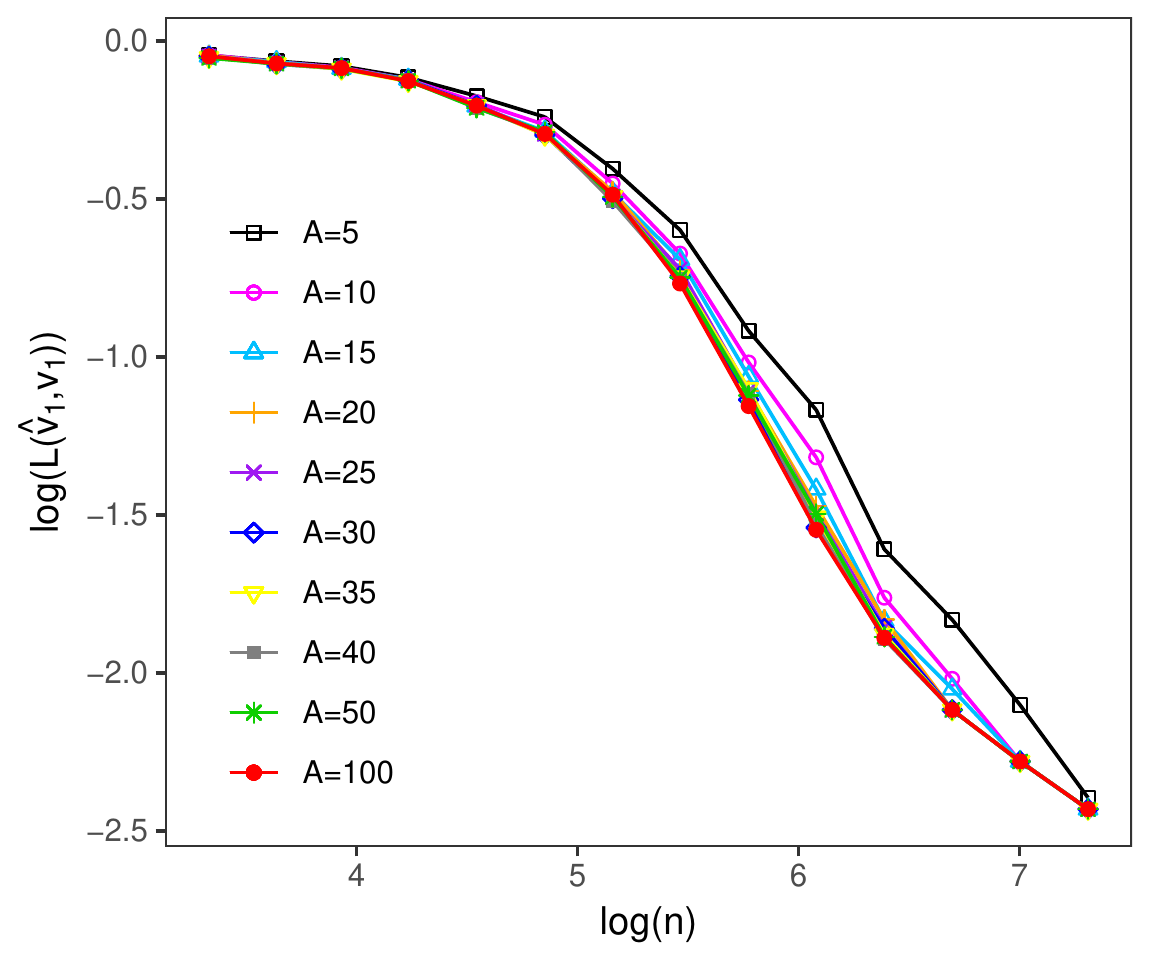} \includegraphics[scale=0.55]{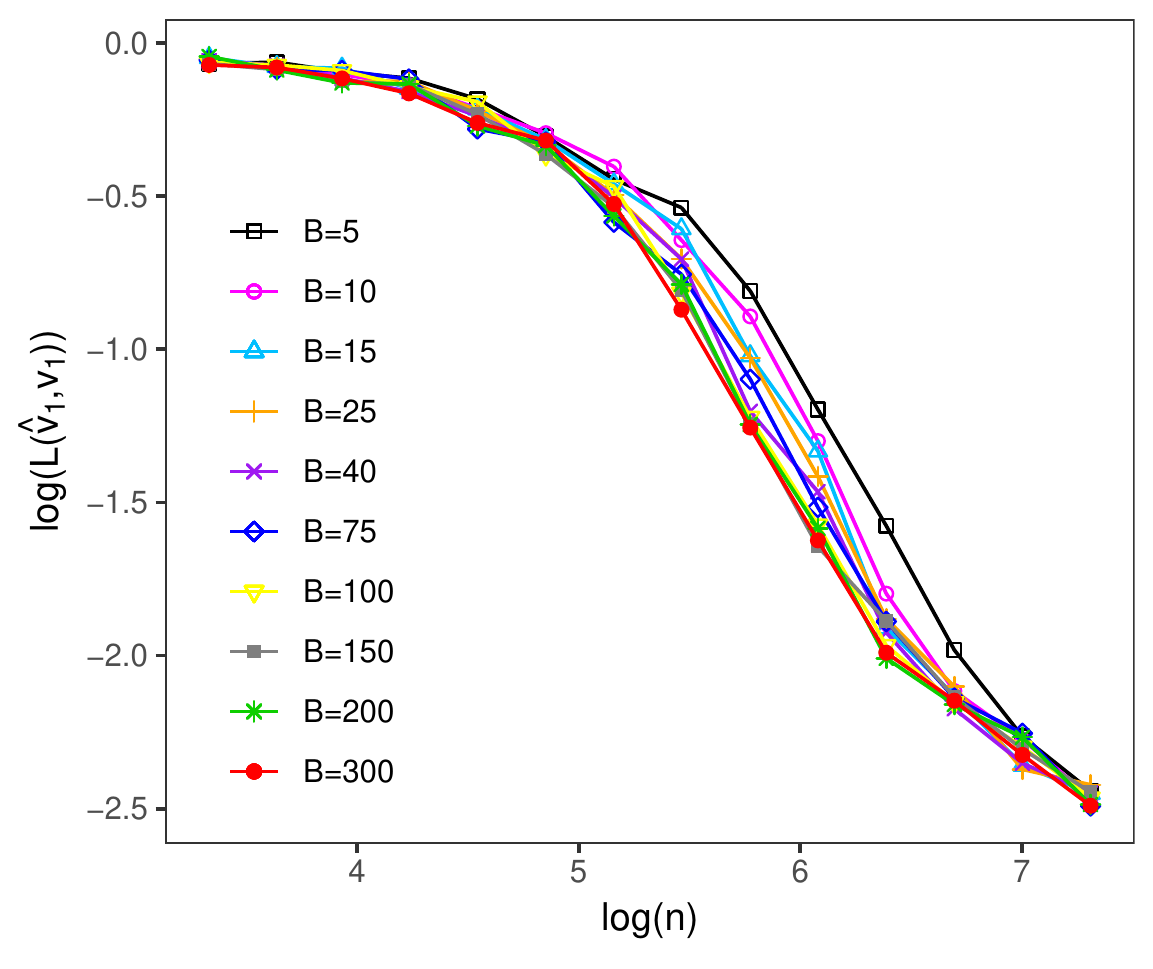}
  \caption{\textbf{Average loss $L(\hat v_1,v_1)$ as the sample size $n$ increases for different choices of $A$ or $B$.} In the left panel, $B=100$ and $A$ is varied; on the right, $A = 200$ and $B$ is varied. In both panels, the distribution is $N_p(0,I_p + v_1v_1^\top)$ with $v_1 = k^{-1/2} (\boldsymbol{1}_{k}^\top, \boldsymbol{0}_{p-k}^\top)^{\top}$, $p=50$, $k=7$, and other algorithmic parameters are $d=l=7$.}
  \label{fig:ABselection_fixed}
\end{figure}

\begin{figure}[htbp]
 \centering
 \includegraphics[scale=0.55]{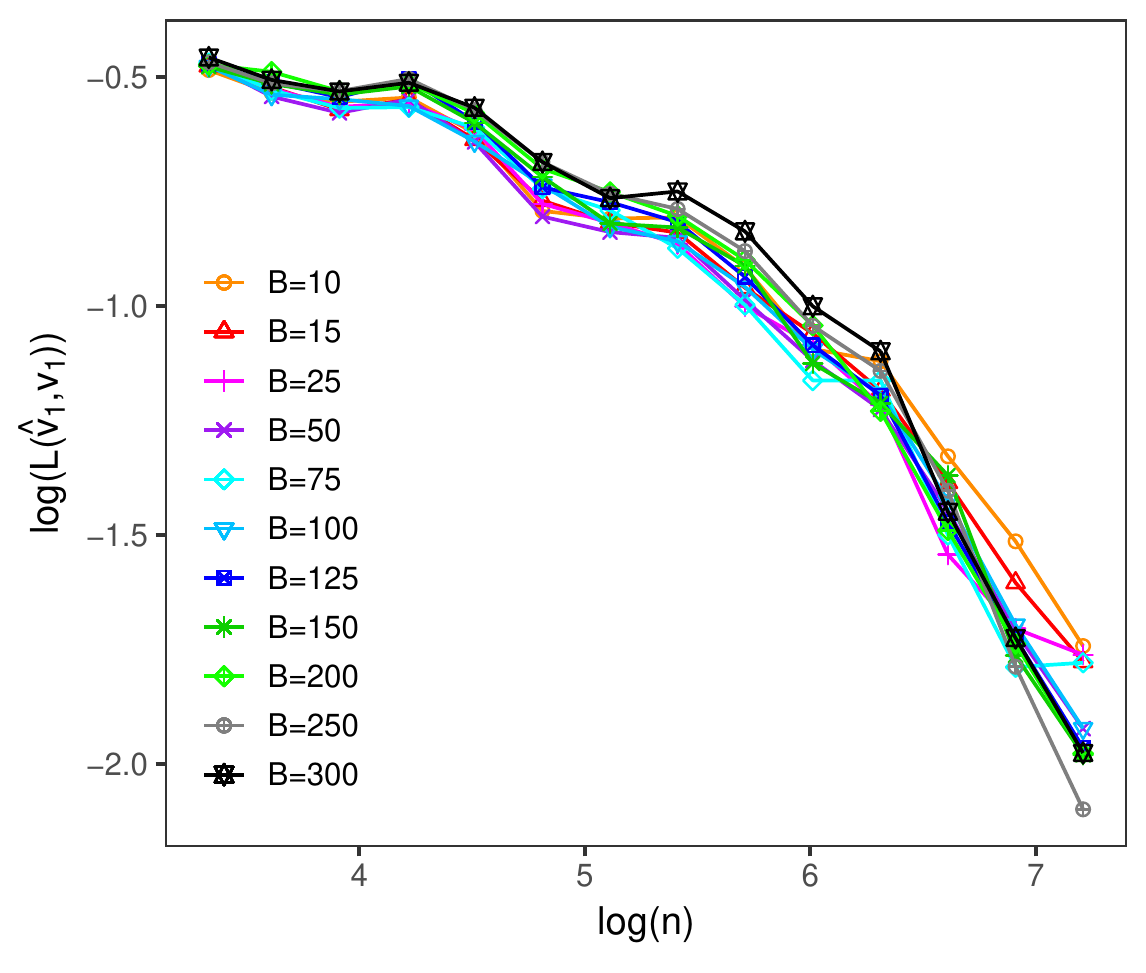} \includegraphics[scale=0.55]{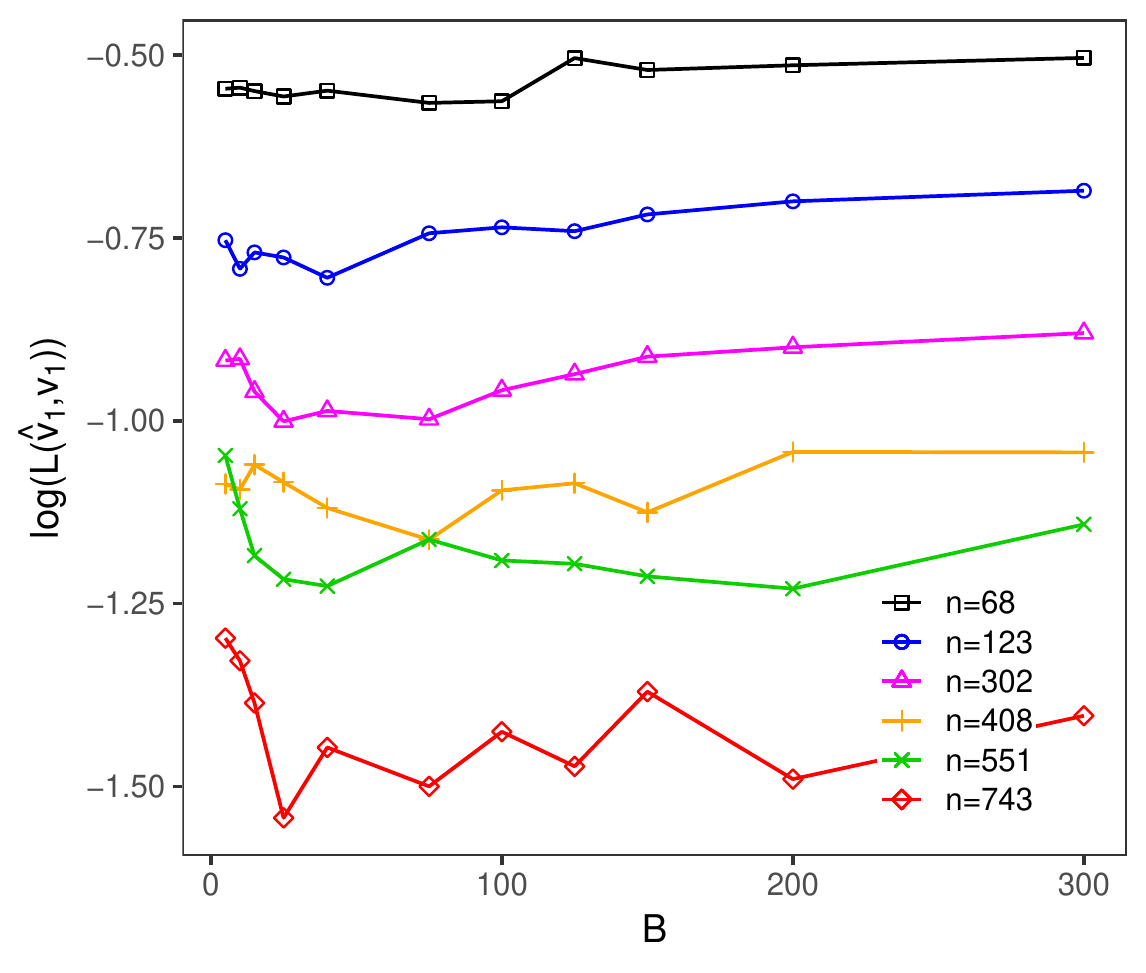}
  \caption{\textbf{Trade-off in the choice of $B$.} Left panel: the average loss as a function of  $n$, on the log-log scale, where $B$ is varied. Right panel: the logarithm of average loss as a function of $B$, where $n$ is varied. In both panels, the distribution is $N_p(0,I_p + 10 v_1v_1^\top + 9 v_2v_2^\top)$ with $v_1 = k^{-1/2} (\boldsymbol{1}_{k}^\top, \boldsymbol{0}_{p-k}^\top)^{\top}$, $v_2 = k^{-1/2} (\boldsymbol{0}_{3}^\top, -1, 1,-1, 1,-1, 1, 1, \boldsymbol{0}_{p-k-3}^\top)^{\top}$, $p=50$, $k=7$, and algorithmic parameters $A=200$, $d=l=7$.}
  \label{Fig:Btradeoff}
\end{figure}

\subsubsection{Choice of \texorpdfstring{$d$}{d}}\label{ss:ld_choice}
So far in our simulations we have assumed that the true sparsity level $k$ is known and we took the dimension $d$ of the random projections to be equal to $k$, but in practice $k$ may not be known in advance.  In Figure~\ref{fig:selecting_d}, however, we see that for a wide range of values of $d$, the loss curves are relatively close to each other, indicating the robustness of the SPCAvRP algorithm to the choice of $d$. For the homogeneous signal case, the loss curves for different choices of $d$ merge in the high effective sample size regime, whereas in the intermediate effective sample size regime, we may in fact see improved performance when $d$ exceeds $k$. In the inhomogeneous case, the loss curves improve as $d$ increases up to $k$ and then exhibit little dependence on $d$ when $d\geq k$.

\begin{figure}[htbp]
 \centering
 \hspace{2em} \scriptsize{$v_1=k^{-1/2} (\boldsymbol{1}_{k}^\top, \boldsymbol{0}_{p-k}^\top)^{\top}$} \hspace{11em} \scriptsize{$v_1\propto (k,k-1,\ldots,1,\boldsymbol{0}_{p-k}^\top)^{\top}$}\\
 \includegraphics[scale=0.55]{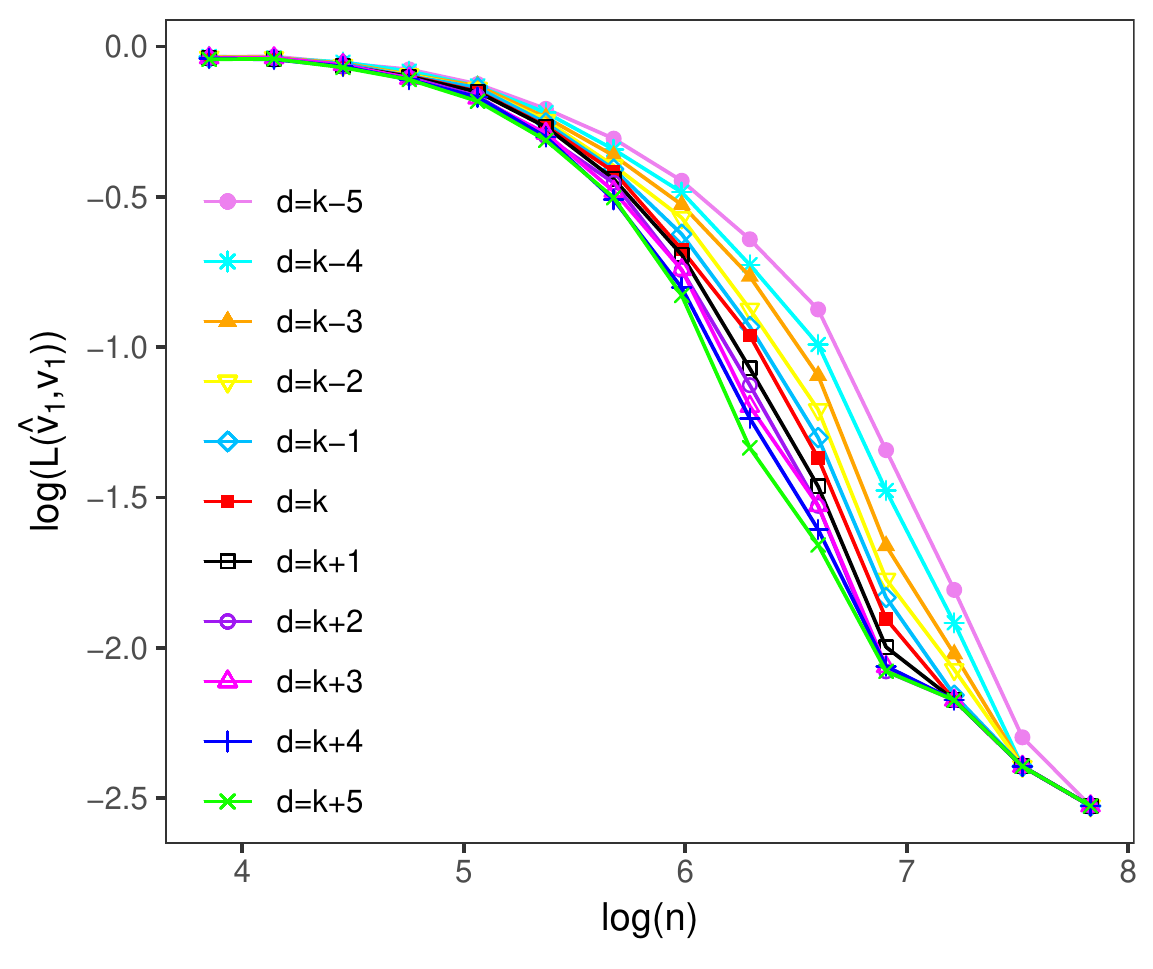} \includegraphics[scale=0.55]{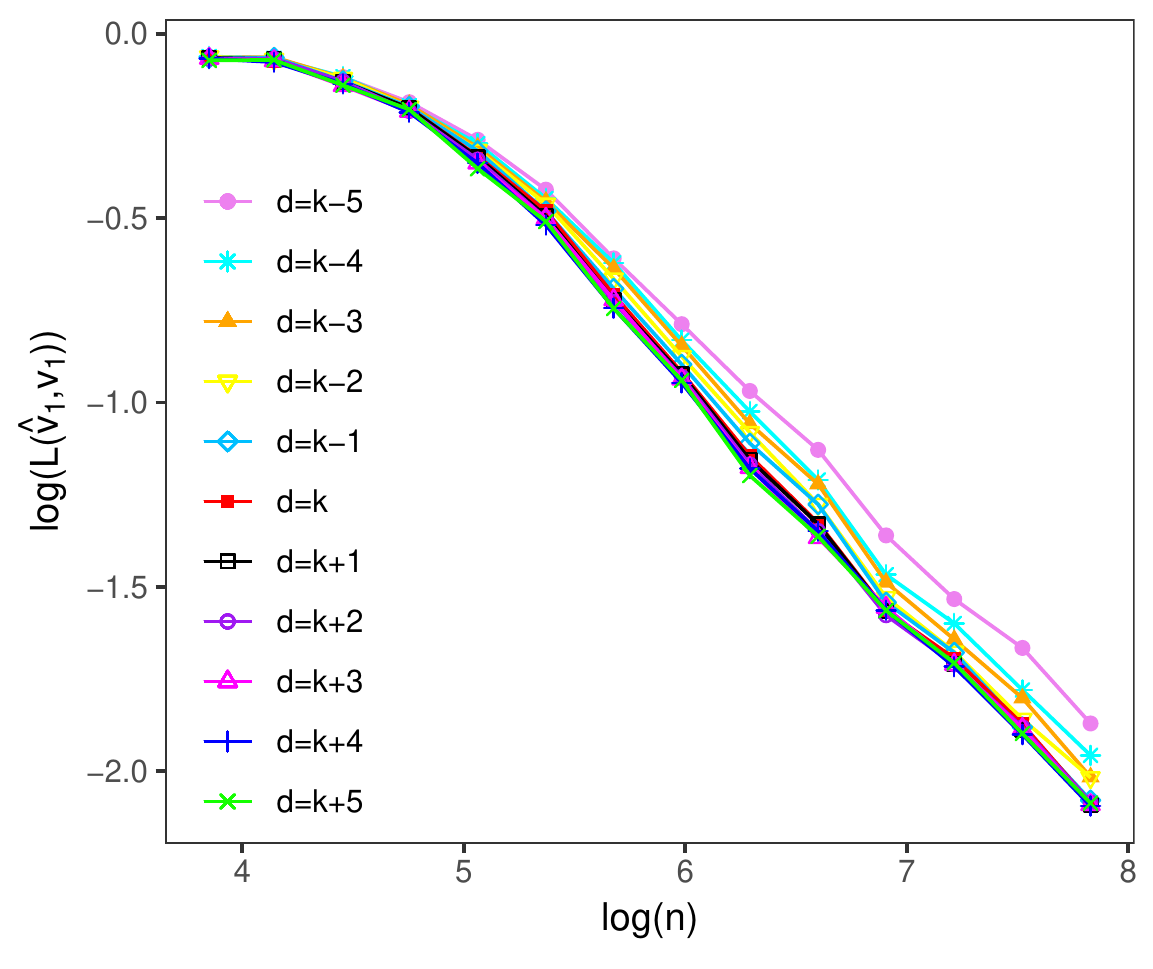}
  \caption{\textbf{Average loss $L(\hat v_1,v_1)$ as  $n$ increases for different choices of $d$.} The distribution is $N_p(0,I_p + v_1v_1^\top)$ with $p=100$, $k=10$ and $v_1$ written above each panel. Other algorithmic parameters: ${A}=150$, ${B}=50$, $\ell=k$.}
  \label{fig:selecting_d}
\end{figure}

Although decreasing $d$ reduces computational time, for a smaller choice of $d$ it is then less likely that each signal coordinate will be selected in a given random projection. This means that a smaller $d$ will require a larger number of projections $A$ and $B$ to achieve desired accuracy, thereby increasing computational time.  To illustrate this computational trade-off, in Figure~\ref{fig:d_and_AB}, for a single spiked homogeneous model, we plot the trajectories of the average loss as a function of time (characterised by the choices of $A$ and $B$), for different choices of $d$.  Broadly speaking, the figures reveal that choosing $d < k$ needs to be compensated by a very large choice of $A$ and $B$ to achieve similar statistical performance to that which can be obtained with $d$ equal to, or even somewhat larger than, $k$.  

\begin{figure}[htbp]
 \centering
 \hspace{1.5em} \scriptsize{$A\in\{30,\ldots,300\}$, $B=50$} \hspace{5.7em} \scriptsize{$A=100$, $B\in\{20,\ldots,300\}$} \hspace{5.5em} \scriptsize{$A\in\{30,\ldots,300\}$, $B=A/2$}\\
  \includegraphics[scale=0.46]{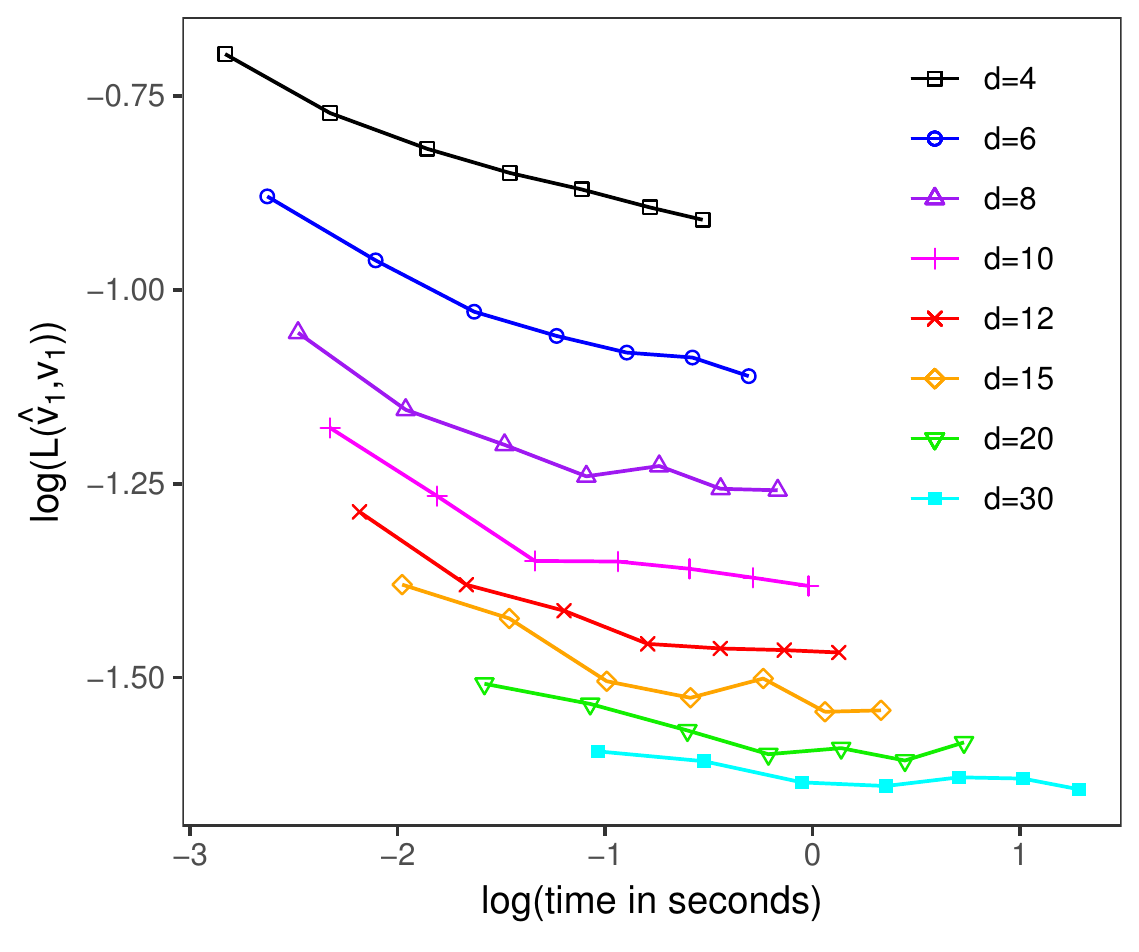}  \includegraphics[scale=0.46]{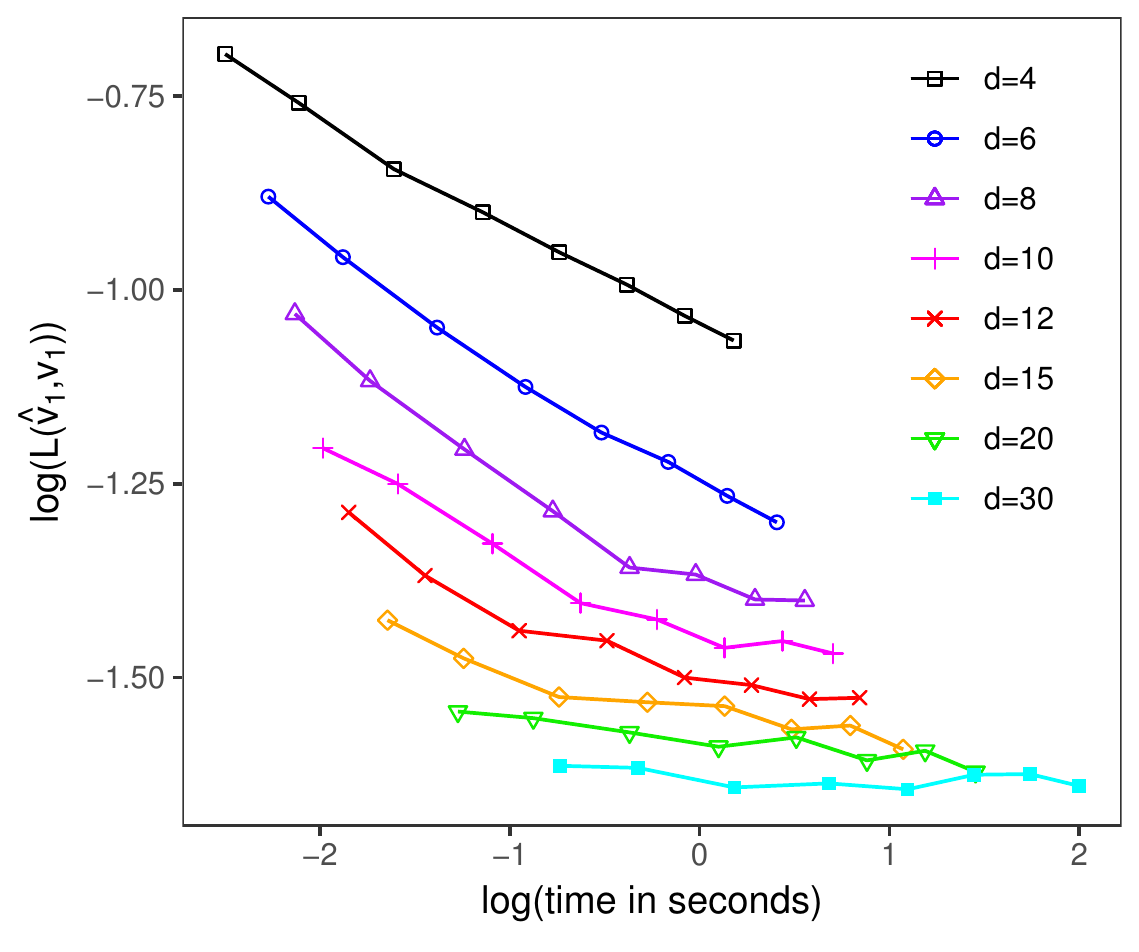} \includegraphics[scale=0.46]{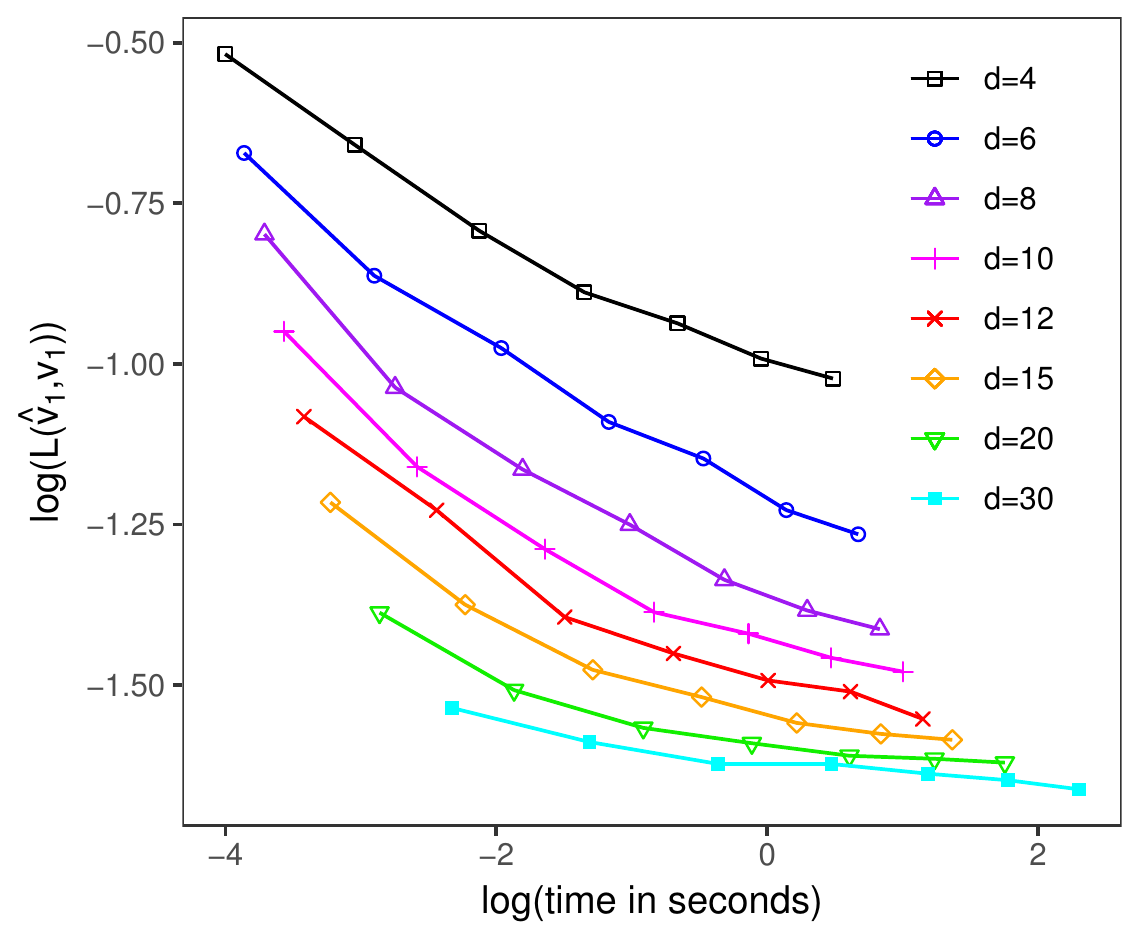}
 \caption{\textbf{Computational trade-off in the choice of $d$ and $A$ and $B$.} We generated $n=600$ observations from distribution  $N_p(0,I_p + v_1v_1^\top)$, where $p=100$, $k=10$, $v_1=k^{-1/2} (\boldsymbol{1}_{k}^\top, \boldsymbol{0}_{p-k}^\top)^{\top}$. For a fixed $d\in\{4,\ldots,30\}$ we plot the trajectory realised when increasing $A$ (left); $B$ (middle); both $A$ and $B$ (right).}
 \label{fig:d_and_AB}
\end{figure}

In practice, we suggest using $d=k$  where $k$ is known, but when $k$ is not given in advance, we would advocate erring on the side of projecting into a subspace of dimension slightly larger than the sparsity level of the true eigenvectors, as this allows a significantly smaller choice of $A$ and $B$, which results in an overall time saving. 

\subsubsection{Choice of  \texorpdfstring{$\ell$}{l}}\label{ss:selection_l}

The parameter $\ell$ corresponds to the sparsity of the computed estimator; 
large values of $\ell$ increase the chance that signal coordinates are discovered but also increase the probability of including noise coordinates. This statistical trade-off 
is typical for any algorithm that aims to estimate the support of a sparse eigenvector.  It is worth noting that many of the SPCA algorithms proposed in the literature have a tuning parameter corresponding to the sparsity level, and thus cross-validation techniques have been proposed in earlier works \citep[e.g.][]{Witten2009}.

A particularly popular approach in the SPCA literature \citep[e.g.][]{Shen2008} is to choose $\ell$  by inspecting the total variance.  More precisely, for each $\ell$ on a grid of plausible values, we can compute an estimate $\hat{v}_{1,\ell} \in \mathcal{B}_0^{p-1}(\ell)$ 
and its explained variance $\mathrm{Var}_{\ell} := \hat{v}_{1,\ell}^\top \hat\Sigma \hat{v}_{1,\ell}$, and then plot $\mathrm{Var}_\ell$ against $\ell$. As can be seen from Figure~\ref{fig:selecting_k}, $\mathrm{Var}_{\ell}$ 
increases with $\ell$, but plateaus off for $\ell\geq k$. An attractive feature of our algorithm  is that this procedure does not significantly increase the total computational time, since there is no need to re-run  the entire algorithm for each value of~$\ell$. Recall that $\hat w$ in \R{alg:aggregation} of Algorithm~\ref{Algo:SPCAvRP} ranks the coordinates by their importance. Therefore, we only need to compute $\hat w$ once and then calculate $\mathrm{Var}_{\ell}$ by selecting the top $\ell$ coordinates in $\hat{w}$ for each value of~$\ell$. 


In cases where higher-order principal components need to be computed, namely when $m>1$, we can choose $\ell=\mathrm{nnzr}(V_{m})$ in Algorithm~\ref{Algo:SPCAvRPsub}, and $\ell_r=\|v_{r}\|_0$, $r\in[m]$, in Algorithm~\ref{Algo:SPCAvRPdefl}, when these quantities are known.  If this is not the case, we can choose $\ell$ in Algorithm~\ref{Algo:SPCAvRPsub} in a similar fashion as described above, by replacing $\hat{v}_{1,\ell}$ with $\hat{V}_{m,\ell}$ where $\mathrm{nnzr}(\hat V_{m,\ell})\leq \ell$, or we can choose $\ell_r$ by inspecting the total variance at each iteration $r$ of Algorithm~\ref{Algo:SPCAvRPdefl}.

\begin{figure}[htbp]
 \centering
 \hspace{2em} \scriptsize{$v_1=k^{-1/2} (\boldsymbol{1}_{k}^\top, \boldsymbol{0}_{p-k}^\top)^{\top}$} \hspace{11em} \scriptsize{$v_1\propto (k,k-1,\ldots,1,\boldsymbol{0}_{p-k}^\top)^{\top}$}\\
 \includegraphics[scale=0.55]{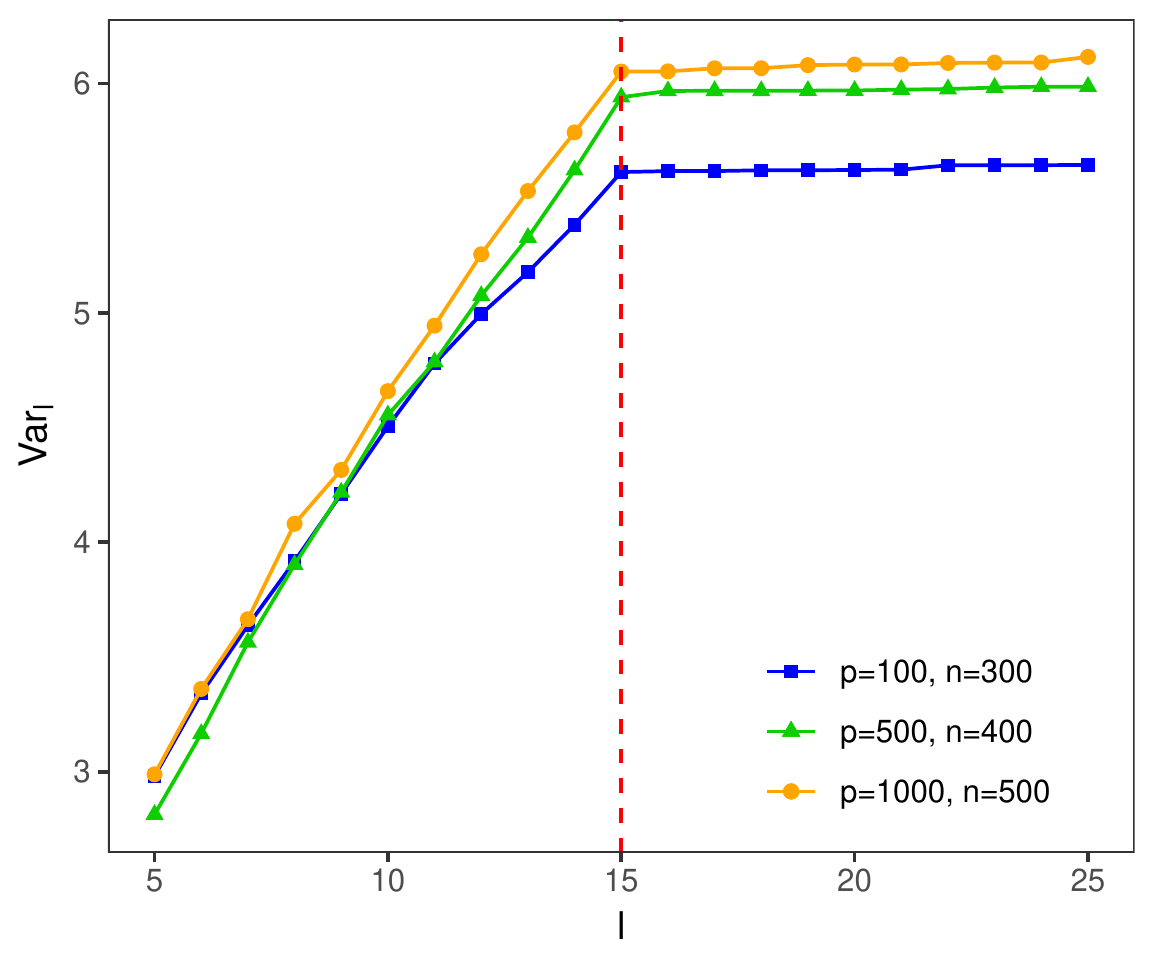} \includegraphics[scale=0.55]{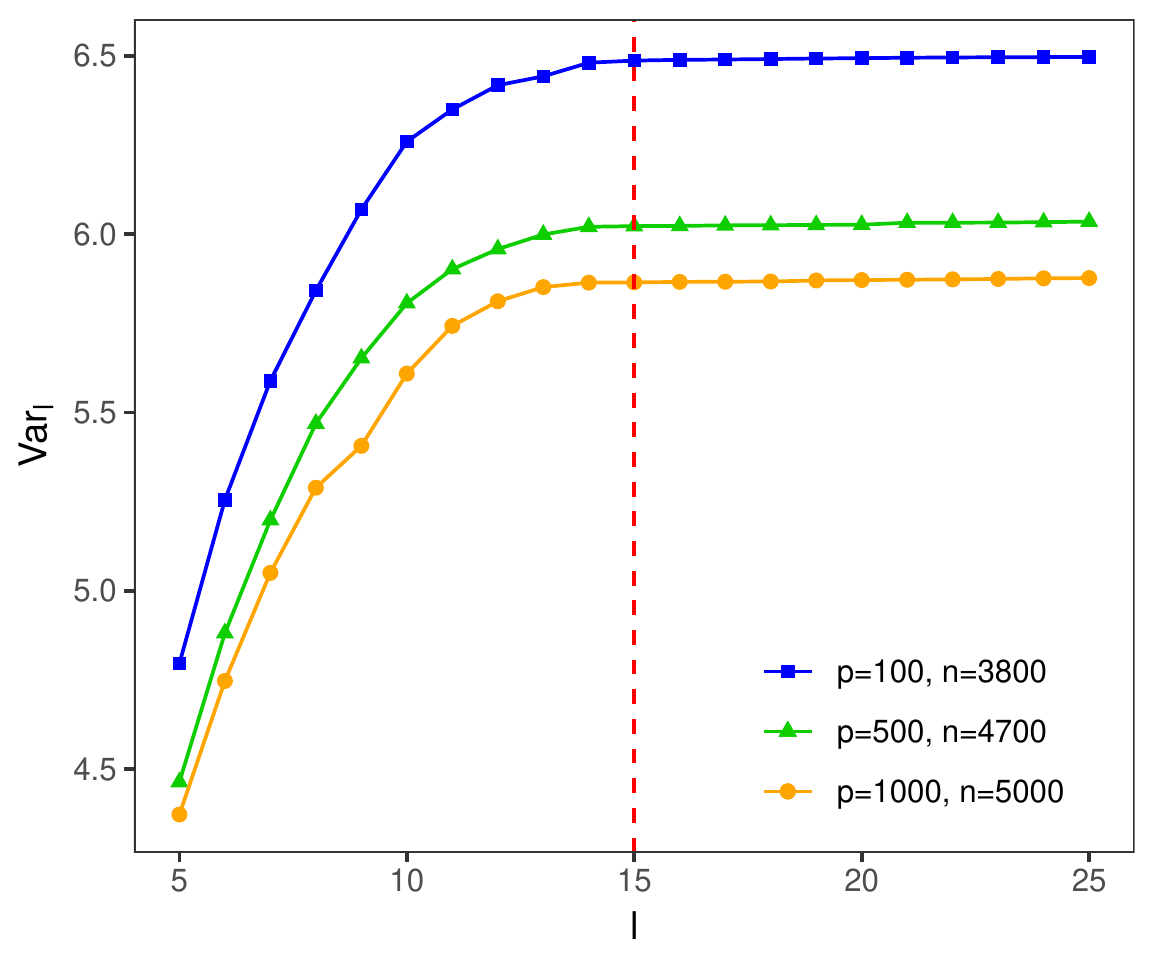} 
  \caption{\textbf{Selecting $\ell$ by inspecting the total variance $\mathrm{Var}_{\ell}$}. Observations are generated from $N_p(0,I_p + 5 v_1v_1^\top)$ with $v_1$ as written above each panel and $k=10$. SPCAvRP is used with parameters $d=10$, $A=300$, $B=100$. }
  \label{fig:selecting_k}
\end{figure}

\subsection{Comparison with existing methods}\label{ss:comparison}

In this subsection, we compare our method with several existing approaches for SPCA.  We first present examples where only the first principal component is computed, followed by examples of higher-order principal component estimation and an illustration on some genetic data.

\subsubsection{First principal component}\label{sss:comparison-single} 

In addition to the example presented in Figure \ref{fig:iterative_algs} of the introduction, we consider four further examples with 
data generated from a $N_p(0,\Sigma)$ distribution, where $\Sigma$ takes one of the two forms below:
\begin{align}\label{mod_compare}
 \Sigma_{(1)} =  \begin{pmatrix}
           2 J_{k} & & \\
                    & J_{k} & \\
                    & & \mathbf{0}
          \end{pmatrix} + I_{p}, \qquad 
 \Sigma_{(2)} = \begin{pmatrix}
           kJ_{k}  &  & \\
                  & 0.99k J_{3k} &\\
                  & & I_{(p-4k)}
          \end{pmatrix}+0.01 I_{p},
\end{align}
with different choices of $p\in\{100,200,1000,2000\}$ and $k\in\{10,30\}$.
Observe that $v_1=k^{-1/2} (\boldsymbol{1}_{k}^\top, \boldsymbol{0}_{p-k}^\top)^{\top}$ in all of these examples. The covariance matrix $\Sigma_{(1)}$ is double-spiked with $\theta_1=2$, $\theta_2=1$ and  $v_2=k^{-1/2} (\boldsymbol{0}_{k}^\top, \boldsymbol{1}_{k}^\top, \boldsymbol{0}_{p-2k}^\top)^{\top}$. We compare the empirical performance of our algorithm with methods proposed by \citet{Zou2006, Shen2008, d'Aspremont2008, Witten2009} and \citet{Ma2013}, as well as the SDP method mentioned in the introduction, by computing the average loss for each algorithm over 100 repetitions on the same set of data. We note that these are all iterative methods, whose success, with the exception of the SDP method, depends on good initialisation, so we recall their default choices.  The methods by \citet{Zou2006, Shen2008} and \citet{Witten2009} use eigendecomposition of the sample covariance matrix, i.e.~classical PCA, to compute their initial point, while \cite{d'Aspremont2008} and \cite{Ma2013} select their initialisation according to largest diagonal entries of $\hat\Sigma$.

In Figure~\ref{fig:compare}, we see that while the average losses of all algorithms decay appropriately with the sample size $n$ in the double-spiked $\Sigma_{(1)}$ setting, most of them perform very poorly in the setting of $\Sigma_{(2)}$, where the spiked structure is absent.  Indeed, only the SPCAvRP and SDP algorithms produce consistent estimators in both settings, but the empirical performance of the SPCAvRP algorithm is much better in both of the top panels; moreover, since SDP takes such a long time when $p \in \{1000,2000\}$, we do not present it in the bottom panels of Figure~\ref{fig:compare}.  

\begin{figure}[htbp]
 \centering
  \hspace{2em} {\scriptsize{$\Sigma_{(1)}$}} \hspace{15em} {\scriptsize{$\Sigma_{(2)}$}}\\
\includegraphics[scale=0.55]{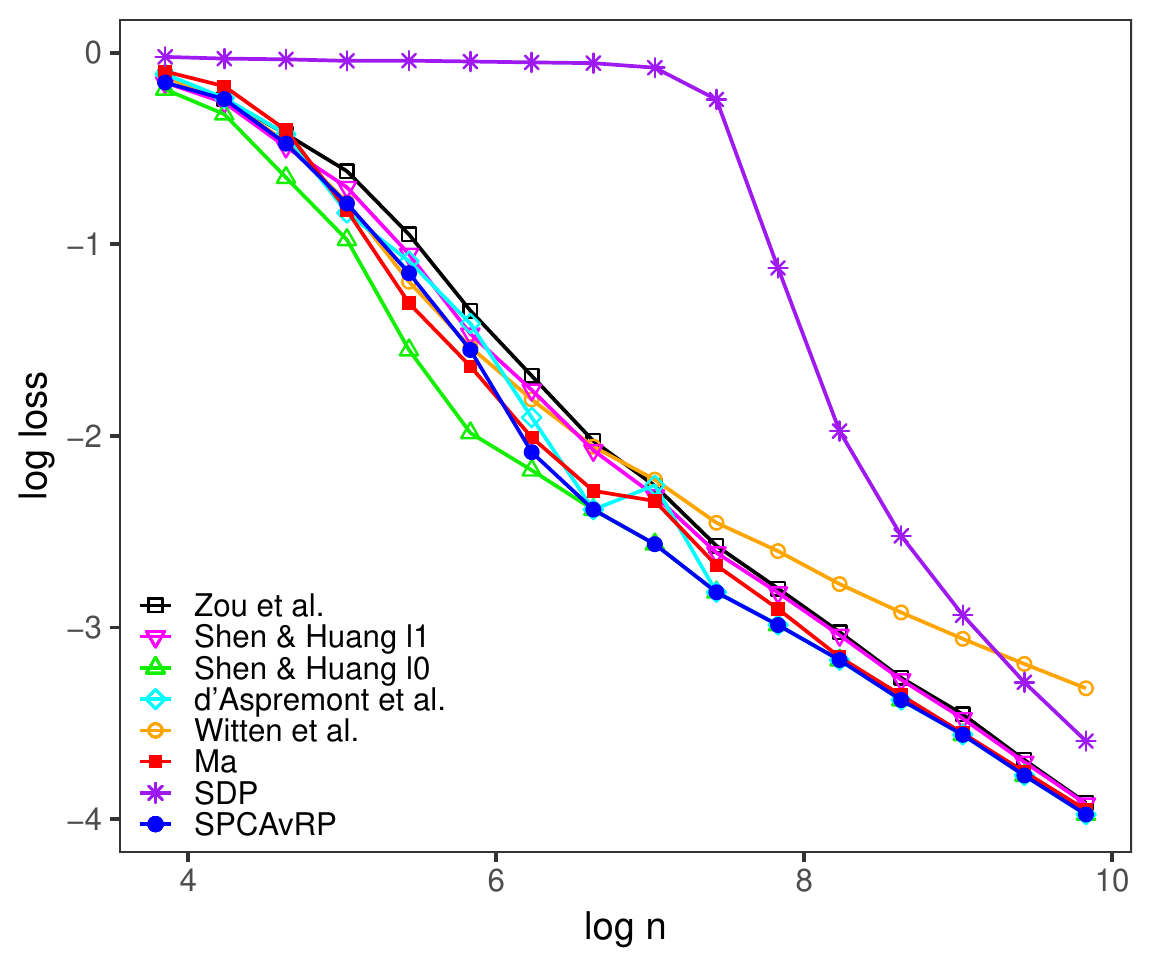} \includegraphics[scale=0.55]{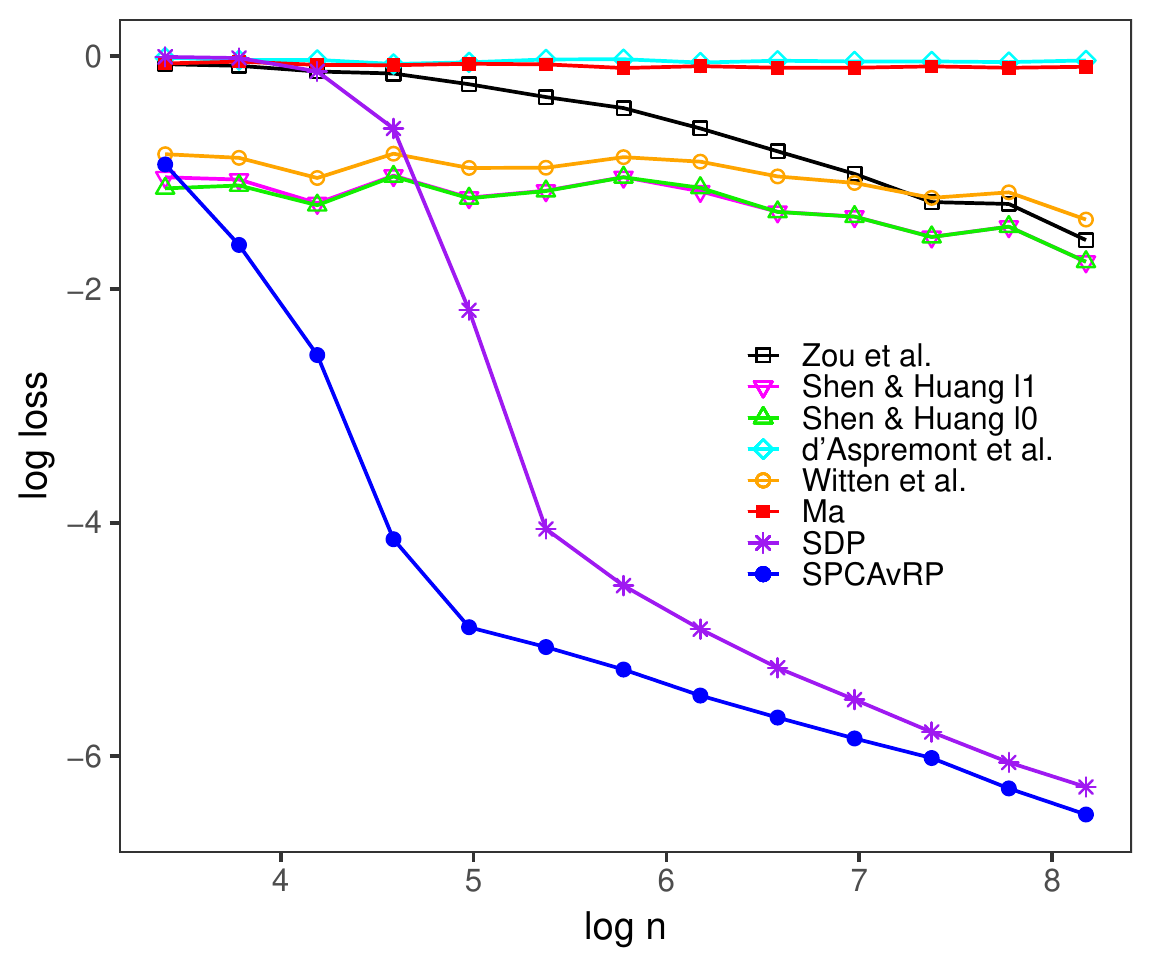}\\
\includegraphics[scale=0.55]{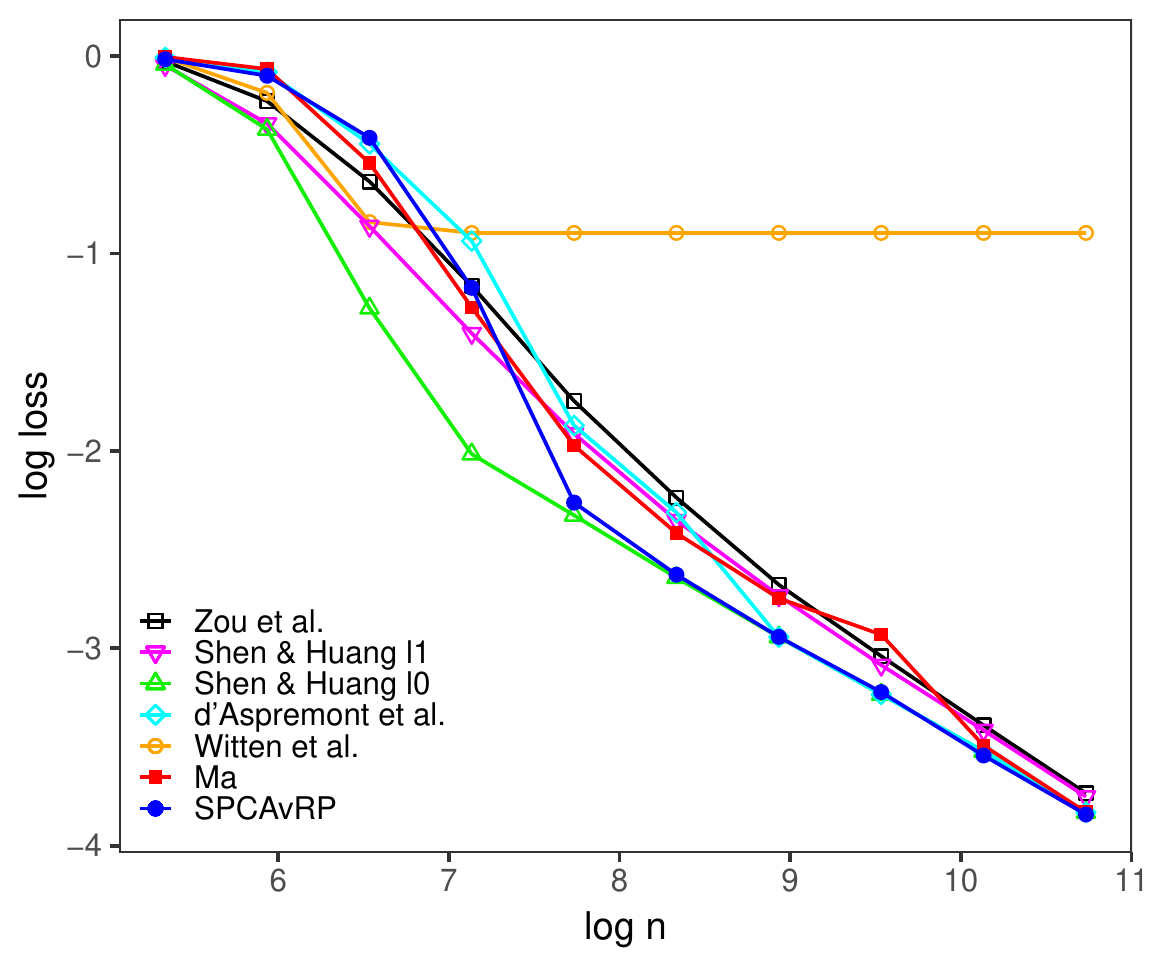} \includegraphics[scale=0.55]{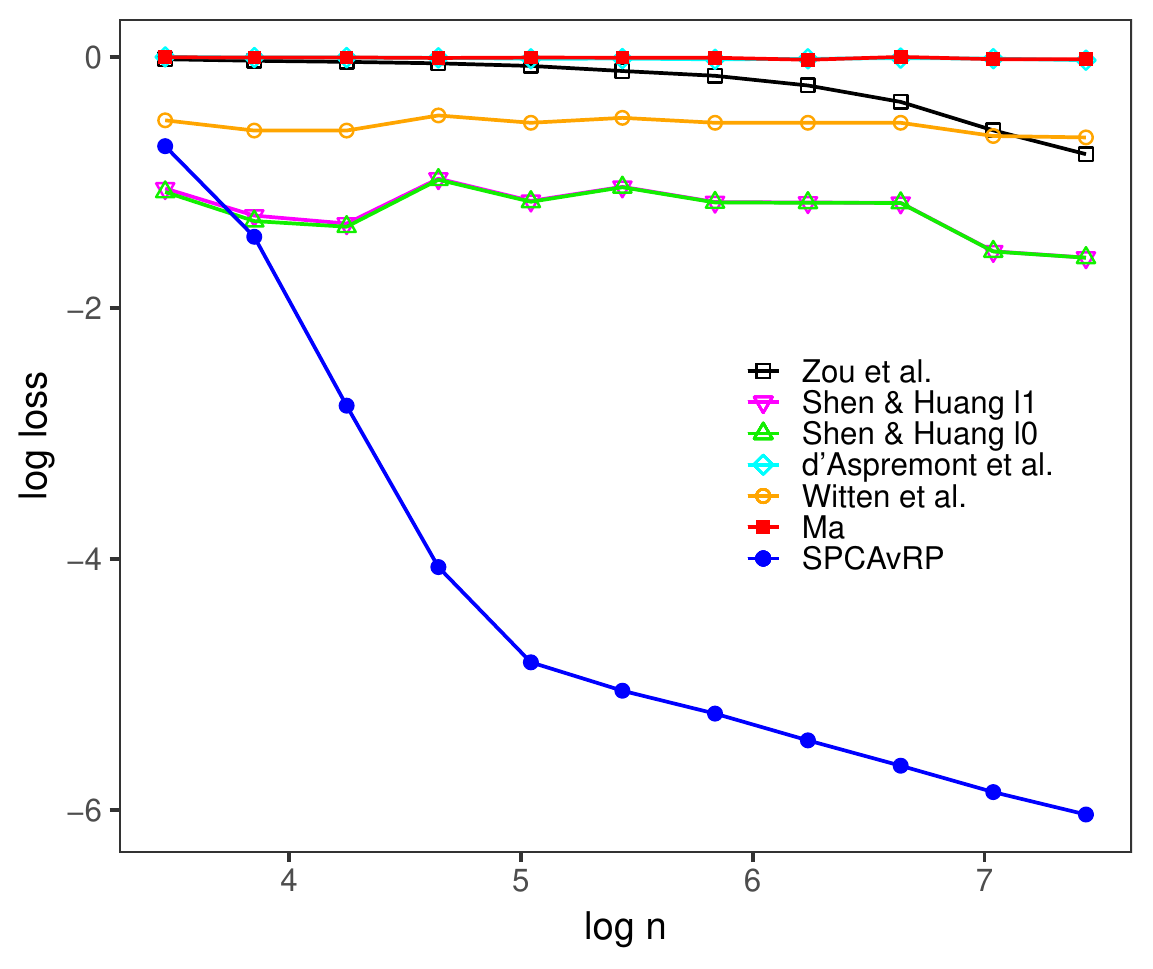}
 \caption{\textbf{Comparison of different principal component estimators.} Average loss against sample size $n$, on the log-log scale, using two different covariance structures from \R{mod_compare}: $\Sigma_{(1)}$ with $p=100$, $k=10$ (top left), $\Sigma_{(2)}$ with $p=200$, $k=10$ (top right), $\Sigma_{(1)}$ with $p=1000$, $k=30$ (bottom left), $\Sigma_{(2)}$ with $p=2000$, $k=30$ (bottom right). Blue: SPCAvRP with $A=300, B=150$ (top) or $A=800$, $B=300$ (bottom) and  $d=l=k$; black: \cite{Zou2006} with given $k$; magenta and green: \cite{Shen2008} with $\ell_1$ and $\ell_0$-thresholding, respectively, both with given $k$; cyan: \cite{d'Aspremont2008} with given $k$;   orange:  \cite{Witten2009} with parameters chosen by their default cross-validation; red: \cite{Ma2013} with the default parameters; purple: SDP.}
 \label{fig:compare}
\end{figure}

\subsubsection{Higher-order components}\label{sss:comparison-multi}

In Table~\ref{tab:sub} and Figure~\ref{fig:sub1} we compare Algorithms  \ref{Algo:SPCAvRPdefl} and \ref{Algo:SPCAvRPsub} with existing SPCA algorithms for subspace estimation, namely those proposed by \cite{Zou2006}, \cite{Witten2009} and \cite{Ma2013}. For this purpose we simulate observations from a normal distribution with a covariance matrix which is two- and three-spiked, respectively. 
From Table~\ref{tab:sub} and Figure~\ref{fig:sub1}, we observe that the SPCAvRP estimators computed by Algorithms \ref{Algo:SPCAvRPdefl} and \ref{Algo:SPCAvRPsub} perform well
when compared with the alternative approaches.  When the supports of leading eigenvectors are disjoint, namely $S_r \cap S_q = \emptyset$, $r\neq q$, $r,q\in[m]$,  where $S_r:=\{j\in[p]:v_r^{(j)}\neq0\}$, we observe that the deflation scheme proposed in Algorithm \ref{Algo:SPCAvRPdefl} may perform better than Algorithm \ref{Algo:SPCAvRPsub}, since it estimates each support $S_r$ individually. On the other hand, if their supports are overlapping,  Algorithm \ref{Algo:SPCAvRPsub} may perform better than Algorithm \ref{Algo:SPCAvRPdefl}, since it directly estimates $\cup_{r=1}^m S_r$.
From Table~\ref{tab:sub}, we also see that only SPCAvRP algorithms and the one proposed by \cite{Ma2013} compute components that are orthogonal in both cases $S_1\cap S_2 = \emptyset$ and $S_1\cap S_2 \neq \emptyset$.  

\begin{table}\caption{\label{tab:sub}\textbf{Comparison of different subspace estimators when $m=2$.} Observations are generated from $N_p(0,\Sigma)$, $\Sigma=I_p+\sum_{r=1}^2\theta_r v_rv_r^{\top}$, $\theta_1=50$, $\theta_2=30$,  $p=200$, $n=150$, where $v_1$ and $v_2$ have homogeneous signal strengths with $S_1 = \{1,\ldots,14\}$, and $S_2 = \{7,\ldots,20\}$ (top), $S_2 = \{15,\ldots,28\}$ (bottom).  The SPCAvRP estimators computed  by Algorithms~\ref{Algo:SPCAvRPdefl} and \ref{Algo:SPCAvRPsub}, with $A=300$, $B=150$, $m=2$, $d=\ell_1=\ell_2=k$, $\ell=|S_1\cup S_2|$, are compared with estimators computed by algorithms proposed by \cite{Zou2006}, \cite{Witten2009} and \cite{Ma2013}, which are used with their default parameters.}
\centering
\begin{tabular}{|l|cccc|}
\hline
 $S_1\cap S_2 \neq \emptyset$ &$L(\hat{V}_2,V_2)$ & $L(\hat{v}_1,v_1)$  & $L(\hat{v}_2,v_2)$ &$|\hat{v}^{\top}_1\hat{v}_2|$ \\ 
 \hline
 Algorithm~\ref{Algo:SPCAvRPdefl}     & $8.51\times 10^{-2}$ & $9.18\times 10^{-2}$ & $9.58\times 10^{-2}$ & $<10^{-15}$ \\
Algorithm~\ref{Algo:SPCAvRPsub}      & $6.72 \times 10^{-2}$ & $1.59\times 10^{-1}$ & $1.68\times 10^{-1}$  & $<10^{-15}$ \\
Ma           & $7.89\times 10^{-2}$ & $1.51\times 10^{-1}$ & $1.61\times 10^{-1}$ & $<10^{-15}$ \\
Witten et al. & $9.26\times 10^{-2}$ & $1.50\times 10^{-1}$ & $1.52\times 10^{-1}$ & $5.04 \times 10^{-4}$  \\
Zou et al.    & $1.80\times 10^{-1}$ & $2.06\times 10^{-1}$ & $2.23 \times 10^{-1}$  & $2.59 \times 10^{-4}$\\ 
\hline
\hline
 $S_1\cap S_2 = \emptyset$ &$L(\hat{V}_2,V_2)$ & $L(\hat{v}_1,v_1)$  & $L(\hat{v}_2,v_2)$ &$|\hat{v}^{\top}_1\hat{v}_2|$ \\  \hline
 Algorithm~\ref{Algo:SPCAvRPdefl}  & $5.42 \times 10^{-2}$ & $4.18 \times 10^{-2}$ & $5.32 \times 10^{-2}$ & $<10^{-15}$ \\
 Algorithm~\ref{Algo:SPCAvRPsub}      & $8.03 \times 10^{-2}$ & $1.64\times 10^{-1}$ & $1.75\times 10^{-1}$  & $<10^{-15}$ \\
Ma           & $8.91\times 10^{-2}$ & $1.43\times 10^{-1}$  & $1.53\times 10^{-1}$  & $<10^{-15}$ \\
Witten et al. & $8.97\times 10^{-2}$ & $1.11\times 10^{-1}$ & $1.09\times 10^{-1}$  & $1.36 \times 10^{-3}$ \\
Zou  et al.   & $9.97\times 10^{-2}$ & $7.13\times 10^{-2}$ & $9.62\times 10^{-2}$  & $<10^{-15}$ \\
\hline
\end{tabular}
\end{table}

\begin{figure}[htbp]
 \centering
\includegraphics[scale=0.55]{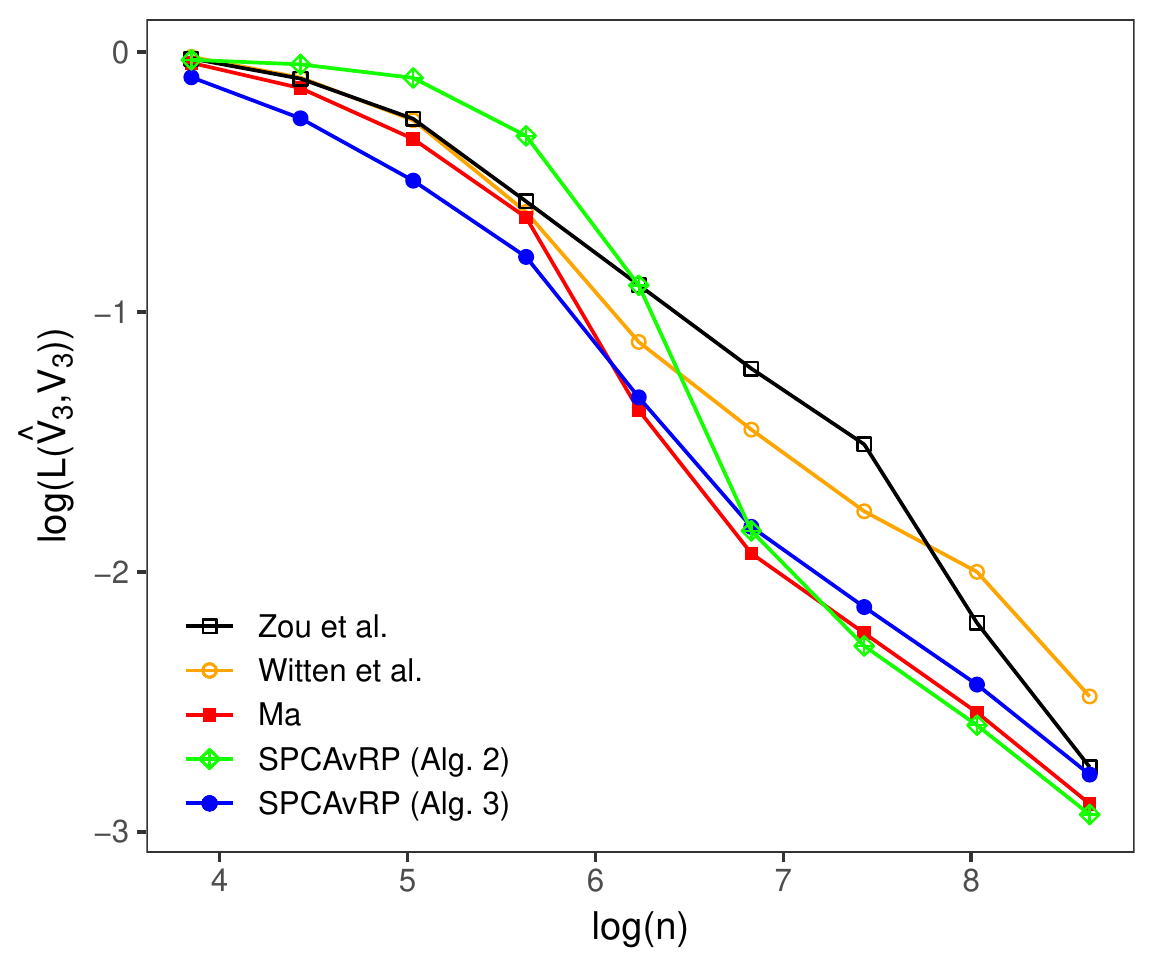} \includegraphics[scale=0.55]{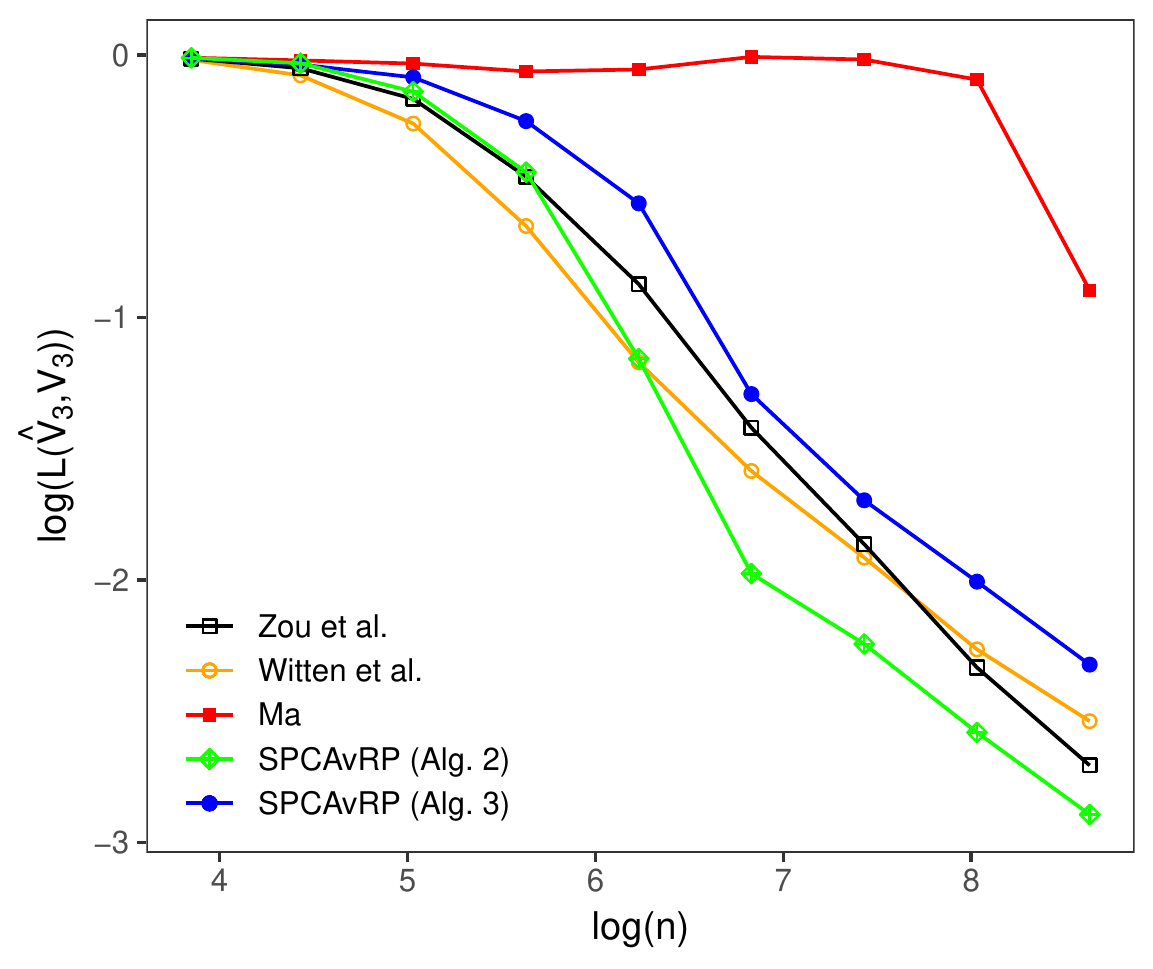} 
  \caption{\textbf{Comparison of different subspace estimators when $m=3$.} Average loss $L(\hat V_3,V_3)$ is plotted against sample size $n$, on the log-log scale.  Observations are generated from $N_p(0,\Sigma)$, $\Sigma=I_p+\sum_{r=1}^{3}\theta_r v_rv_r^{\top}$, $\theta_1=3$, $\theta_2=2$, $\theta_3=1$, $p=100$, where $v_1,v_2,v_3$ have homogeneous signals strengths with $S_1=\{1,\ldots,10\}$, $S_2=\{3,\ldots,12\}$, $S_3=\{5,\ldots,14\}$ (left) or $S_1=\{1,\ldots,10\}$, $S_2=\{11,\ldots,20\}$, $S_3=\{21,\ldots,30\}$ (right). SPCAvRP estimators computed by Algorithm \ref{Algo:SPCAvRPdefl} and Algorithm \ref{Algo:SPCAvRPsub}, with input parameters $A=400$, $B=200$, $m=3$, and $d=\ell_1=\ell_2=\ell_3=10$ (Algorithm \ref{Algo:SPCAvRPdefl}) or $d=\ell=|S_1\cup S_2\cup S_3|$ (Algorithm \ref{Algo:SPCAvRPsub}), are compared with subspace estimators computed by algorithms proposed by \cite{Zou2006}, \cite{Witten2009} and \cite{Ma2013}, with their default parameters.
}
  \label{fig:sub1}
\end{figure}


\subsubsection{Microarray data} 

We test our SPCAvRP algorithm on the \citet{Alon1999} gene expression data set, which contains 40 colon tumour and 22 normal observations.  A preprocessed data set can be downloaded from the \textsf{R} package `datamicroarray' \citep{R2016}, with a total of $p=2000$ features and $n=62$ observations. For comparison with alternative SPCA approaches, we use algorithms that accept the output sparsity $\ell$ as an input parameter, namely those proposed by \cite{Zou2006}, \cite{d'Aspremont2008} and \cite{Shen2008}.  For each $\ell$ considered, we computed the estimator $\hat{v}_{1,\ell}$ of the first principal component, and in Figure~\ref{fig:sparsitycurves} we plot the explained variance $\mathrm{Var}_{\ell}:=\hat{v}_{1,\ell}^\top\hat{\Sigma}\hat{v}_{1,\ell}$ as well as two different metrics for the separability of the two classes of observations projected along first principal component $\hat{v}_{1,\ell}$, namely the Wasserstein distance $W_\ell$ of order one and the $p$-value of Welch's $t$-test \citep{Welch1947}. Furthermore, in Figure~\ref{fig:boxplots}, we display their corresponding values for $\ell=20$ together with the box plots of the observations 
projected along $\hat{v}_{1,20}$. From Figures~\ref{fig:sparsitycurves} and \ref{fig:boxplots}, we observe that the SPCAvRP algorithm performs similarly to those proposed by \cite{d'Aspremont2008} and \cite{Shen2008}, all of which are superior in this instance to the SPCA algorithm of \cite{Zou2006}. In particular, for small values of $\ell$, we observe a steep slope of the blue Wasserstein and $p$-value curves corresponding to SPCAvRP algorithm in Figure~\ref{fig:sparsitycurves}, indicating that the two classes are well separated by projecting the observations along the estimated principal component which contains expression levels of only a few different genes. 

\begin{figure}[htbp]
 \centering
 \includegraphics[scale=0.45]{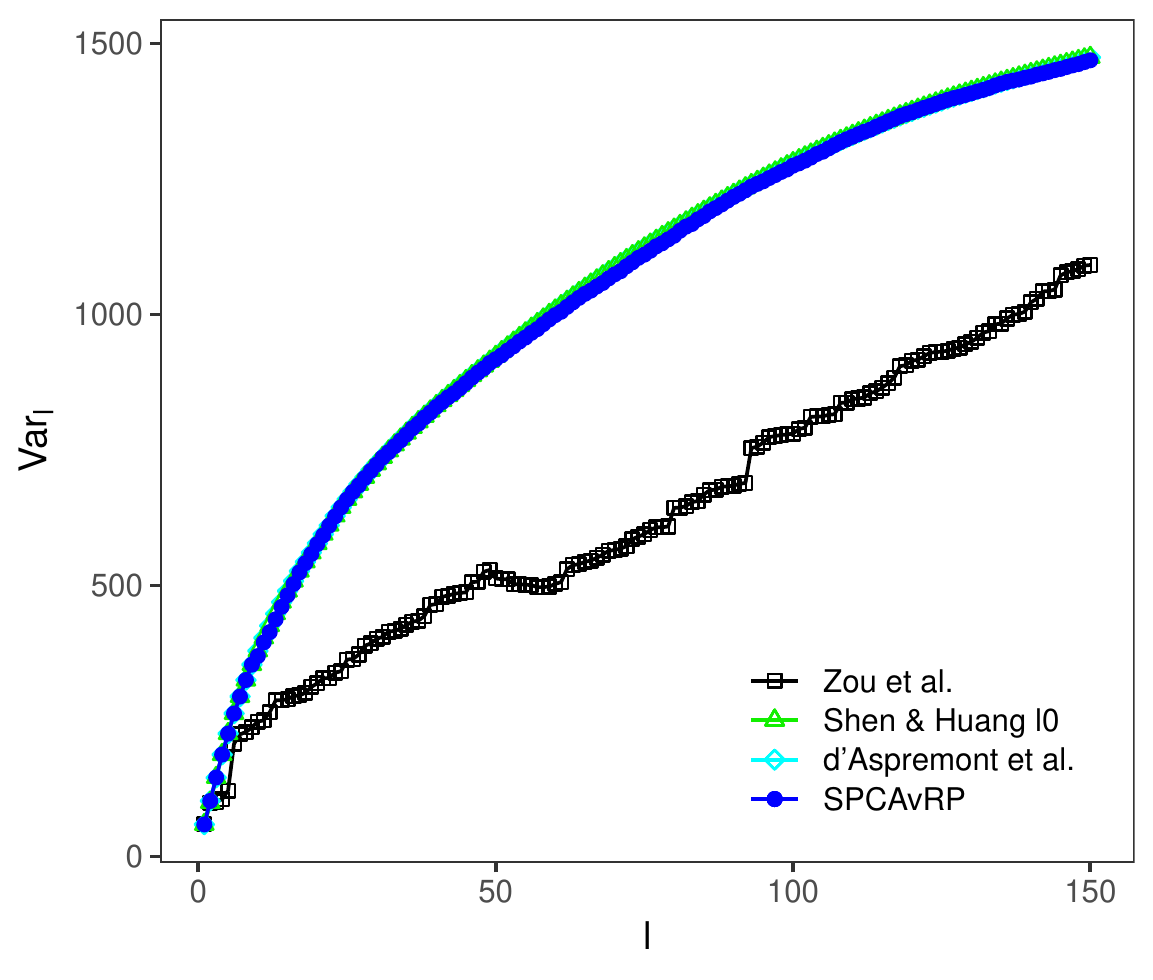}  \includegraphics[scale=0.45]{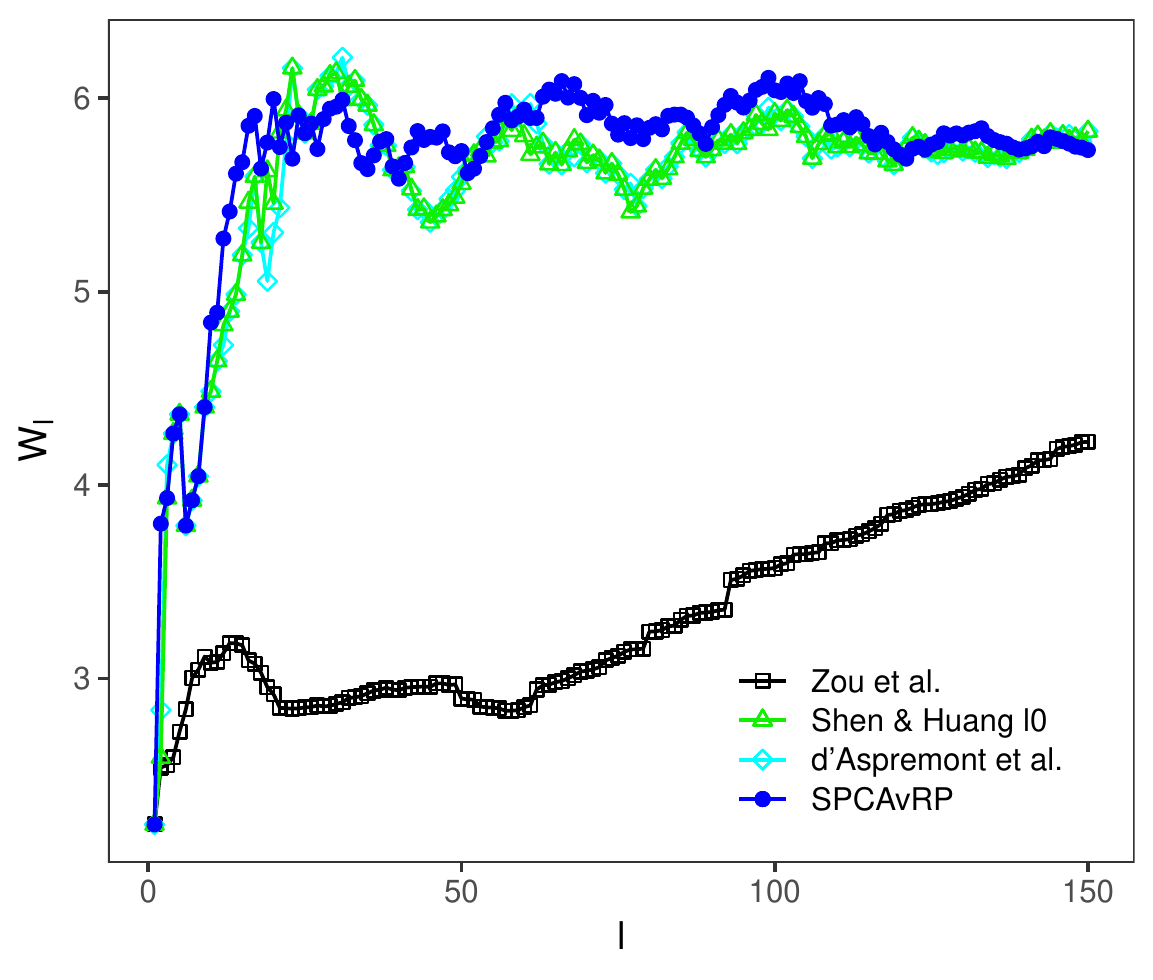} \includegraphics[scale=0.45]{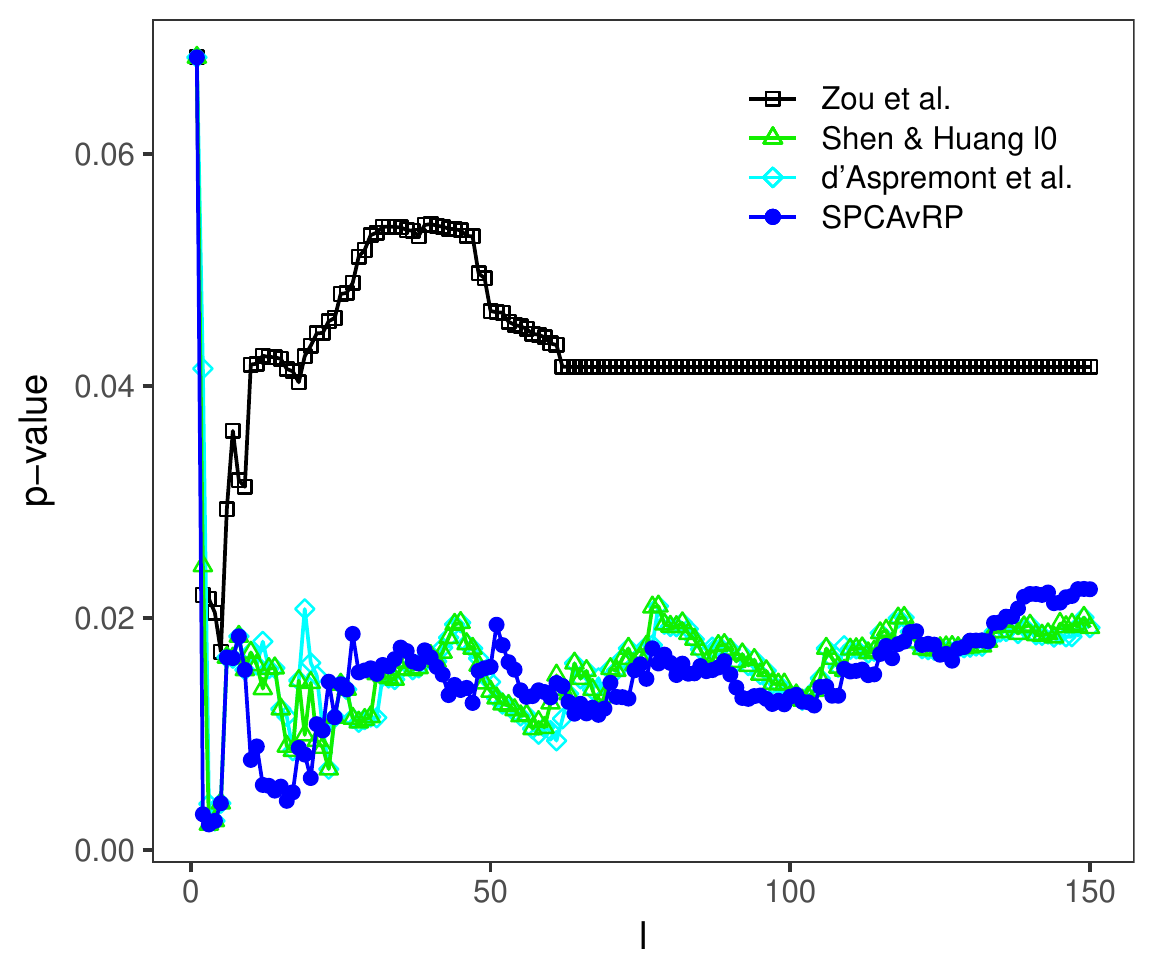}
  \caption{Left panel: $\mathrm{Var}_{\ell}$;  middle panel:  Wasserstein distance $W_\ell$ between the empirical distributions of the two classes projected along $\hat{v}_{1,\ell}$; right panel: $p$-value of Welch's t-test for the two classes projected along $\hat{v}_{1,\ell}$, where $\hat{v}_{1,\ell}$ is the estimator of $v_1$ for varied sparsity level $\ell$. For estimation we use SPCAvRP  ($d=30$, $A=1200$, $B=200$), \cite{Zou2006},  \cite{d'Aspremont2008} and \cite{Shen2008} with $\ell_0$-thresholding. 
}
\label{fig:sparsitycurves}
\end{figure}

\begin{figure}[htbp]
 \centering
\begin{tabular}{ccccccccccc}
 & & {\scriptsize PCA}  & & {\scriptsize Zou et al.}  & &  {\scriptsize d'Aspr.~et al.} & &  {\scriptsize Huang et al.}  & &  {\scriptsize SPCAvRP} \\
 & &{\scriptsize$\mathrm{Var}_{2000}= 1976.57$} & & {\scriptsize$\mathrm{Var}_{20}=319.70$} & &  {\scriptsize$\mathrm{Var}_{20}=577.81$} & &  {\scriptsize$\mathrm{Var}_{20}=577.27$} & &  {\scriptsize$\mathrm{Var}_{20}=576.52$} \\
 & &{\scriptsize$W_{2000}= 5.68$} & &  {\scriptsize$W_{20}=2.92$} & &  {\scriptsize$W_{20}=5.30$} & &   {\scriptsize$W_{20}= 5.46$} & &   {\scriptsize$W_{20}=6.00$} \\
 & & {\scriptsize $p$ = 0.0416} & & {\scriptsize $p$ = 0.0435} & &   {\scriptsize $p$ = 0.0161} & &   {\scriptsize $p$ = 0.0144}& &   {\scriptsize $p$ = 0.0062} 
\end{tabular}
\begin{tabular}{cc}
\rotatebox{90}{\hspace{4em}\scriptsize$x_i^{\top}\hat v_{1,20}$} &\includegraphics[scale=0.65]{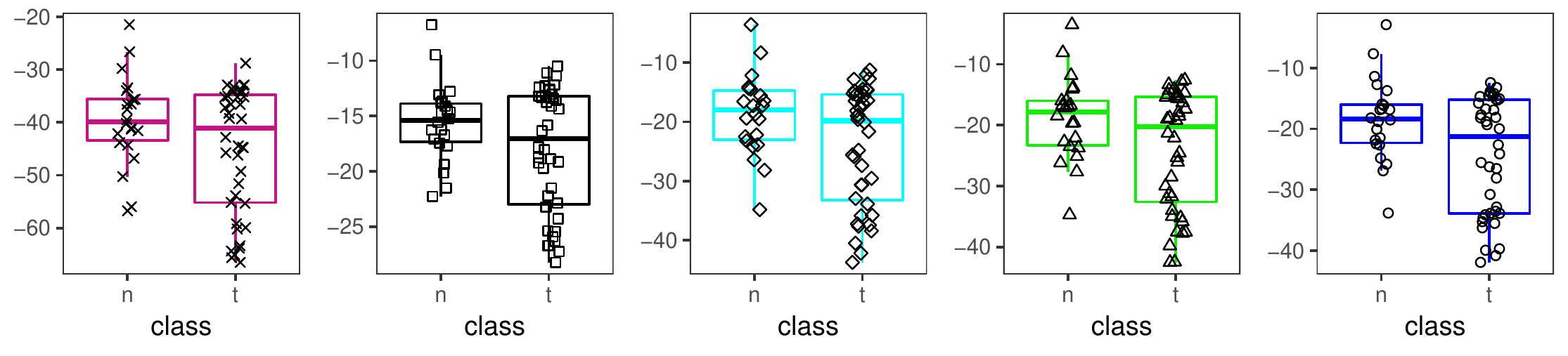}
\end{tabular}
  \caption{Variance $\mathrm{Var}_{\ell}$, Wasserstein distance $W_{\ell}$, $p$-value of the Welch's $t$-test and the corresponding box plots of the observations from the two classes projected along estimator $\hat{v}_{1,\ell}$ computed by five different approaches: classical PCA,  \cite{Zou2006},  \cite{d'Aspremont2008}, \cite{Shen2008} with $\ell_0$-thresholding, and SPCAvRP . The desired sparsity level in all SPCA algorithms is set to $\ell=20$.
}
\label{fig:boxplots}
\end{figure}


\section{Proofs of theoretical results}\label{ap:proofs}

\begin{proof}[of Lemma~\ref{Lemma:SparseOrthogonal}]
To verify that $\hat{v}_r$ is orthogonal to $\hat{v}_1,\ldots,\hat{v}_{r-1}$, observe that since the support of $\hat{v}_r$ is contained in $\tilde{S}_r$, we have
\[
  \hat{v}_r^\top \hat{V}_{r-1} = \hat{v}_r^\top P_{\tilde{S}_r}\hat{V}_{r-1} + \hat{v}_r^\top P_{\tilde{S}_r^c}\hat{V}_{r-1} = \frac{\hat{v}_r^\top H_{\tilde S_r}  P_{\tilde S_r} \hat{\Sigma}  P_{\tilde S_r} H_{\tilde S_r}P_{\tilde{S}_r}\hat{V}_{r-1}}{\lambda_1(H_{\tilde S_r}  P_{\tilde S_r} \hat{\Sigma}  P_{\tilde S_r} H_{\tilde S_r})} = 0,
\]
where the final equality follows from the fact that $H_{\tilde S_r}$ is a projection onto the orthogonal complement of the column space of $P_{\tilde{S}_r}\hat{V}_{r-1}$, so $H_{\tilde S_r}P_{\tilde{S}_r}\hat{V}_{r-1} = 0$.
\end{proof}

\begin{proof}[of Theorem~\ref{Thm:Main}]
For notational simplicity, we drop the subscript $m$ from $\hat{V}$ and $V$ in this proof, write $X := (X_1,\ldots,X_n)$ and define $\binom{[p]}{d}:=\{S\subseteq[p]:|S|=d\}$. For any $S\in\binom{[p]}{d}$, we note that $\Sigma^{(S,S)} = I_d + V^{(S,\cdot)}\Theta (V^{(S,\cdot)})^\top$ is a rank (at most) $m$ perturbation of the identity. Hence,
\begin{equation}
\label{Eq:EvalsSum}
\sum_{r=1}^m\lambda_r(\Sigma^{(S,S)}) = \mathrm{tr}(\Sigma^{(S,S)}) - (d-m) = m+\mathrm{tr}\bigl(V^{(S,\cdot)}\Theta(V^{(S,\cdot)})^\top\bigr) = m + \sum_{r=1}^m \sum_{j\in S\cap S_0} \theta_r (V^{(j,r)})^2.
\end{equation}
By the definition of $\text{RCC}_p(K)$ in~\eqref{Eq:RCC}, there is an event $\Omega_{\text{RCC}}$ with probability at least $1- 2p^{-3}$ such that on $\Omega_{\text{RCC}}$, we have 
\[
\sup_{u\in\mathcal{B}_0^{p-1}(d)} u^\top (\hat\Sigma-\Sigma)u \leq 2K\sqrt\frac{d\log p}{n} \quad \text{and} \quad \sup_{u\in\mathcal{B}_0^{p-1}(\ell)} u^\top (\hat\Sigma-\Sigma)u \leq 2K\sqrt\frac{\ell\log p}{n}.
\]
On $\Omega_{\text{RCC}}$, by~\eqref{Eq:EvalsSum}, Weyl's inequality \citep[Corollary~IV.4.9]{Weyl1912,StewartSun1990} and  \eqref{Eq:Cond2}, we have for any $S\in\binom{[p]}{d}$ that
\begin{align}
\label{Eq:EvalProjection2}
\biggl|\sum_{r=1}^m \lambda_r(\hat\Sigma^{(S,S)}) - m - \sum_{r=1}^m\sum_{j\in S\cap S_0} \theta_r (V^{(j,r)})^2\biggr| & = \biggl|\sum_{r=1}^m\bigl\{\lambda_r(\hat\Sigma^{(S,S)})-\lambda_r(\Sigma^{(S,S)})\bigr\}\biggr|\nonumber\\
& \leq 2Km\sqrt\frac{d\log p}{n} \leq \frac{m\theta_m}{16k\mu^2}.
\end{align}
By~\eqref{Eq:VMinVMax2}, we have
$\sum_{r=1}^m \theta_r (V^{(j,r)})^2\geq \theta_m\|V^{(j,\cdot)}\|_2^2\geq m\theta_m k^{-1}\mu^{-2}$ for every $j\in S_0$, which is more than twice the right-hand side of~\eqref{Eq:EvalProjection2}. Thus, an important consequence of~\eqref{Eq:EvalProjection2} is that on $\Omega_{\text{RCC}}$, for any $S,S'\in\binom{[p]}{d}$,
\begin{equation}
\label{Eq:Consequence2}
\text{if } S\cap S_0\subsetneq S'\cap S_0, \quad \text{then } \sum_{r=1}^m \lambda_r(\hat\Sigma^{(S,S)}) < \sum_{r=1}^m \lambda_r(\hat\Sigma^{(S',S')}).
\end{equation}

Fix $a\in[A]$, and for any $\tilde j \in[p]$ define $q_{\tilde j}:= \mathbb{P}(\tilde j\in S_{a,b^*(a)}\mid X)$. Now, fix some $j\in S_0$ and $j'\in [p]\setminus S_0$. We claim that
\begin{equation}
\label{Eq:CoordinateInclusionProbability2}
q_{j}\geq q_{j'} \quad \text{on $\Omega_{\text{RCC}}$}.
\end{equation} 
Before proving the claim, we first observe that, if~\eqref{Eq:CoordinateInclusionProbability2} holds, then since the same inequality would hold if we replace $j'$ by any other index in $S_0^{\text{c}}$, we would have on $\Omega_{\text{RCC}}$ that
\begin{equation}
\label{Eq:SignalCoordinateProbability2}
q_{j} \geq \frac{\sum_{\tilde j \in ([p]\setminus S_0)\cup \{j\}} q_{\tilde j}}{p-k+1} = \frac{d - \sum_{\tilde j\in S_0\setminus\{j\}}q_{\tilde j}}{p-k+1} \geq \frac{d-k+1}{p-k+1}\geq \frac{1}{p}.
\end{equation}
To verify the claim, define for $\tilde j\in\{j,j'\}$ and $b\in[B]$ the following sets:
\begin{align*}
\mathcal{S}_{b,\tilde j} &:= \bigl\{(S_{a,1},\ldots,S_{a,B}): b^*(a)=b,\ \tilde j\in S_{a,b}\bigr\} \quad\text{and}\quad \mathcal{S}_{b} := \bigl\{(S_{a,1},\ldots,S_{a,B}): b^*(a)=b \bigr\}.
\end{align*}
Let $\psi: \binom{[p]}{d}\to \binom{[p]}{d}$ be defined such that $\psi(S) := (S\setminus \{j'\}) \cup \{j\}$ if $j'\in S$ and $j\notin S$ and $\psi(S):=S$ otherwise.
Since, for every $S\in\binom{[p]}{d}$, we have either $\psi(S) = S$ or $S\cap S_0\subsetneq \psi(S)\cap S_0$, by~\eqref{Eq:Consequence2} we have on $\Omega_{\text{RCC}}$ that $\sum_{r=1}^m\lambda_r(\hat\Sigma^{(S,S)}) \leq \sum_{r=1}^m\lambda_r(\hat\Sigma^{(\psi(S),\psi(S))})$. Thus, for any $b\in[B]$ and any fixed $\hat\Sigma$ satisfying $\Omega_{\text{RCC}}$, the map $\psi$ induces an injection $\Psi: \mathcal{S}_{b,j'}\to\mathcal{S}_{b,j}$, given by
\begin{align*}
\Psi(S_{a,1},\ldots,S_{a,B}) := (S_{a,1},\ldots,S_{a,b-1},\psi(S_{a,b}),S_{a,b+1},\ldots,S_{a,B}),
\end{align*}
which in particular means that $|\mathcal{S}_{b,j'}|\leq |\mathcal{S}_{b,j}|$. Therefore, on $\Omega_{\text{RCC}}$, we have for all $b\in[B]$ that
\begin{align*}
\mathbb{P}(j\in S_{a,b^*(a)}&\mid X, b^*(a) = b) = \frac{\mathbb{P}(j\in S_{a,b^*(a)}, b^*(a)=b\mid X)}{\mathbb{P}(b^*(a)=b\mid X)} = \frac{|\mathcal{S}_{b,j}|}{|\mathcal{S}_{b}|}\\
&\geq  \frac{|\mathcal{S}_{b,j'}|}{|\mathcal{S}_{b}|}  = \frac{\mathbb{P}(j'\in S_{a,b^*(a)}, b^*(a)=b\mid X)}{\mathbb{P}(b^*(a)=b\mid X)}  = \mathbb{P}(j'\in S_{a,b^*(a)}\mid X, b^*(a) = b),
\end{align*}
and consequently $q_{j}\geq q_{j'}$ as claimed in~\eqref{Eq:CoordinateInclusionProbability2}. 

For $b\in[B]$ and $r\in[d]$, define $v_{a,b;r} := v_r(P_{a,b}\Sigma P_{a,b})$ and $\lambda_{a,b;r} := \lambda_r(P_{a,b}\Sigma P_{a,b})$. Note that $\lambda_{a,b;m+1}=\cdots=\lambda_{a,b;d} = 1$. Write $V_{a,b} := (v_{a,b;1},\ldots,v_{a,b;m})$, $\hat V_{a,b}:= (\hat v_{a,b;1},\ldots,\hat v_{a,b;m})$, $\Theta_{a,b} := \mathrm{diag}(\lambda_{a,b;1}-\lambda_{a,b;m+1}, \ldots,\lambda_{a,b;m}-\lambda_{a,b;m+1})$ and $\hat\Theta_{a,b} := \mathrm{diag}(\hat\lambda_{a,b;1}-\hat\lambda_{a,b;m+1}, \ldots,\hat\lambda_{a,b;m}-\hat\lambda_{a,b;m+1})$. By Lemma~\ref{Lem:Perturbation}, on $\Omega_{\text{RCC}}$, we have for all $\tilde j\in\{j,j'\}$ that
\begin{align}
\label{Eq:WeightDifference}
\bigl|(\hat V_{a,b^*(a)} \hat\Theta_{a,b^*(a)} &\hat V_{a,b^*(a)}^\top)^{(\tilde j,\tilde j)} - (V_{a,b^*(a)} \Theta_{a,b^*(a)} V_{a,b^*(a)}^\top)^{(\tilde j,\tilde j)}\bigr|\nonumber\\
& \leq 4m\|P_{a,b^*(a)} (\hat\Sigma - \Sigma) P_{a,b^*(a)}\|_{\mathrm{op}} \leq 8Km\sqrt\frac{d\log p}{n}\leq \frac{m\theta_m}{4k\mu^2},
\end{align}
where we used~\eqref{Eq:Cond2} in the last inequality.  Observe that
\begin{equation}
\label{Eq:WeightPopulation}
V_{a,b^*(a)} \Theta_{a,b^*(a)} V_{a,b^*(a)}^\top = \sum_{r=1}^d (\lambda_{a,b^*(a);r}-1) v_{a,b^*(a);r} v_{a,b^*(a);r}^\top = P_{a,b^*(a)}(\Sigma-I_p) P_{a,b^*(a)}.
\end{equation}
Also, we have
\begin{equation}
\label{Eq:WeightSample}
(\hat V_{a,b^*(a)} \hat\Theta_{a,b^*(a)} \hat V_{a,b^*(a)}^\top)^{(\tilde j,\tilde j)} = \sum_{r=1}^m(\hat\lambda_{a,b^*(a);r}-\hat\lambda_{a,b^*(a);m+1})(\hat v_{a,b^*(a);r}^{(\tilde j)})^2 =: \hat w_a^{(\tilde j)}.
\end{equation}
By~\eqref{Eq:VMinVMax2}, \eqref{Eq:WeightDifference}, \eqref{Eq:WeightPopulation} and~\eqref{Eq:WeightSample}, we have on $\Omega_{\text{RCC}}\cap \{j\in S_{a,b^*(a)}\}$ that
\begin{align}
\label{Eq:Rangej}
\frac{3m\theta_m}{4k\mu^2} \leq \theta_m\|V^{(j,\cdot)}\|_2^2 - \frac{m\theta_m}{4k\mu^2}&\leq \Sigma^{(j,j)}-1-\frac{m\theta_m}{4k\mu^2} \leq \hat w_a^{(j)}\nonumber\\
& \leq \Sigma^{(j,j)}-1+\frac{m\theta_m}{4k\mu^2}\leq \theta_1\|V^{(j,\cdot)}\|_2^2+\frac{m\theta_m}{4k\mu^2}\leq \frac{5m\theta_1\mu^2}{4k}.
\end{align}
Moreover, on $\Omega_{\text{RCC}}\cap \{j'\in S_{a,b^*(a)}\}$, we have
\begin{equation}
\label{Eq:Rangejprime}
-\frac{m\theta_m}{4k\mu^2}\leq \hat w_a^{(j')} \leq \frac{m\theta_m}{4k\mu^2}
\end{equation}
Recall that for all $j\in[p]$, if $j\notin S_{a,b^*(a)}$, then $\hat w_a^{(j)} = 0$.  Combining the lower bound on $\hat w_a^{(j)}$ in~\eqref{Eq:Rangej} and the upper bound on $\hat w_a^{(j')}$ in~\eqref{Eq:Rangejprime}, we have by~\eqref{Eq:CoordinateInclusionProbability2} and~\eqref{Eq:SignalCoordinateProbability2} that on $\Omega_{\text{RCC}}$,
\begin{equation}
\label{Eq:Gap}
\mathbb{E}\bigl(\hat w_a^{(j)} - \hat w_a^{(j')}\bigm | X\bigr) = \mathbb{E}\bigl(\hat w_a^{(j)}\mathds{1}_{\{j\in S_{a,b^*(a)}\}} - \hat w_a^{(j')}\mathds{1}_{\{j'\in S_{a,b^*(a)}\}}\bigm| X\bigr)\geq \frac{q_j m\theta_m}{2k\mu^2} \geq \frac{m\theta_m}{2pk\mu^2}.
\end{equation}

Now, let $a, j,j'$ be freely varying again, and define $\Omega:=\{\min_{j\in S_0} \hat{w}^{(j)} > \max_{j\notin S_0}\hat{w}^{(j)}\}$. Since~\eqref{Eq:Gap} holds for arbitrary $j\in S_0$ and $j'\notin S_0$, and since $\hat w^{(j)} = A^{-1}\sum_{a=1}^A \hat w_a^{(j)}$, we have 
\[
\Omega^{\text{c}} \subseteq \bigcup_{j\in S_0} \biggl\{\hat w^{(j)} - \mathbb{E}(\hat w^{(j)}\mid X) \leq -\frac{m\theta_m}{4pk\mu^2} \biggr\}\cup \bigcup_{j'\notin S_0} \biggl\{\hat w^{(j')} - \mathbb{E}(\hat w^{(j')}\mid X) \geq \frac{m\theta_m}{4pk\mu^2} \biggr\}.
\]
Observe that $(\hat w_{a}^{(j)}:a\in[A])$ are independent and identically distributed conditional on $X$.  By~\eqref{Eq:Rangej} and~\eqref{Eq:Rangejprime}, $\hat w_a^{(j)}$ is bounded on $\Omega_{\text{RCC}}$ for all $j\in [p]$. Thus, we can use a union bound and apply Hoeffding's inequality conditional on $X$ to obtain that on $\Omega_{\text{RCC}}$,
\begin{align*}
\mathbb{P}(\Omega^{\text{c}}\mid X)&\leq p \exp\biggl\{\frac{-A}{2}\biggl(\frac{m\theta_m}{4pk\mu^2}\biggr)^2\biggm/ \biggl(\frac{5m\theta_1\mu^2}{4k}\biggr)^2\biggr\} \leq pe^{-A\theta_m^2/(50p^2\mu^8\theta_1^2)}. 
\end{align*}
Since $\ell\geq k$, on $\Omega$, we have $\hat S\supseteq S_0$. Therefore, by \citet[Theorem~2]{Yu2015}, on $\Omega_{\text{RCC}}\cap\Omega$, 
\begin{align*}
L(\hat V, V) \leq \frac{2m^{1/2}\|P_{\hat S} (\hat \Sigma - \Sigma)P_{\hat S}\|_{\text{op}}}{\theta_m} \leq 4K\sqrt\frac{m\ell\log p}{n\theta_m^2}.
\end{align*}
The desired result follows from the fact that
\[
\mathbb{P}(\Omega_{\text{RCC}}\cap\Omega) \geq 1 - \mathbb{P}(\Omega_{\text{RCC}}^{\text{c}}) - \mathbb{E}\{\mathbb{P}(\Omega^{\text{c}}\mid X)\mathds{1}_{\Omega_{\text{RCC}}}\} \geq 1 - 2p^{-3} - pe^{-A\theta_m^2/(50p^2\mu^8\theta_1^2)}.
\] 
\end{proof}

\begin{proof}[of Proposition~\ref{Prop:LowerBound}]
Let $\mathbb{O}_{p,m,k}:=\{V\in\mathbb{O}_{p,m}: \mathrm{nnzr}(V)\leq k\}$. Writing $k = qm+h$ for $q\in\mathbb{N}$ and $h\in\{0,\ldots,m-1\}$, for $r\in[m]$, we define 
\[
u_r := \begin{cases}(q+1)^{-1/2}(\mathbf{0}_{(r-1)(q+1)}^\top, \mathbf{1}_{q+1}^\top, \mathbf{0}_{p-r(q+1)}^\top)^\top & \text{if $1\leq r\leq h$,}\\
q^{-1/2}(\mathbf{0}_{h(q+1)+(r-h-1)q}^\top, \mathbf{1}_{q}^\top, \mathbf{0}_{p-h(q+1)-(r-h)q}^\top)^\top  & \text{if $h+1\leq r\leq m$.}\end{cases}
\]
and write $U:=(u_1,\ldots,u_m)\in\mathbb{R}^{p\times m}$. By construction, $U^\top U = I_m$, so there exists $\tilde U\in\mathbb{O}_p$ whose first $m$ columns are $U$.  Moreover, for $j\in[k]$, we have 
\begin{equation}
\label{Eq:RowNorm}
\frac{4m}{5k} \leq \frac{m}{k+m} \leq \frac{1}{q+1} \leq \|U^{(j,\cdot)}\|_2^2 \leq \frac{1}{q}\leq \frac{m}{k-m}\leq \frac{4m}{3k}.
\end{equation}
Now, fix some $\epsilon \in \bigl(0, \sqrt{m/(16k)}\bigr]$ to be specified later. For any $J\in\mathbb{O}_{p-m, m, k-m}$, define
\[
V_J:= \tilde U \begin{pmatrix} \sqrt{1-\epsilon^2}I_m\\ \epsilon J\end{pmatrix} = U + \tilde U\begin{pmatrix} \bigl(\sqrt{1-\epsilon^2}-1\bigr)I_m\\ \epsilon J\end{pmatrix} =: U + \tilde U\Delta_J.
\]
For any $M\in\mathbb{R}^{p\times m}$, we define its \emph{two-to-infinity norm} as $\|M\|_{2\to\infty}:=\sup_{v\in\mathcal{S}^{m-1}}\|Mv\|_{\infty} = \max_{j\in[p]}\|M^{(j,\cdot)}\|_2$.  Then for $J\in\mathbb{O}_{p-m, m, k-m}$, we have 
\begin{equation}
\label{Eq:RowNormPerturbation}
\|V_J - U\|_{2\to\infty} \leq \|\tilde U\|_{2\to\infty}\|\Delta_J\|_{\mathrm{op}} = \|\Delta_J^\top\Delta_J\|_{\mathrm{op}}^{1/2} \leq \sqrt{2}\epsilon.
\end{equation}
Combining equations~\eqref{Eq:RowNorm} and~\eqref{Eq:RowNormPerturbation}, and since $\epsilon\leq \sqrt{m/(16k)}$, we have that $\|V_J^{(j,\cdot)}\|_2\in[0.54(m/k)^{1/2}, 1.51(m/k)^{1/2}]$ for all $j\in[k]$, which implies that $V_J \in \mathbb{O}_{p,m,k}(3)$. 

Using the definition of $V_J$ and the triangle inequality, we have that for any $J,J'\in\mathbb{O}_{p-m,m,k-m}$,
\begin{equation}
\label{Eq:VtopV}
\|V_J^\top V_{J'}\|_{\mathrm{F}} = \|(1-\epsilon^2)I_m + \epsilon^2 J^\top J'\|_{\mathrm{F}} \geq (1-\epsilon^2)\|I_m\|_{\mathrm{F}} - \epsilon^2 \sqrt{m} \|J^\top J'\|_{\mathrm{op}} = (1 - 2\epsilon^2)\sqrt{m}.
\end{equation}
Writing $D_{\mathrm{KL}}(P \,\|\, Q)$ for the Kullback--Leibler divergence from a distribution $P$ to a distribution $Q$ and $\Sigma_J := I_p + \theta V_JV_J^\top$, we have for any $J,J'\in\mathbb{O}_{p-m,m,k-m}$ that
\begin{align}
\label{Eq:KLRadius}
D_{\mathrm{KL}}&\bigl(N_p(0,\Sigma_J) \bigm\| N_p(0,\Sigma_{J'})\bigr) = \frac{1}{2}\mathrm{tr}\bigl(\Sigma_{J'}^{-1}\Sigma_J-I_p\bigr) = \frac{\theta}{2}\mathrm{tr}\bigl\{(I_p+\theta V_{J'}V_{J'}^\top)^{-1}(V_JV_J^\top - V_{J'}V_{J'}^\top)\bigr\}\nonumber\\
&= \frac{\theta}{2}\mathrm{tr}\biggl\{\biggl(I_p - \frac{\theta}{1+\theta}V_{J'} V_{J'}^\top\biggr)(V_JV_J^\top - V_{J'}V_{J'}^\top)\biggr\} =\frac{\theta^2}{2(1+\theta)}\{m - \mathrm{tr}(V_{J'}V_{J'}^\top V_JV_J^\top)\} \nonumber\\
&= \frac{\theta^2}{2(1+\theta)}\bigl(m - \|V_J^\top V_{J'}\|_{\mathrm{F}}^2\bigr)\leq \frac{2m\epsilon^2\theta^2}{1+\theta},
\end{align}
where we used~\eqref{Eq:VtopV} in the final inequality. On the other hand, we also have
\begin{align}
\label{Eq:PackingDistance}
L(V_J, V_{J'}) &= \frac{1}{\sqrt{2}}\|V_JV_J^\top - V_{J'}V_{J'}^\top\|_{\mathrm{F}} = \bigl\{\epsilon^4 L^2(J,J') + \epsilon^2(1-\epsilon^2) \|J - J'\|_{\mathrm{F}}^2\bigr\}^{1/2}\geq \epsilon L(J,J'),
\end{align}
where we used~\citet[Proposition~2.2]{Vu2013} in the last inequality. Thus, if we can find some finite subset $\mathcal{J}\subseteq \mathbb{O}_{p-m,m,k-m}$ such that $3\leq |\mathcal{J}|\leq e^{nm^2\theta^2/k}$ and $\min_{J,J'\in\mathcal{J}: J\neq J'}L(J, J')\geq cm^{1/2}$ for some universal constant $c>0$, then by~\eqref{Eq:KLRadius}, \eqref{Eq:PackingDistance} and Fano's lemma \citep[see, e.g.][Lemma~3]{Yu1997}, we have
\begin{align*}
\inf_{\tilde{V}}\sup_{V \in \mathbb{O}_{p,m,k}(3)}&\mathbb{E}_{P_{V,\theta}} L(\tilde V, V) \geq \inf_{\tilde{V}}\max_{J \in \mathcal{J}}\mathbb{E}_{P_{V_J, \theta}} L(\tilde V, V_J) \\
&\geq \frac{cm^{1/2}\epsilon}{2}\biggl(1-\frac{2nm\epsilon^2\theta^2/(1+\theta)+\log 2}{\log|\mathcal{J}|}\biggr) \geq \frac{cm^{1/2}\epsilon}{2}\biggl(\frac{1}{3}-\frac{2nm\epsilon^2\theta^2}{\log|\mathcal{J}|}\biggr),
\end{align*}
where we used the fact that $|\mathcal{J}|\geq 3$ in the final inequality. 
Choosing $\epsilon = \sqrt\frac{\log |\mathcal{J}|}{16nm\theta^2}$ (noting that the condition $\log |\mathcal{J}|\leq nm^2\theta^2/k$ ensures that $\epsilon\leq \sqrt{m/(16k)}$), we obtain
\begin{equation}
\label{Eq:Fano}
\inf_{\tilde{V}}\sup_{V \in \mathbb{O}_{p,m,k}(3)}\mathbb{E}_{P_{V,\theta}} L(\tilde V, V) \geq \frac{cm^{1/2}\epsilon}{10} \gtrsim \sqrt\frac{\log|\mathcal{J}|}{n\theta^2}.
\end{equation}

It remains to construct a suitable $\mathcal{J}$. By \citet{Szarek1982} \citep[see also][Proposition~8]{Pajor1998}, there exists a finite subset $\tilde{\mathcal{J}}\subseteq \mathbb{O}_{k-m,m}$ such that $|\tilde {\mathcal{J}}| =  \lfloor e^{m(k-2m)}\rfloor$ and $L(\tilde J, \tilde J')\geq cm^{1/2}$ for all distinct $\tilde J, \tilde J'\in\tilde{\mathcal{J}}$. Define $\mathcal{J} := \{(\tilde J^\top, \mathbf{0}_{(p-k)\times m}^\top)^\top: \tilde J \in\tilde{\mathcal{J}}\}$. We have $\min_{J,J'\in\mathcal{J}: J\neq J'}L(J, J') = \min_{\tilde J,\tilde J'\in\tilde{\mathcal{J}}: \tilde J\neq \tilde J'}L(\tilde J, \tilde J') \geq cm^{1/2}$ and $|\mathcal{J}| = |\tilde{\mathcal{J}}|$.  Since $k\geq 4m$ and $nm\theta^2\geq k^2$, we have $3\leq |\mathcal{J}|\leq e^{nm^2\theta^2/k}$ as desired. Hence, by~\eqref{Eq:Fano}, 
\begin{equation}
\label{Eq:LowerBound1}
\inf_{\tilde{V}}\sup_{V \in \mathbb{O}_{p,m,k}(3)}\mathbb{E}_{P_{V,\theta}} L(\tilde V, V)\gtrsim \sqrt\frac{mk}{n\theta^2}.
\end{equation}
Alternatively, we can also construct $\mathcal{J}$ as follows. Recall the definition of $\binom{[p-m]}{k}$ from the proof of Theorem~\ref{Thm:Main}. For any $S\in\binom{[p-m]}{k}$, define $J_S\in\mathbb{R}^{(p-m)\times m}$ such that $J_S^{(S,\cdot)} = U^{([k],\cdot)}$ and $J_S^{(S^{\mathrm{c}},\cdot)} = \mathbf{0}$. By the Gilbert--Varshamov Lemma \citep[see, e.g.][Lemma~4.10]{Massart2007}, and since $p \geq 5k$, there exists $\mathcal{S}\subseteq \binom{[p-m]}{k}$ such that $|\mathcal{S}| = \lfloor e^{\frac{1}{15}k\log((p-m)/k)}\rfloor$ and for any distinct $S, S'\in\mathcal{S}$, $|S\cap S'|\leq k/2$. Let $\mathcal{J} := \{J_S: S\in \mathcal{S}\}$.  Then $|\mathcal{J}| = |\mathcal{S}|$ and 
\[
\min_{J,J'\in\mathcal{J}: J\neq J'}L(J,J')=\min_{J,J'\in\mathcal{J}: J\neq J'}(m-\|J^\top J'\|_{\mathrm{F}}^2)^{1/2} \geq \biggl(m - \frac{k}{2q}\biggr)^{1/2} \geq \sqrt\frac{m}{3},
\]
where the final inequality uses~\eqref{Eq:RowNorm}. Since $k\log ((p-m)/k)\geq 17$ and $nm^2\theta^2\geq k^2\log(p/k)$, we have $3\leq |\mathcal{J}|\leq e^{nm^2\theta^2/k}$ as desired. Hence, by~\eqref{Eq:Fano},
\begin{equation}
\label{Eq:LowerBound2}
\inf_{\tilde{V}}\sup_{V \in \mathbb{O}_{p,m,k}(3)}\mathbb{E}_{P_{V,\theta}} L(\tilde V, V)\gtrsim \sqrt\frac{k\log(p/k)}{n\theta^2}.
\end{equation}
We complete the proof by combining~\eqref{Eq:LowerBound1} and~\eqref{Eq:LowerBound2}.
\end{proof}

\begin{proof}[of Corollary~\ref{Cor:Homogeneous}]
The proof of Theorem~\ref{Thm:Main} remains valid for the setting of this corollary.  Fix a specific $a\in[A]$. Since $V_m\in\mathbb{O}_{p,m,k}(1)$ and $\theta_1 = \cdots = \theta_m$, we have by~\eqref{Eq:EvalProjection2} that on $\Omega_{\text{RCC}}$, for any $S, S'\in\binom{[p]}{d}$, if $|S\cap S_0| < |S'\cap S_0|$, then $\sum_{r=1}^m\lambda_r(\hat\Sigma^{(S,S)}) < \sum_{r=1}^m\lambda_r(\hat\Sigma^{(S',S')})$. Thus, in particular, $|S_{a,b^*(a)}\cap S_0| = \max_{b\in[B]}|S_{a,b}\cap S_0|$ on $\Omega_{\text{RCC}}$. 

Observe that $|S_{a,b}\cap S_0|\stackrel{\mathrm{iid}}{\sim} \mathrm{HyperGeom}(d,k,p)$. Let $M:=\max_{b\in[B]}|S_{a,b}\cap S_0|$ and $R:=|\{b\in[B]: |S_{a,b}\cap S_0|=M\}|$. Conditional on $R=1$ and $X$ such that $\Omega_{\text{RCC}}$ holds, each signal coordinate $j\in S_0$ has the same probability of being included in $S_{a,b^*(a)}$, which is the unique subset of maximal intersection with $S_0$.  Thus, we have on $\Omega_{\text{RCC}}$ that
\begin{equation}
\label{Eq:EqualProb}
\mathbb{P}(\{j\in S_{a,b^*(a)}\}\cap\{R=1\}\mid X) = \mathbb{P}(\{j'\in S_{a,b^*(a)}\}\cap\{R=1\}\mid X)
\end{equation}
for $j,j'\in S_0$. 
Recall the definition of $q_j$ from the proof of Theorem~\ref{Thm:Main}. By~\eqref{Eq:EqualProb}, for any $j\in S_0$, we have on $\Omega_{\text{RCC}}$ that
\begin{align}
\label{Eq:SignalProbAltern}
q_j &\geq \mathbb{P}\bigl(\{j\in S_{a,b^*(a)}\}\cap\{R=1\}\bigm| X\bigr) = \frac{1}{k}\sum_{\tilde j \in S_0} \mathbb{E}\bigl(\mathds{1}_{\{\tilde j \in S_{a,b^*(a)}\}}\mathds{1}_{\{R = 1\}}\bigm | X\bigr) \nonumber\\
&= \frac{1}{k}\mathbb{E}\bigl(|S_{a,b^*(a)}\cap S_0|\mathds{1}_{\{R=1\}} \bigm | X\bigr)\geq \frac{t}{k}\mathbb{P}(M\geq t, R = 1) \geq \frac{t}{4k},
\end{align}
where the penultimate inequality uses Markov's inequality and the fact that the pair $(M,R)$ is independent of $X$, and the final bound follows from Lemma~\ref{Lem:UniqueMax}. Now, using~\eqref{Eq:SignalProbAltern} in place of~\eqref{Eq:SignalCoordinateProbability2}, we find that
$
\mathbb{E}(\hat w_a^{(j)}-\hat w_a^{(j')}\mid X)\geq \frac{tm\theta_m}{8k^2}
$
instead of~\eqref{Eq:Gap}. Thus, 
$
\mathbb{P}(\Omega^{\mathrm{c}}\mid X)\leq pe^{-At^2/(800k^2)}.
$
The desired result is then concluded in a similar fashion as in Theorem~\ref{Thm:Main}.
\end{proof}

\begin{lemma}
\label{Lem:Perturbation}
Suppose $\Sigma,\hat\Sigma$ are symmetric $d\times d$ matrices. For $r\in[d]$, let $\lambda_r := \lambda_r(\Sigma)$ and $v_r:=v_r(\Sigma)$ be the eigenvalues and corresponding eigenvectors of $\Sigma$, and let $\hat\lambda_r := \lambda_r(\hat\Sigma)$ and $\hat v_r:=v_r(\hat\Sigma)$ be the eigenvalues and corresponding eigenvectors of $\hat\Sigma$. Also, for $r\in[d]$, define $V_r := (v_1,\ldots,v_r)$, $\hat V_r := (\hat v_1,\ldots,\hat v_r)$, $\Theta_r := \mathrm{diag}(\lambda_1-\lambda_{r+1},\ldots,\lambda_r-\lambda_{r+1})$ and $\hat\Theta_r := \mathrm{diag}(\hat\lambda_1-\hat\lambda_{r+1},\ldots,\hat\lambda_r-\hat\lambda_{r+1})$ (with the convention that $\lambda_{d+1}= \hat\lambda_{d+1}:=0$). Then for any $m\in[d]$,
\[
\bigl\| \hat V_m \hat\Theta_m \hat V_m^\top - V_m \Theta_m V_m^\top\bigr\|_{\mathrm{op}} \leq 4m\|\hat\Sigma - \Sigma\|_{\mathrm{op}}.
\] 
\end{lemma}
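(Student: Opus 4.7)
The plan is to decompose the difference into a ``Weyl part'' (controlled by eigenvalue perturbation) and a ``Davis--Kahan part'' (controlled by eigenvector perturbation), exploiting the fact that the scaling by $\Theta_m$ exactly cancels the potentially singular eigengaps that would otherwise appear in a naive Davis--Kahan bound. Using the spectral identity $V_m\Theta_mV_m^\top = \sum_{r=1}^m (\lambda_r-\lambda_{m+1})v_rv_r^\top$ and its analogue for $\hat V_m\hat\Theta_m\hat V_m^\top$, I would add and subtract $(\lambda_r - \lambda_{m+1})\hat v_r\hat v_r^\top$ termwise to split $\hat V_m\hat\Theta_m\hat V_m^\top - V_m\Theta_mV_m^\top = T_1 + T_2$, where
\[
T_1 := \sum_{r=1}^m c_r\,\hat v_r\hat v_r^\top, \quad c_r := (\hat\lambda_r-\lambda_r) - (\hat\lambda_{m+1}-\lambda_{m+1}),
\]
and $T_2 := \sum_{r=1}^m (\lambda_r-\lambda_{m+1})(\hat v_r\hat v_r^\top - v_rv_r^\top)$.

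The first piece is immediate: since $\hat v_1,\ldots,\hat v_m$ are orthonormal, $T_1$ is symmetric with only $m$ nonzero eigenvalues $c_1,\ldots,c_m$, so $\|T_1\|_{\mathrm{op}} = \max_r|c_r| \leq 2\|\hat\Sigma-\Sigma\|_{\mathrm{op}}$ by Weyl's inequality. For $T_2$, I would apply summation by parts, using the telescoping identity $\lambda_r - \lambda_{m+1} = \sum_{s=r}^m (\lambda_s - \lambda_{s+1})$ and swapping the order of summation, to obtain
\[
T_2 = \sum_{s=1}^m (\lambda_s - \lambda_{s+1})(\hat P_s - P_s),
\]
where $P_s := V_sV_s^\top$ and $\hat P_s := \hat V_s\hat V_s^\top$. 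The Davis--Kahan $\sin\Theta$ theorem (in operator norm) gives, for every $s\in[m]$ with $\lambda_s > \lambda_{s+1}$, the bound $\|\hat P_s - P_s\|_{\mathrm{op}} = \|\sin\Theta(\hat V_s, V_s)\|_{\mathrm{op}} \leq 2\|\hat\Sigma - \Sigma\|_{\mathrm{op}}/(\lambda_s - \lambda_{s+1})$; combined with the trivial bound $\|\hat P_s - P_s\|_{\mathrm{op}}\leq 1$, this yields $(\lambda_s - \lambda_{s+1})\|\hat P_s - P_s\|_{\mathrm{op}} \leq 2\|\hat\Sigma - \Sigma\|_{\mathrm{op}}$ in all cases, with the term vanishing identically when $\lambda_s = \lambda_{s+1}$. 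Summing gives $\|T_2\|_{\mathrm{op}} \leq 2m\|\hat\Sigma - \Sigma\|_{\mathrm{op}}$, and combining with the bound on $T_1$ yields the overall bound $(2m+2)\|\hat\Sigma - \Sigma\|_{\mathrm{op}} \leq 4m\|\hat\Sigma - \Sigma\|_{\mathrm{op}}$, since $m\geq 1$.

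The main subtlety concerns the potential non-uniqueness of $v_s$ and $\hat v_s$, and hence of $P_s$ and $\hat P_s$, when consecutive eigenvalues coincide. I would handle this by observing that $(\lambda_s - \lambda_{s+1})P_s$ and $(\lambda_s - \lambda_{s+1})\hat P_s$ vanish precisely in the ambiguous cases, so the summation-by-parts identity and all subsequent bounds hold for any valid choice of eigenbasis. Equivalently, $V_m\Theta_mV_m^\top$ is a well-defined function of $\Sigma$ alone (since, within any eigenspace, the contributions collapse to a scalar multiple of the orthogonal projection onto that eigenspace), and likewise for $\hat V_m\hat\Theta_m\hat V_m^\top$, so the whole argument is unaffected by these degeneracies.
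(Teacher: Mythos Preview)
Your proof is correct and follows essentially the same strategy as the paper: both arguments rest on the Abel-summation identity $V_m\Theta_mV_m^\top = \sum_{s=1}^m(\lambda_s-\lambda_{s+1})V_sV_s^\top$, the Davis--Kahan bound $(\lambda_s-\lambda_{s+1})\|\hat P_s - P_s\|_{\mathrm{op}}\leq 2\|\hat\Sigma-\Sigma\|_{\mathrm{op}}$ (combined with the trivial bound $\|\hat P_s-P_s\|_{\mathrm{op}}\leq 1$), and Weyl's inequality. The only difference is the order of operations: the paper first rewrites \emph{both} $V_m\Theta_mV_m^\top$ and $\hat V_m\hat\Theta_m\hat V_m^\top$ in Abel-summation form and then splits termwise, incurring a Weyl contribution of $2\|\hat\Sigma-\Sigma\|_{\mathrm{op}}$ at each of the $m$ levels; you instead split first and exploit the orthonormality of $\hat v_1,\ldots,\hat v_m$ to bound $\|T_1\|_{\mathrm{op}}$ by a single $\max_r|c_r|$, yielding the slightly sharper constant $2m+2$ before relaxing to $4m$.
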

\begin{proof}
By the Davis--Kahan theorem \citep[see, e.g.][Theorem~V.3.6]{StewartSun1990} and Weyl's inequality, we have for any $r\in[d]$ that
\[
\bigl(\lambda_r-\lambda_{r+1}-\|\hat\Sigma-\Sigma\|_{\mathrm{op}}\bigr) \|\sin\Theta(\hat V_r, V_r)\|_{\mathrm{op}} \leq \|\hat\Sigma-\Sigma\|_{\mathrm{op}}.
\]
After rearranging, while noting $\|\sin\Theta(\hat V_r,V_r)\|_{\mathrm{op}}\leq 1$, we obtain that
\begin{equation}
\label{Eq:DKop}
\bigl(\lambda_r-\lambda_{r+1}\bigr) \|\sin\Theta(\hat V_r, V_r)\|_{\mathrm{op}} \leq 2\|\hat\Sigma-\Sigma\|_{\mathrm{op}}.
\end{equation}

Now, we can rewrite 
\[
V_m \Theta_m V_m^\top = \sum_{r=1}^m (\lambda_r-\lambda_{m+1})v_rv_r^\top = \sum_{r=1}^m (\lambda_r - \lambda_{r+1})V_r V_r^\top,
\]
and similarly, $\hat V_m \hat\Theta_m \hat V_m^\top = \sum_{r=1}^m (\hat\lambda_r-\hat\lambda_{r+1})\hat V_r\hat V_r^\top$. Thus,
\begin{align*}
&\bigl\| \hat V_m \hat\Theta_m \hat V_m^\top - V_m \Theta_m V_m^\top\bigr\|_{\mathrm{op}} \leq \sum_{r=1}^m \bigl\|(\hat\lambda_r-\hat\lambda_{r+1})\hat V_r\hat V_r^\top - (\lambda_r - \lambda_{r+1})V_r V_r^\top\bigr\|_{\mathrm{op}}\\
&\leq \sum_{r=1}^m \Bigl\{\bigl|\hat\lambda_r - \lambda_r -(\hat\lambda_{r+1}-\lambda_{r+1})\bigr|\|\hat V_r\hat V_r^\top\|_{\mathrm{op}} + (\lambda_r-\lambda_{r+1})\|\hat V_r\hat V_r^\top - V_rV_r^\top\|_{\mathrm{op}}\Bigr\}\\
&\leq \sum_{r=1}^m \bigl\{|\hat\lambda_r - \lambda_r| + |\hat\lambda_{r+1}-\lambda_{r+1}| + (\lambda_r-\lambda_{r+1})\|\sin\Theta(\hat V_r, V_r)\|_{\mathrm{op}}\bigr\}\leq 4m\|\hat\Sigma-\Sigma\|_{\mathrm{op}},
\end{align*}
where we used Lemma~\ref{Lem:PMEval} in the penultimate inequality, and Weyl's inequality and~\eqref{Eq:DKop} in the final one.
\end{proof}
\begin{lemma}
\label{Lem:PMEval}
For $U, V\in\mathbb{O}_{d,r}$ with $r\leq d$, let $\lambda_1,\ldots,\lambda_s$ (where $s \leq r$) denote the non-zero eigenvalues of $\sin\Theta(U,V)$.  Then the non-zero eigenvalues of $UU^\top - VV^\top$ are given by $\lambda_1,\ldots,\lambda_s,-\lambda_1,\ldots,-\lambda_s$. In particular, $\|UU^\top - VV^\top\|_{\mathrm{op}} = \|\sin\Theta(U,V)\|_{\mathrm{op}}$ and $\|UU^\top - VV^\top\|_{\mathrm{F}}^2 = 2\|\sin\Theta(U,V)\|_{\mathrm{F}}^2$. 
\end{lemma}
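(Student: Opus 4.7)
The plan is to reduce the problem to a direct sum of $2\times 2$ block computations, by first normalising the pair $(U,V)$ via the SVD of $U^\top V$ and then decomposing $\mathbb{R}^d$ into invariant orthogonal pieces. Writing the SVD as $U^\top V = M(\cos\Theta)N^\top$ with $M, N\in\mathbb{O}_r$ and $\cos\Theta = \mathrm{diag}(\cos\theta_1,\ldots,\cos\theta_r)$ (the principal angles), I would replace $U$ by $UM$ and $V$ by $VN$; these replacements leave $UU^\top$ and $VV^\top$ unchanged, so we may assume without loss of generality that $U^\top V = \cos\Theta$, i.e.\ $u_i^\top v_j = \delta_{ij}\cos\theta_i$ for the columns $u_i$ of $U$ and $v_j$ of $V$.

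For each index $i$ with $\theta_i > 0$, define $u_i' := (v_i - \cos\theta_i\, u_i)/\sin\theta_i$. A routine calculation using $U^\top U = V^\top V = I_r$ and the diagonal form of $U^\top V$ shows that the collection $\{u_i, u_i'\}_{i:\theta_i > 0} \cup \{u_i\}_{i:\theta_i = 0}$ is orthonormal, so it extends to an orthonormal basis of $\mathbb{R}^d$. In this basis, both $UU^\top$ and $VV^\top$ are block diagonal with a $2\times 2$ block for each $\theta_i > 0$, a $1\times 1$ block for each $\theta_i = 0$, and zero on the remaining complement. Within the $2\times 2$ block for index $i$ with $\theta_i>0$, writing vectors in the basis $(u_i, u_i')$, the matrix of $UU^\top - VV^\top$ is
\[
\begin{pmatrix} 1 & 0 \\ 0 & 0 \end{pmatrix} - \begin{pmatrix} \cos^2\theta_i & \cos\theta_i\sin\theta_i \\ \cos\theta_i\sin\theta_i & \sin^2\theta_i \end{pmatrix} = \begin{pmatrix} \sin^2\theta_i & -\cos\theta_i\sin\theta_i \\ -\cos\theta_i\sin\theta_i & -\sin^2\theta_i \end{pmatrix},
\]
which has trace $0$ and determinant $-\sin^2\theta_i$, hence eigenvalues $\pm\sin\theta_i$. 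On the $1\times 1$ blocks and on the complement, $UU^\top - VV^\top$ vanishes. Collecting the contributions yields the claimed list of non-zero eigenvalues $\lambda_1,\ldots,\lambda_s,-\lambda_1,\ldots,-\lambda_s$.

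The norm identities then follow at once: for symmetric matrices, $\|\cdot\|_{\mathrm{op}}$ equals the maximum absolute eigenvalue and $\|\cdot\|_{\mathrm{F}}^2$ equals the sum of squared eigenvalues, giving $\|UU^\top - VV^\top\|_{\mathrm{op}} = \max_i |\sin\theta_i| = \|\sin\Theta(U,V)\|_{\mathrm{op}}$ and $\|UU^\top - VV^\top\|_{\mathrm{F}}^2 = 2\sum_i \sin^2\theta_i = 2\|\sin\Theta(U,V)\|_{\mathrm{F}}^2$. The main point of care in the argument is the orthogonality bookkeeping for the $u_i'$: one has to check that $u_i' \perp u_j, u_j', v_j$ for $j\neq i$, and $u_i'\perp u_j$ whenever $\theta_j = 0$. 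These all follow directly from $u_i^\top u_j = \delta_{ij}$, $v_i^\top v_j = \delta_{ij}$ and the diagonal form $u_i^\top v_j = \delta_{ij}\cos\theta_i$, so the proof is essentially a careful linear-algebra exercise once the SVD reduction has been performed.
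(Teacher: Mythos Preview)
Your argument is correct and follows essentially the same geometric idea as the paper: pass to canonical coordinates in which $UU^\top - VV^\top$ is block diagonal with $2\times 2$ blocks of eigenvalues $\pm\sin\theta_i$. The difference is one of packaging. The paper quotes the CS decomposition \citep[Theorem~I.5.2]{StewartSun1990} to obtain the canonical form, and consequently splits into the two cases $2r\le d$ and $2r>d$ dictated by that theorem. You instead construct the orthonormal frame $\{u_i,u_i'\}$ directly from the SVD reduction $U^\top V=\cos\Theta$; this is more self-contained and handles both dimension regimes uniformly, at the cost of having to verify the orthogonality relations by hand (which you do correctly). Either route yields the same block computation and the same conclusion.
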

\begin{proof}
We only need to prove the first statement. First assume $2r\leq d$. By the first part of \citet[Theorem~I.5.2]{StewartSun1990}, there exist $Q\in\mathbb{O}_d$ and $G,H\in\mathbb{O}_r$ such that
\[
U = Q\begin{pmatrix}I_r\\\mathbf{0}_{r\times r}\\\mathbf{0}_{(d-2r)\times r}\end{pmatrix} G, \qquad V = Q\begin{pmatrix}\Gamma\\\Sigma\\\mathbf{0}_{(d-2r)\times r}\end{pmatrix}H,
\]
where $\Gamma = \mathrm{diag}(\gamma_1,\ldots,\gamma_r)$, $\Sigma = \mathrm{diag}(\sigma_1,\ldots,\sigma_r)$, $0\leq \gamma_1\cdots\leq \gamma_r$, $\sigma_1\geq\cdots\geq \sigma_r\geq 0$, and $\Gamma^2 + \Sigma^2 = I_r$. Hence,
$
U^\top V = G^\top \Gamma H
$
has singular values $\gamma_1,\ldots,\gamma_r$ and $\sin\Theta(U,V) = \mathrm{diag}(\sqrt{1-\gamma_1^2},\ldots,\sqrt{1-\gamma_r^2})$ has eigenvalues $\sigma_1,\ldots,\sigma_r$.  On the other hand, we compute that
\[
Q^\top(UU^\top - VV^\top)Q = \begin{pmatrix} \Sigma^2 & -\Gamma\Sigma & \mathbf{0}\\-\Sigma\Gamma & -\Sigma^2 & \mathbf{0}\\\mathbf{0} & \mathbf{0} & \mathbf{0}\end{pmatrix},
\]
which after permuting rows and columns is a block diagonal matrix with diagonal blocks 
\[
\begin{pmatrix} \sigma_j^2 & -\sigma_j\gamma_j \\ -\sigma_j\gamma_j & -\sigma_j^2\end{pmatrix}
\]
for $j\in[r]$. Each of these diagonal blocks has eigenvalues $\pm\sigma_j$. Thus, the eigenvalues of $UU^\top - VV^\top$ are $\pm \sigma_1,\ldots,\pm\sigma_r,0,\ldots,0$.

Now, assume that $2r>d$ instead. Then by the second part of \citet[Theorem~I.5.2]{StewartSun1990}, there exist $Q\in\mathbb{O}_d$ and $G,H\in\mathbb{O}_r$ such that
\[
U = Q\begin{pmatrix}I_{d-r} & \mathbf{0}_{(d-r)\times (2r-d)}\\ \mathbf{0}_{(d-r)\times (d-r)} & \mathbf{0}_{(d-r)\times (2r-d)} \\\mathbf{0}_{(2r-d)\times (d-r)} & I_{2r-d}\end{pmatrix} G, \qquad V = Q\begin{pmatrix}\Gamma & \mathbf{0}_{(d-r)\times(2r-d)}\\\Sigma & \mathbf{0}_{(d-r)\times (2r-d)}\\\mathbf{0}_{(2r-d)\times(d-r)} & I_{2r-d}\end{pmatrix}H,
\]
where $\Gamma = \mathrm{diag}(\gamma_1,\ldots,\gamma_{d-r})$, $\Sigma = \mathrm{diag}(\sigma_1,\ldots,\sigma_{d-r})$ and $\Gamma^2 + \Sigma^2 = I_{d-r}$. We may assume $\sigma_1\geq\cdots\geq \sigma_{d-r}$. Hence, 
\[
U^\top V = G^\top \begin{pmatrix}\Gamma & \mathbf{0}\\\mathbf{0} & I_{2r-d}\end{pmatrix} H
\]
has singular values $\gamma_1,\ldots,\gamma_{d-r}, 1,\ldots,1$ and $\sin\Theta(U,V)$ has eigenvalues $\sigma_1,\ldots,\sigma_{d-r},0,\ldots,0$. On the other hand, we again have
\[
Q^\top(UU^\top - VV^\top)Q = \begin{pmatrix} \Sigma^2 & -\Gamma\Sigma & \mathbf{0}\\-\Sigma\Gamma & -\Sigma^2 & \mathbf{0}\\\mathbf{0} & \mathbf{0} & \mathbf{0}\end{pmatrix}.
\]
Thus, $UU^\top - VV^\top$ has eigenvalues $\pm\sigma_1,\ldots,\pm\sigma_{d-r}, 0,\ldots,0$ as desired.
\end{proof}

\begin{lemma}
\label{Lem:UniqueMax}
Let $Y_1,\ldots, Y_B$ be independent and identically distributed on $\mathbb{N}\cup\{0\}$ with distribution function $F$. Define $M:=\max_{b\in[B]} Y_b$ and $R:=|\{b: Y_b = M\}|$. Then for $B = \lceil 2^{-1}(1-F(t-1))^{-1}\rceil$, we have 
$
\mathbb{P}(M\geq t,\ R = 1)\geq 1/4.
$
\end{lemma}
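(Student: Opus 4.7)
The plan is to reduce Lemma~\ref{Lem:UniqueMax} to an elementary numerical inequality by uniformly bounding the summands in the natural expansion of $\mathbb{P}(M \geq t, R=1)$. I would first abbreviate $p := 1 - F(t-1)$, so that the hypothesis reads $B = \lceil 1/(2p) \rceil$. From the ceiling definition, the two bookkeeping inequalities $Bp \geq 1/2$ and $(B-1)p \leq 1/2$ fall out immediately, and these will drive the final step.

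Next, I would decompose the event $\{M \geq t, R=1\}$ as the disjoint union, over $b \in [B]$ and $y \geq t$, of $\{Y_b = y,\ Y_{b'} < y \text{ for all } b' \neq b\}$. By independence and the identical distribution of the $Y_b$'s, this yields
\[
\mathbb{P}(M \geq t, R=1) = B \sum_{y \geq t} \mathbb{P}(Y_1 = y)\bigl(\mathbb{P}(Y_1 < y)\bigr)^{B-1}.
\]
Since $\mathbb{P}(Y_1 < y) \geq \mathbb{P}(Y_1 < t) = 1 - p$ for every $y \geq t$, I will factor the uniform lower bound $(1-p)^{B-1}$ out of the sum and collapse the remaining point masses using $\sum_{y \geq t}\mathbb{P}(Y_1 = y) = p$, obtaining the clean lower bound $\mathbb{P}(M \geq t, R=1) \geq Bp(1-p)^{B-1}$. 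Crucially, this reduction is oblivious to how the mass $p$ is spread across $\{t, t+1, \ldots\}$, so the problem becomes purely numerical.

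Finally, I will prove $Bp(1-p)^{B-1} \geq 1/4$ using Bernoulli's inequality. Since $B-1$ is a non-negative integer and $(B-1)p \leq 1/2 < 1$, one has $(1-p)^{B-1} \geq 1 - (B-1)p \geq 1/2$. Combined with $Bp \geq 1/2$, this gives $Bp(1-p)^{B-1} \geq 1/4$, as required. I do not anticipate any real obstacle: the only delicate point is the simultaneous alignment of $Bp \geq 1/2$ and $(B-1)p \leq 1/2$ with the ceiling definition of $B$, and the constant $1/4$ carries some slack (the infimum of $Bp(1-p)^{B-1}$ over admissible $p$ is $\tfrac{1}{2}e^{-1/2} \approx 0.303$, attained in the limit $p \to 0$), so no sharpening of Bernoulli's inequality is needed.
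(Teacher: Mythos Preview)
Your proof is correct and follows essentially the same route as the paper's: both decompose $\mathbb{P}(M\geq t,\ R=1)$ as $B\sum_{m\geq t}\mathbb{P}(Y_1=m)\bigl(1-\mathbb{P}(Y_1\geq m)\bigr)^{B-1}$, lower-bound each factor $\bigl(1-\mathbb{P}(Y_1\geq m)\bigr)^{B-1}$ by $(1-p)^{B-1}$ with $p=1-F(t-1)$, collapse the sum to $Bp(1-p)^{B-1}$, and finish via Bernoulli's inequality together with the two bookkeeping bounds $Bp\geq 1/2$ and $(B-1)p\leq 1/2$ coming from the ceiling definition of $B$. The paper's write-up compresses the Bernoulli step into the single line $Bq_t\bigl(1-(B-1)q_t\bigr)\geq 1/4$, but the argument is identical.
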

\begin{proof}
For $m \in\mathbb{N}\cup\{0\}$, define $p_m:= \mathbb{P}(Y_1 = m)$ and $q_m := \mathbb{P}(Y_1\geq m)$. By the definition of $B$, we have $(B-1)q_t\leq 1/2\leq Bq_t$. Also, observe that
\[
\mathbb{P}(M = m,\ R=1) = B\mathbb{P}(X_1 = m)\prod_{b=2}^B \mathbb{P}(X_b < m) = Bp_m(1-q_m)^{B-1}.
\]
Therefore,
\begin{align*}
\mathbb{P}(M\geq t,\ R=1) &= \sum_{m=t}^\infty \mathbb{P}(M=m,\ R=1) = \sum_{m=t}^\infty Bp_m(1-q_m)^{B-1} \\ 
&\geq Bq_t(1-(B-1)q_t)\geq 1/4
\end{align*}
as desired.
\end{proof}

\textbf{Acknowledgements}: The research of the first and third authors was supported by an Engineering and Physical Sciences Research Council (EPSRC) grant EP/N014588/1 for the centre for Mathematical and Statistical Analysis of Multimodal Clinical Imaging.  The second and third authors were supported by EPSRC Fellowship EP/J017213/1 and EP/P031447/1, and grant RG81761 from the Leverhulme Trust.  The authors would also like to thank the Isaac Newton Institute for Mathematical Sciences for its hospitality during the programme Statistical Scalability, which was supported by EPSRC Grant Numbers LNAG/036, RG91310. We thank the anonymous reviewers for their helpful and constructive comments.

\end{document}